\providecommand{\mathbold}[1]{\bm{#1}}
\newcommand{\vct}[1]{\mathbold{#1}}
\newcommand{\mtx}[1]{\mathbold{#1}}
\def \R 	{\mathbb{R}}
\def \P 	{\mathbb{P}}
\def \E 	{\mathbb{E}}
\def \mEn 	{\mtx{\mathcal{E}}}
\def \mY 	{\mtx{Y}}
\def \mD 	{\mtx{D}}
\def \mX 	{\mtx{X}}
\def \tmD 	{\tilde{\mtx{D}}}
\def \hmD 	{\hat{\mtx{D}}}
\def \mS 	{\mtx{S}}
\def \mJ 	{\mtx{J}}
\def \mA	{\mtx{A}}
\def \mZ	{\mtx{Z}}
\def \mM	{\mtx{M}}
\def \mO 	{\mtx{\Omega}}
\def \mR 	{\mtx{R}}
\def \mT 	{\mtx{T}}
\def \dist  {\textnormal{dist}}
\newtheorem{theorem}{Theorem} 
\newtheorem{lemma}{Lemma}
\newtheorem{proposition}{Proposition}
\newtheorem{definition}[lemma]{Definition}
\newtheorem{remark}{Remark}
\begin{document}
%
\title{Localization from Incomplete Euclidean Distance Matrix: Performance Analysis for the \\ SVD-MDS Approach}
%
%
\author{Huan~Zhang,
        Yulong~Liu,
	  	and~Hong~Lei}

\markboth{Draft}{Draft}
%

\maketitle

\begin{abstract}
   Localizing a cloud of points from noisy measurements of a subset of pairwise distances has applications in various areas, such as sensor network localization and reconstruction of protein conformations from NMR measurements. In \cite{Drineas2006}, Drineas \textit{et al.} proposed a natural two-stage approach, named SVD-MDS, for this purpose. This approach consists of a low-rank matrix completion algorithm, named SVD-Reconstruct, to estimate random missing distances, and the classic \textit{multidimensional scaling} (MDS) method to estimate the positions of nodes. In this paper, we present a detailed analysis for this method. More specifically, we first establish error bounds for \textit{Euclidean distance matrix} (EDM) completion in both expectation and tail forms. Utilizing these results, we then derive the error bound for the recovered positions of nodes.  In order to assess the performance of SVD-Reconstruct, we present the minimax lower bound of the zero-diagonal, symmetric, low-rank matrix completion problem by Fano's method. This result reveals that when the noise level is low, the SVD-Reconstruct approach for Euclidean distance matrix completion is suboptimal in the minimax sense; when the noise level is high, SVD-Reconstruct can achieve the optimal rate up to a constant factor.
\end{abstract}

\begin{IEEEkeywords}
  Localization, Euclidean distance matrix, matrix completion, SVD-Reconstruct, multidimensional scaling, minimax rate.
\end{IEEEkeywords}

\section{Introduction} \label{sec: introduction}

\IEEEPARstart{I}{n} many signal processing applications we work with distances because they are easy to measure. In sensor network localization, for example, each sensor simultaneously acts as a transmitter and receiver. It receives the signal sent by other sensors while emitting a signal to its surroundings. The useful information we can extract is the \textit{time-of-arrival} (TOA) or \textit{received-signal-strength} (RSS) between pairs of sensors, either of which can be seen as a metric of Euclidean distance \cite{Patwari2005}. Another example is the protein conformation problem. It has been shown by the crystallography community that after sequence-specific \textit{nuclear magnetic resonance} (NMR) assignments, we can extract the information about the intramolecular distances from two-dimensional \textit{nuclear Overhauser enhancement spectroscopy} (NOESY) \cite{Havel1985}. Other examples include geometry reconstruction of a room from echoes \cite{Dokmanic2013}, manifold learning utilizing distances \cite{Weinberger2004}, and so on.

If all the distances between pairs of nodes are available, then we can use the classic MDS algorithm \cite{Torgeson1965} to recover the coordinates of nodes. It has been proved that if all distances are measured without any error, MDS finds the configuration of nodes exactly. Moreover, MDS tolerates errors gracefully in practice, as a complete EDM overdetermines the true solution. Here, it is worth noting that we cannot recover the absolute coordinates, since rigid transformation (including rotations, translations, reflections, and their combination) does not change the EDM.

However, in many practical applications, it seems impossible to know all the entries of the EDM. In sensor network localization, for instance, due to the limit of transmission power, a sensor can only receive the signal emitted by sensors that are not too far from it. In addition, in most cases, the sensor has limited precision, which results in measurement error in distances. Thus, the measured EDM may be incomplete and noisy. In protein conformation problem, the matter is worse, because NMR spectroscopy only gives the inaccurate distances between nearby atoms. This leads to a highly incomplete EDM with noise. Therefore, it is desirable to develop methods which can localize a cloud of points from an incomplete and noisy EDM.

Generally speaking, it is a difficult task to infer missing entries of an arbitrary matrix. However, some fundamental properties of the EDM make the search for solutions feasible. It was shown in \cite{Gower1985} that the rank of an EDM is at most $d+2$, where $d$ denotes the dimension of the space in which nodes live. In other words, in most applications including sensor network localization and protein conformation problem, the rank of the EDM is at most five, as both sensors and atoms are in a three-dimensional space, although we may have thousands of sensors or atoms. By this remarkable rank property, we can solve the EDM completion problem via low-rank matrix completion approaches. Another important property of the EDM given in \cite{Gower1982} states that a necessary condition for a matrix $\mtx{D}$ is an EDM is that $-\frac{1}{2}\mtx{J}\mtx{D}\mtx{J}$ is {\em positive semidefinite} (PSD), where
\begin{equation}\label{centering_matrix}
\mtx{J} := \mtx{I}-\frac{1}{n}\vct{1}\vct{1}^T
\end{equation}
is the geometric centering matrix, $n$ denotes the number of sensors, $\vct{1}$ stands for the column vector with all ones, and $\mtx{I}$ is the identity matrix. The PSD property opens another way to solve the EDM completion problem. Most approaches in the literature utilize at least one of these two properties to localize the positions of nodes.

There have been a number of approaches proposed to determine the coordinates of nodes from incomplete EDMs in the past several decades. These methods can be roughly put into three groups based on their core ideas. The first group mainly exploits the rank property of the EDM. This group consists of algorithms that try first to estimate the missing distances by utilizing the rank property of the EDM and then use the classic MDS to find the positions from the reconstructed distance matrix. SVD-MDS \cite{Drineas2006} and OptSpace-MDS \cite{Parhizkar2013a} are two examples of this class where SVD-Reconstruct \cite{Drineas2006} and OptSpace \cite{Keshavan2010} are employed for EDM completion, respectively. The algorithms in the second group formulate the localization problem as a non-convex optimization problem and then employ different relaxation schemes to solve it. An example of this type is relaxation to a semidefinite programming \cite{Alfakih1999, Biswas2006, Javanmard2013, Ding2017}. The last group is known as metric MDS \cite{Kruskal1964, Takane1977, Parhizkar2013b}. The algorithms in this group do not try to complete the observed EDM, but directly estimate the coordinates of nodes from the incomplete EDM. Thus most efforts are paid to search for suitable cost functions and fast optimization algorithms.

Among the above mentioned methods, SVD-MDS is shown to be simple and (to some extent) effective \cite{Drineas2006}. However, theoretical understanding of this method is far from satisfactory. In this paper, we lay a solid theoretical foundation for this approach. More precisely, we establish error bounds for the recovered EDM by SVD-Reconstruct in both expectation and tail forms. Based on these results, the error bound for  recovered coordinates is derived. To show the optimality of the SVD-Reconstruct approach for EDM completion, we deduce the minimax lower bound of the zero-diagonal symmetric low-rank matrix completion problem using Fano's method \cite{Khasminskii1976}, and it reveals that when the noise level is low, SVD-Reconstruct is minimax suboptimal; when the noise level is high, SVD-Reconstruct can achieve the optimal rate up to a constant factor.

The remainder of the paper is organized as follows. In Section II, we formulate the problem and introduce the SVD-MDS approach. Section III is devoted to giving a performance analysis for the SVD-MDS method. Section IV presents the minimax lower bound of the zero-diagonal symmetric low-rank matrix completion problem. In Section V, we conclude the paper.

\section{Localization via SVD-MDS} \label{sec: localization using svd}

In this section, we formulate the localization problem and introduce the SVD-MDS approach.

\subsection{Problem Formulation} \label{subsec: problem formulation}

Given a set of $n$ nodes $\vct{x}_1, \vct{x}_2, \dots, \vct{x}_n \in \R^d$, the EDM of $\vct{x}_i$'s, denoted by $\mtx{D}$, is defined as
\begin{equation*} 
  	\mtx{D}_{ij} = \| \vct{x}_i-\vct{x}_j \|_2^2, \quad i,j=1,\dots,n.
\end{equation*}
To formalize the process of sampling the entries of $\mtx{D}$, we suppose that there is a non-zero probability $p_{ij}$ that the distance between nodes $i$ and $j$ is measured. For simplicity, we let all $p_{ij}$'s equal to a constant $p$. The observations are given as the $n \times n$ matrix $\mY$ whose entries are
\begin{displaymath}
  	\mY_{ij} = \left\{
	  	\begin{array}{cl}
		  \mD_{ij} + \mEn_{ij} & \text{with probability }p, \\
			? & \text{with probability }1-p,
		\end{array}
	\right.
\end{displaymath}
where the $?$ means that the element is unknown and $\mEn_{ij}$'s capture the effect of measurement errors, which are commonly assumed to be independent Gaussian random variables with mean zero and variance $\nu^2$. Putting this in matrix form yields
\begin{equation} \label{eq: symmetric matrix completion model}
 	\mY = \mO \odot (\mD + \mEn),
\end{equation}
where $\odot$ denotes the {\em Hadamard product} (i.e., point-wise matrix multiplication), $\mO$ is a symmetric mask matrix whose entries on or above the diagonal are independent Bernoulli random variables with parameter $p$, i.e.,
\begin{displaymath}
  	\mO_{ij} = \left\{
	  	\begin{array}{ll}
		  	1,  & \text{with probability }p \\
			0, 	& \text{with probability }1-p
		\end{array}
	\right.
	\textnormal{when }i \le j,
\end{displaymath}
and $\mEn$ is a symmetric noise matrix whose entries on or above the diagonal are independent Gaussian random variables, i.e., $\mEn_{ij} \sim N(0,\nu^2)$ when $i \le j$. The goal is to localize the cloud of points from $\mY$.

\subsection{SVD-MDS Approach} \label{subsec: svd-reconstruct}


The SVD-MDS approach is a two-stage method for localization. In the first stage, it uses SVD-Reconstruct to complete the EDM. This is done as follows. SVD-Reconstruct first constructs an unbiased estimator $\mS$ of $\mD$ with entries
\begin{displaymath}
  	\mS_{ij} = \left\{
	  	\begin{array}{cl}
		  	\frac{\mD_{ij} + \mEn_{ij} - \gamma_{ij}(1-p)}{p} & \text{with probability }p, \\
			\gamma_{ij} & \text{with probability }1-p,
		\end{array}
	\right.
\end{displaymath}
where $\gamma_{ij}$ stands for the ``best guess'' for the unknown square distances $\mD_{ij}$. Here, we always assume that $\gamma_{ij}=0$.

The next step of SVD-Reconstruct is to obtain the best rank-$r$ approximation $\tmD$ to $\mS$ ($r$ is the rank of $\mD$ and is at most $d+2$). This can be done by taking the singular value decomposition (SVD) and keeping the largest $r$ singular values and corresponding singular vectors. The original SVD-Reconstruct approach simply returns $\tmD$ as an approximation of the true $\mD$. In order to use the classic MDS for localization, we take a symmetrized version of $\tmD$ as an estimate of $\mD$, i.e.,
$$
\hmD = \frac{1}{2} \big( \tmD + \tmD^T \big).
$$

\begin{algorithm}[t]
	\caption{SVD-MDS}
  	\label{alg: SVD-Reconstruct}
	\hspace*{0.02in} {\bf Input:}
	incomplete EDM $\mY$, observation probability $p$ \\
	\hspace*{0.02in} {\bf Output:}
	coordinates $\mX$ of nodes
	\begin{algorithmic}[1]
	  \State Let $\mS = \frac{1}{p}\mY$
		\State Do SVD to $\mS$: $\mtx{U} \mtx{\Sigma} \mtx{V}^T =$ svd$(\mS)$
		\State Calculate $\tmD$ by keeping the largest $r$ sigular values of $\mS$ and the corresponding sigular vectors: $\tmD = \mtx{U}_r \mtx{\Sigma}_r \mtx{V}_r^T$
		\State Symmetrization: $\hmD = \frac{1}{2} ( \tmD + \tmD^T )$
		\State Do SVD to $-\frac{1}{2}\mJ \hmD \mJ$: $\mtx{Q} \mtx{\Lambda} \mtx{Q}^T = -\frac{1}{2}\mJ \hmD \mJ$
		\State \Return $\mX = \mtx{\Lambda}_d^{1/2}\mtx{Q}_d$
	\end{algorithmic}
\end{algorithm}

In the second stage, the classic MDS is employed to localize the nodes. The process is as follows. We first compute $-\frac{1}{2}\mJ \hmD \mJ$, where $\mtx{J}$ is defined in \eqref{centering_matrix}, and then take SVD to $-\frac{1}{2}\mJ \hmD \mJ$. Note that both $\mJ$ and $\hmD$ are symmetric, thus $-\frac{1}{2}\mJ \hmD \mJ = \mtx{Q} \mtx{\Lambda} \mtx{Q}^T$. The classic MDS simply returns $ \mtx{\Lambda}_d^{1/2}\mtx{Q}_d$ as the estimated coordinate matrix, where $\mtx{Q}_d \in \R^{n \times d}$ contains $d$ singular vectors corresponding to the $d$ largest singular values and $\mtx{\Lambda}_d$ is the $d \times d$ diagonal matrix with the $d$ largest singular values in the diagonal.

The SVD-MDS approach is summarized in Algorithm \ref{alg: SVD-Reconstruct}.

\section{Performance Analysis} \label{sec: performance analysis}

In this section, we present a detailed analysis for the SVD-MDS approach. We first establish the error bounds for EDM completion by SVD-Reconstruct in both expectation and tail forms, and then derive the error bound for coordinate recovery by MDS.

\subsection{Expectation Error Bound for EDM Completion}

In this subsection, we state the expectation version of EDM completion error via SVD-Reconstruct.
Before introducing our main result, we require the following incoherence condition \cite{Plan2014}:
\begin{equation} \label{eq: incoherence condition}
  	\mD_{ij} \le \zeta  \quad \textnormal{for any } i,j \le n,
\end{equation}
which means that each entry of $\mD$ is bounded by $\zeta$.
Our result shows that if the expected number of observed entries $m$ is large enough and the incoherence condition (\ref{eq: incoherence condition}) is satisfied, then the average error per entry by SVD-Reconstruct can be made arbitrarily smaller than $\zeta + \nu$.


\begin{theorem} [Expectation form] \label{th: expected form of matrix completion accuracy}
  Consider the model described in (\ref{eq: symmetric matrix completion model}). Let $m$ denote the expected number of observed entries, i.e., $m \coloneqq pn^2$. If $m \ge n \log n$ and the incoherence condition (\ref{eq: incoherence condition}) is satisfied, then
	\begin{equation}\label{Expectation_Form}
	  \frac{1}{n}\E \| \hat{\mtx{D}} - \mtx{D} \|_F \le C \sqrt{\frac{rn}{m}} (\zeta + \nu),
	\end{equation}
	where $C$ is an absolute constant\footnote{We use $C, C', c, c',$ and $c''$ to denote generic absolute constants, whose value may change from line to line.} and $\|\mX\|_F \coloneqq \sqrt{\textrm{trace}(\mX^T\mX)}$ denotes the Frobenius norm of $\mX$.

\end{theorem}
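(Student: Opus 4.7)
The plan is to decompose the error into a deterministic part --- how much the rank-$r$ truncated SVD distorts the true low-rank $\mD$ once the observation $\mS$ is close to $\mD$ in spectral norm --- and a probabilistic part that controls $\|\mZ\|$ for $\mZ := \mS - \mD$, where $\|\cdot\|$ denotes the spectral norm. I will show $\|\hmD - \mD\|_F \le 2\sqrt{2r}\,\|\mZ\|$ deterministically and $\E\|\mZ\| \le C\sqrt{n(\zeta^2+\nu^2)/p}$ under the sampling condition $m \ge n\log n$; combining them and using $\sqrt{\zeta^2+\nu^2}\le \zeta+\nu$ together with $1/p = n^2/m$ then yields \eqref{Expectation_Form}.

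For the deterministic step, I first verify that $\mS = \mY/p$ is an unbiased estimator of $\mD$ when $\gamma_{ij}=0$, so the entries of $\mZ$ on and above the diagonal are independent and mean zero, with $\E\mZ_{ij}^2 = ((1-p)\mD_{ij}^2 + \nu^2)/p \le (\zeta^2+\nu^2)/p$ by the incoherence assumption \eqref{eq: incoherence condition}. Because $\mD$ is symmetric, the symmetrization $\mA\mapsto(\mA+\mA^T)/2$ is the Frobenius-orthogonal projection onto the symmetric subspace, whence $\|\hmD - \mD\|_F \le \|\tmD - \mD\|_F$. Next, $\tmD - \mD$ has rank at most $2r$, so $\|\tmD - \mD\|_F \le \sqrt{2r}\,\|\tmD - \mD\|$; the triangle inequality combined with the Eckart-Young identity $\|\tmD - \mS\| = \sigma_{r+1}(\mS) \le \|\mZ\|$ (using $\textnormal{rank}(\mD)\le r$) gives $\|\tmD - \mD\| \le 2\|\mZ\|$, completing the deterministic estimate.

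For the probabilistic step, I would express $\mZ$ as a sum of independent symmetric rank-one or rank-two summands indexed by the pairs $i\le j$ and invoke matrix Bernstein after truncating the Gaussian entries at a level $t\asymp \nu\sqrt{\log n}$. The operator-norm variance sums to at most $n(\zeta^2+\nu^2)/p$, and each truncated summand has spectral norm $O((\zeta+t)/p)$, so Bernstein gives $\E\|\mZ\| \lesssim \sqrt{n(\zeta^2+\nu^2)\log n/p} + (\zeta+\nu)\log^{3/2} n/p$. Under $m\ge n\log n$ (equivalently $1/p\le n/\log n$) the variance term dominates. The spurious $\sqrt{\log n}$ factor can then be removed via a sharper Seginer- or Bandeira-van Handel-type bound for symmetric matrices with independent centered entries, yielding the absolute constant $C$ stated in the theorem, while the Gaussian tail beyond the truncation contributes a negligible term in expectation.

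The main obstacle is this last probabilistic step: the entries of $\mZ$ are a mixture of an unbounded rescaled Gaussian piece on an $O(p)$-fraction of coordinates and a bounded $-\mD_{ij}$ piece on the remainder, so neither a direct Wigner-type calculation nor a plain matrix Bernstein applies verbatim. Choosing the truncation level so that the variance and uniform-bound terms combine to the sharp $\sqrt{n/m}$ rate, and then shaving the logarithmic factors so the constant is truly absolute, is where the real technical work lies; the threshold $m\ge n\log n$ is dictated precisely by the balance between these two terms.
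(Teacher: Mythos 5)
Your deterministic reduction is sound and matches the paper's: symmetrization as a Frobenius-orthogonal projection gives $\|\hmD-\mD\|_F\le\|\tmD-\mD\|_F$, the rank-$2r$ bound gives the $\sqrt{2r}$ factor, and Eckart--Young/Weyl gives $\|\tmD-\mD\|\le 2\|\mZ\|$ with $\mZ=\mS-\mD$. The gap is entirely in the probabilistic step, which is where the content of the theorem lives. Your truncation-plus-matrix-Bernstein route delivers $\E\|\mZ\|\lesssim\sqrt{n(\zeta^2+\nu^2)\log n/p}$ at best, which is off from the claimed rate by $\sqrt{\log n}$; you acknowledge this and defer to ``a sharper Seginer- or Bandeira--van Handel-type bound,'' but that deferral is not a proof, and neither cited tool applies verbatim. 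Seginer's theorem requires i.i.d.\ entries, whereas the entries of $\mZ$ are (a) independent only on and above the diagonal, because $\mZ$ is symmetric, and (b) not identically distributed, because $\mZ_{ij}$ depends on the deterministic value $\mD_{ij}$. The sharp log-free Bandeira--van Handel bound is stated for Gaussian entries, and your $\mZ_{ij}$ is a mixture of a bounded piece and a rescaled Gaussian piece. (As a smaller point, even the claim that the Bernstein uniform term is dominated under $m\ge n\log n$ requires $m\gtrsim n\log^2 n$ once you account for the truncation level inside $R\log n$.)

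The paper resolves exactly these two obstructions, and this is the part your proposal is missing. It first splits $\|\mO\odot(\mD+\mEn)-p\mD\|\le\|\mO\odot\mD-p\mD\|+\|\mO\odot\mEn\|$. For each piece it decomposes the symmetric random matrix into its upper- and lower-triangular parts and compares, via the triangle inequality and Jensen's inequality, with a matrix built from a fully i.i.d.\ mask $\mO'$, paying only a factor of $2$; this disposes of the symmetric dependence. For the first piece it then applies symmetrization and the contraction principle to replace $\mO'\odot\mD-p\mD$ by $\zeta\,\mR\odot\mO'$, which strips out the non-identically-distributed weights $\mD_{ij}$ so that Seginer's theorem (plus a maximum-of-binomials estimate) yields the log-free bound $C\zeta\sqrt{np}$; the noise piece $\mO'\odot\mEn'$ is already i.i.d.\ and is handled by Seginer directly. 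If you want to complete your argument along your own lines, you must either reproduce this reduction or invoke (and verify the hypotheses of) a genuinely log-free operator-norm bound valid for symmetric matrices with independent, non-identically-distributed, sub-exponential entries; as written, the central estimate $\E\|\mZ\|\le C\sqrt{n(\zeta^2+\nu^2)/p}$ is asserted rather than proved.
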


\begin{remark}
  Note that the left side of \eqref{Expectation_Form} measures the average error per entry of $\mtx{D}$:
  \begin{equation*}
   \frac{1}{n}\| \hat{\mtx{D}} - \mtx{D} \|_F = \left( \frac{1}{n^2}\sum_{i=1}^{n}\sum_{j=1}^{n} |\hat{\mtx{D}}_{ij} - \mtx{D}_{ij}|^2   \right)^{1/2}.
  \end{equation*}
  Thus, if $m$ is large enough, then the average error per entry can be made arbitrarily smaller than $\zeta + \nu$.

\end{remark}

\begin{remark} \label{rm: two cases for noise level}
  	When the noise level is high, i.e., $\nu \ge \zeta$, the error bound becomes
	\begin{equation*}
	  \frac{1}{n}\E \| \hat{\mtx{D}} - \mtx{D} \|_F \le C \sqrt{\frac{rn}{m}} \nu.
	\end{equation*}
	This bound, as we will see later (Theorem \ref{th: minimax lower bound for symmetric matrix completion}), is minimax optimal up to a constant factor. However, when the noise level is low, i.e., $\nu < \zeta$, we have
	\begin{equation*}
	  \frac{1}{n}\E \| \hat{\mtx{D}} - \mtx{D} \|_F \le C \sqrt{\frac{rn}{m}} \zeta,
	\end{equation*}
   which implies that the bound is minimax suboptimal in this case.
\end{remark}

\begin{remark}
  Our proof of Theorem \ref{th: expected form of matrix completion accuracy} is motivated by \cite{Plan2014}, where the authors presented a performance analysis for \eqref{eq: symmetric matrix completion model} under the assumption that both $\mO$ and $\mEn$ have independent and identical distributed (i.i.d.) entries. However, in the localization problem \eqref{eq: symmetric matrix completion model}, both $\mO$ and $\mEn$ are symmetric random matrices, which makes the analysis more difficult. Although our result (Theorem \ref{th: expected form of matrix completion accuracy}) has the similar form as that in \cite{Plan2014}, the constant $C$ may be different.
\end{remark}



The proof of Theorem \ref{th: expected form of matrix completion accuracy} makes use of some results from random matrix theory. For convenience, we include them in Appendix \ref{app: add_a}.

\begin{proof} [Proof of Theorem \ref{th: expected form of matrix completion accuracy}]
Note first that the error of $\hat{\mtx{D}}$ can be bounded by the error of $\tilde{\mtx{D}}$. Indeed,
\begin{align*}
  	\hspace{10.5pt} \E \big\| \hat{\mtx{D}}-\mtx{D} \big\|_F
	& = \E \Big\| \frac{1}{2} ( \tilde{\mtx{D}} + \tilde{\mtx{D}}^T ) -\mtx{D} \Big\|_F  \\
	& = \frac{1}{2} \E \big\| (\tilde{\mtx{D}}-\mtx{D}) + (\tilde{\mtx{D}}^T - \mtx{D}^T) \big\|_F  \\
	& \le \frac{1}{2} \Big( \E \big\| \tilde{\mtx{D}} - \mtx{D} \big\|_F + \E \big\| \tilde{\mtx{D}}^T - \mtx{D}^T \big\|_F \Big)  \\
	& \le \E \big\| \tilde{\mtx{D}} - \mtx{D} \big\|_F.
\end{align*}
The first equality comes from the definition of $\hat{\mtx{D}}$. The second equality is based on the fact that $\mtx{D}$ is a symmetric matrix. The next inequality follows from the triangle inequality. The last inequality uses the fact that the Frobenius norm of any matrix equals to the Frobenius norm of its transpose.
Since both $\tilde{\mtx{D}}$ and $\mtx{D}$ have rank $r$, $\tilde{\mtx{D}}-\mtx{D}$ has rank at most $2r$. Then we have
\begin{equation*}
  \| \tilde{\mtx{D}}-\mtx{D} \|_F \le \sqrt{2r} \| \tilde{\mtx{D}}-\mtx{D} \|,
\end{equation*}
where $\|\mX\|$ denotes the spectral norm of $\mX$, i.e., the largest singular value of $\mX$. It follows from the triangle inequality that
\begin{align*} \label{third part in proof of expectation}
  	\| \tilde{\mtx{D}}-\mtx{D} \| & \le \| \tilde{\mtx{D}}-p^{-1} \mtx{\Omega} \odot (\mtx{D} + \mEn)\| + \| p^{-1} \mtx{\Omega} \odot (\mtx{D} + \mEn)- \mtx{D} \|  \\
	& \le 2p^{-1}\| \mtx{\Omega} \odot (\mtx{D} + \mEn) - p\mtx{D} \|  \\
	& \le 2p^{-1}\| \mtx{\Omega} \odot \mtx{D} - p\mtx{D} \| + 2p^{-1}\| \mO \odot \mEn\|.
\end{align*}
The second inequality holds because $\tilde{\mtx{D}}$ is the best rank-$r$ approximation to $p^{-1} \mtx{\Omega} \odot (\mtx{D}+\mEn)$. Therefore, it suffices to bound $\E \| \mtx{\Omega} \odot \mtx{D} - p\mtx{D} \|$ and $\E\| \mO \odot \mEn\|$.


To bound $\E \| \mtx{\Omega} \odot \mtx{D} - p\mtx{D} \|$, let $\mtx{\Omega} = \mtx{\Omega}_u + \mtx{\Omega}_l$, where $\mtx{\Omega}_u$ is the upper-triangle matrix containing the entries of $\mtx{\Omega}$ on or above the diagonal and $\mtx{\Omega}_l$ is the lower-triangle matrix containing the entries of $\mtx{\Omega}$ below the diagonal. Let $\mtx{\Omega}'$ be a random matrix (independent of $\mO$ and $\mEn$) with i.i.d. entries satisfying the following distribution: $\P \{ \mtx{\Omega}'_{ij} = 1 \} = p$ and $\P \{ \mtx{\Omega}'_{ij} = 0 \} = 1-p$. Similar to $\mtx{\Omega}_u$ and $\mtx{\Omega}_l$, we also define $\mtx{\Omega}'_u$, $\mtx{\Omega}'_l, \mtx{D}_u$, and $\mtx{D}_l$. Then, $\E \big\|\mtx{\Omega} \odot \mtx{D} - p\mtx{D} \big\|$ can be bounded as follows:
\begin{align*}
 	&\E \big\| \mtx{\Omega} \odot \mtx{D} - p\mtx{D} \big\| \\
	& \hspace{4em}= \E \big\| (\mtx{\Omega}_u \odot \mtx{D} - p\mtx{D}_u) + (\mtx{\Omega}_l \odot \mtx{D} -p\mtx{D}_l) \big\|  \\
	& \hspace{4em}\le \E \big\| \mtx{\Omega}_u \odot \mtx{D} - p\mtx{D}_u \big\| + \E \big\| \mtx{\Omega}_l \odot \mtx{D} -p\mtx{D}_l \big\|  \\
	& \hspace{4em}= \E \big\| \mtx{\Omega}'_u \odot \mtx{D} - p\mtx{D}_u \big\| + \E \big\| \mtx{\Omega}'_l \odot \mtx{D} -p\mtx{D}_l \big\|  \\
	& \hspace{4em}= \E \big\| (\mtx{\Omega}'_u \odot \mtx{D} - p\mtx{D}_u) + \E(\mtx{\Omega}'_l \odot \mtx{D} - p\mtx{D}_l) \big\| + {} \\
	& \hspace{4em}\hspace{13.6pt} \E \big\| (\mtx{\Omega}'_l \odot \mtx{D} -p\mtx{D}_l) + \E(\mtx{\Omega}'_u \odot \mtx{D} - p\mtx{D}_u) \big\| \\
	& \hspace{4em}\le \E \big\| (\mtx{\Omega}'_u \odot \mtx{D} - p\mtx{D}_u) + (\mtx{\Omega}'_l \odot \mtx{D} - p\mtx{D}_l) \big\| + {} \\
	& \hspace{4em}\hspace{13.6pt} \E \big\| (\mtx{\Omega}'_l \odot \mtx{D} -p\mtx{D}_l) + (\mtx{\Omega}'_u \odot \mtx{D} - p\mtx{D}_u) \big\| \\
	& \hspace{4em}= 2\E \big\| \mtx{\Omega}' \odot \mtx{D} - p\mtx{D} \big\|.
\end{align*}
 The first inequality is a consequence of the triangle inequality. The second equality holds because $\mtx{\Omega}_u$ and $\mtx{\Omega}_l$ share the same distribution with $\mtx{\Omega}'_u$ and $\mtx{\Omega}'_l$, respectively, so the expectations are equal. The second inequality results from Jensen's inequality. Since now the mask matrix $\mtx{\Omega}'$ has i.i.d. random entries, we can bound $\E \big\| \mtx{\Omega}' \odot \mtx{D} - p\mtx{D} \big\|$ using standard tools from random matrix theory. We proceed by using symmetrization technique (Lemma \ref{th: symmetrization}) first and then contraction principle (Lemma \ref{th: contraction principle}):
\begin{equation*}
  \E \big\| \mtx{\Omega}' \odot \mtx{D} - p\mtx{D} \big\| \le 2 \E \| \mtx{R} \odot \mtx{\Omega}' \odot \mtx{D} \big\| \le 2 \zeta \E \big\| \mtx{R} \odot \mtx{\Omega}' \big\|,
\end{equation*}
where $\mtx{R}$ is the matrix whose entries are independent symmetric Bernoulli random variables (i.e., $\mtx{R}_{ij}=1$ or $\mtx{R}_{ij}=-1$ with probability 1/2). It is clear that the random matrix $\mtx{R} \odot \mtx{\Omega}'$ has i.i.d. entries. Therefore, $\E \big\| \mtx{R} \odot \mtx{\Omega}' \big\|$ can be bounded by Seginer's theorem (Lemma \ref{th: Seginer's theorem}):
$$
\E \big\| \mtx{R} \odot \mtx{\Omega}' \big\| \le C_1 \Big( \E \max_i \|(\mtx{R} \odot \mtx{\Omega}')_{i \cdot }\|_2 + \E \max_j \|(\mtx{R} \odot \mtx{\Omega}')_{\cdot j}\|_2 \Big),
$$
where $(\mtx{R} \odot \mtx{\Omega}')_{i \cdot}$ and $(\mtx{R} \odot \mtx{\Omega}')_{\cdot j}$ denote the $i$-th row and the $j$-th column of $\mtx{R} \odot \mtx{\Omega}'$ respectively and $C_1$ is an absolute constant.

It is not hard to see that $\|(\mtx{R} \odot \mtx{\Omega}')_{i \cdot}\|_2^2$ follows the binomial distribution with parameter $(n,p)$. By Jensen's inequality and Lemma \ref{th: Maximum of binomials}, we have
$$
\E \max_{1 \le i \le n} \|(\mR \odot \mO)'_{i \cdot}\|_2 \le \big( \E \max_{1 \le i \le n} \|(\mR \odot \mO)'_{i \cdot}\|_2^2 \big)^{1/2} \le c\sqrt{np}.
$$
Similarly,
$$
\E \max_{1 \le j \le n} \|(\mR \odot \mO)'_{\cdot j}\|_2 \le c\sqrt{np}.
$$
Therefore,
$$
  \E \big\| \mtx{R} \odot \mtx{\Omega}' \big\| \le C \sqrt{np},
$$
where $C$ is an absolute constant. Thus, the first term can be bounded by
$$
\E \| \mO \odot \mD - p\mtx{D} \| \le C\zeta \sqrt{np},
$$
where $C$ is a constant number.

The second term $\E\| \mO \odot \mEn\|$ can be bounded similarly. Define $\mEn'$, $\mEn'_u$, $\mEn'_l$, $\mEn_u$, and $\mEn_l$ in the same way as $\mO'$, $\mO'_u$, $\mO'_l$, $\mO_u$, and $\mO_l$. 
Then we have
$$	
\E \| \mtx{\Omega} \odot \mEn \| \le \E \| \mO'_u \odot \mEn'_u \| + \E \| \mO'_l \odot \mEn'_l \| \le 2\E \| \mO' \odot \mEn' \|.
$$
The first inequality comes from the triangle inequality, and the second from Jensen's inequality. Since $\mO' \odot \mEn'$ has i.i.d. entries, Seginer's theorem gives
$$
\E \big\| \mO' \odot \mEn' \big\| \le C_1 \Big( \E \max_i \|(\mO' \odot \mEn')_{i \cdot }\|_2 + \E \max_j \|(\mO' \odot \mEn')_{\cdot j}\|_2 \Big).
$$
Here, $\E \max_i \|(\mO' \odot \mEn')_{i \cdot }\|_2$ can be bounded by the second part of Lemma \ref{th: Maximum of binomials}:
$$
\E \max_{i} \| (\mO' \odot \mEn')_{i \cdot} \|_2 \le C'\sqrt{np}.
$$
The same argument holds for $\E \max_{j} \| (\mO' \odot \mEn')_{\cdot j} \|_2$. Thus, we conclude that
$$
\E \| \mO \odot \mEn \| \le C'\sqrt{np}.
$$
Putting all these together, we obtain the desired result:
$$
\frac{1}{n}\E \| \hat{\mtx{D}} - \mtx{D} \|_F \le C \sqrt{\frac{rn}{m}} (\zeta + \nu).
$$
\end{proof}

\subsection{Tail Bound for EDM Completion} \label{subsec: tail bound}

In this subsection, we derive the tail bound for the EDM completion error, which shows that the error probability decreases fast as the error increases.
\begin{theorem} [Tail Form] \label{th: tail bound for matrix completion error}
  Consider the model described in (\ref{eq: symmetric matrix completion model}). If the incoherence condition (\ref{eq: incoherence condition}) is satisfied, then for any $t \ge 0$,
	\begin{multline} \label{eq: tail bound}
	  	\P \Big\{ \| \hat{\mtx{D}}-\mtx{D} \|_F \ge t \Big\} \le \\
	  	n \exp \Big[ -c \cdot \min \Big( \frac{mt^2}{n^3r(\zeta+\nu)^2}, \frac{mt}{n^2\sqrt{r}(\zeta+\nu)} \Big) \Big],
	\end{multline}
	where $c$ is an absolute constant.
\end{theorem}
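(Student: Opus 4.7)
The plan is to lift the chain of deterministic inequalities established in the proof of Theorem \ref{th: expected form of matrix completion accuracy} from expectation to high probability. That proof already shows the deterministic bound
\begin{equation*}
\|\hmD - \mD\|_F \le \|\tmD - \mD\|_F \le \sqrt{2r}\,\|\tmD - \mD\| \le \frac{2\sqrt{2r}}{p}\bigl(\|\mO \odot \mD - p\mD\| + \|\mO \odot \mEn\|\bigr).
\end{equation*}
Hence if $\|\hmD - \mD\|_F \ge t$, then at least one of the two spectral-norm quantities must exceed $s := pt/(4\sqrt{2r})$. Establishing \eqref{eq: tail bound} thus reduces to proving two Bernstein-type tail bounds on the spectral norms of these symmetric random matrices and combining them by a union bound.

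For the sampling term, I would split $\mO = \mO_u + \mO_l$ into its upper- and lower-triangular parts (as in the expectation proof) and write each triangle as a sum of independent, mean-zero symmetric random matrices $(\mO_{ij}-p)\mD_{ij}(\ve_i\ve_j^T + \ve_j\ve_i^T)$ for $i < j$ (and the analogous diagonal terms). By \eqref{eq: incoherence condition} each summand has spectral norm at most $\zeta$, and the matrix variance parameter is $O(np\zeta^2)$ after evaluating $\sum_{i \le j}\mD_{ij}^2(\ve_i\ve_j^T+\ve_j\ve_i^T)^2$. The matrix Bernstein inequality then yields
\begin{equation*}
\P\bigl\{\|\mO \odot \mD - p\mD\| \ge s\bigr\} \le 2n\exp\Bigl(-c\min\bigl(\tfrac{s^2}{np\zeta^2},\,\tfrac{s}{\zeta}\bigr)\Bigr).
\end{equation*}

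For the noise term $\mO \odot \mEn$, the same upper/lower-triangle decomposition produces a sum whose summands are now unbounded, but $\mO_{ij}\mEn_{ij}$ is sub-Gaussian with parameter $\nu$ and variance at most $p\nu^2$. Applying a sub-exponential version of matrix Bernstein (or, equivalently, truncating the Gaussian entries at a level of order $\nu$ and handling the tail by a union bound over $n^2$ entries, absorbing the resulting factor into the leading $n$) gives
\begin{equation*}
\P\bigl\{\|\mO \odot \mEn\| \ge s\bigr\} \le Cn\exp\Bigl(-c\min\bigl(\tfrac{s^2}{np\nu^2},\,\tfrac{s}{\nu}\bigr)\Bigr).
\end{equation*}
A union bound, the trivial inequalities $\zeta \le \zeta+\nu$ and $\nu \le \zeta+\nu$ used to merge the two bounds, and the substitution $s = pt/(4\sqrt{2r})$ together with $m = pn^2$ then produce exactly the claimed tail with the two regimes $mt^2/(n^3r(\zeta+\nu)^2)$ and $mt/(n^2\sqrt{r}(\zeta+\nu))$, after absorbing the absolute constants.

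The main obstacle will be the noise term, because the standard (bounded-entry) matrix Bernstein inequality used to handle $\mO \odot \mD - p\mD$ does not apply directly: the Gaussian entries $\mEn_{ij}$ are unbounded. The cleanest route is to invoke a matrix Bernstein inequality tailored to sub-exponential (or sub-Gaussian) summands, which gives the same $\min(s^2/\sigma^2, s/K)$ form without spurious $\log n$ factors; a secondary but routine technical point is that the symmetry of $\mO$ and $\mEn$ forces the upper/lower decomposition so that the summands remain independent, exactly as in the expectation proof. Once both tails are in hand, the reduction to a single union bound is straightforward and only contributes an absolute constant.
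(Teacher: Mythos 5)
Your proposal is correct and follows essentially the same route as the paper: reduce to $2\sqrt{2r}\,p^{-1}\|\mO\odot(\mD+\mEn)-p\mD\|$, write this symmetric matrix as a sum of independent self-adjoint rank-two summands $\mA_{ij}(\ve_i\ve_j^T+\ve_j\ve_i^T)$ indexed by unordered pairs $i\le j$, and apply the sub-exponential matrix Bernstein inequality with $R\asymp\zeta+\nu$ and $\sigma^2\asymp np(\zeta+\nu)^2$. The only organizational differences are that the paper treats the bounded and Gaussian contributions jointly in a single application of Bernstein (via the $\psi_1$ norm of $(1-p)\mD_{ij}+\mEn_{ij}$) instead of splitting into two tails and union-bounding, and that the upper/lower-triangular split of $\mO$ you import from the expectation proof is unnecessary here --- independence across unordered pairs is all the matrix Bernstein argument requires.
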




\begin{proof}
  	We first proceed similarly as in the proof of Theorem \ref{th: expected form of matrix completion accuracy}:
	\begin{align*}
		\big\| \hat{\mtx{D}}-\mtx{D} \big\|_F & \le \big\| \tilde{\mtx{D}}-\mtx{D} \big\|_F \le \sqrt{2r} \big\| \tilde{\mtx{D}}-\mtx{D} \big\| \\
		& \le 2\sqrt{2r}p^{-1} \big\| \mtx{\Omega} \odot (\mtx{D}+\mEn) - p\mtx{D} \big\|.
  	\end{align*}
	So it suffices to bound the tail of $\big\| \mtx{\Omega} \odot (\mtx{D}+\mEn) - p\mtx{D} \big\|$. Let $\mtx{A}:=\mtx{\Omega} \odot (\mtx{D}+\mEn) - p\mtx{D}$, and let $\mtx{A}_{ij}$ denote the entry of $\mtx{A}$ on the $i$-th row and $j$-th column. We define the following $\frac{1}{2}n(n+1)$ matrices  using $\mtx{A}_{ij}$: When $i < j$,
	$$
	\mtx{Z}_{ij}= \mtx{A}_{ij}(\vct{e}_i\vct{e}_j^T + \vct{e}_j\vct{e}_i^T),
	$$
	and when $i=j$,
	$$
	\mtx{Z}_{ii} = \mtx{A}_{ij} \vct{e}_i\vct{e}_i^T,
	$$
	where $\vct{e}_i$ is the standard basis vector with a one in position $i$ and zeros elsewhere.
	Clearly, $\{\mtx{Z}_{ij}\}$ is a sequence of centered, independent, self-adjoint random matrices. Moreover, for $k = 2,3\dots$, when $i < j$, we have
	\begin{align*}
		\E \mtx{Z}_{ij}^k
		&= \E \mtx{A}_{ij}^k (\vct{e}_i\vct{e}_j^T + \vct{e}_j\vct{e}_i^T)^k \\
		&\preceq |\E \mtx{A}_{ij}^k| \cdot (2\vct{e}_i\vct{e}_i^T + 2\vct{e}_j\vct{e}_j^T + \vct{e}_i\vct{e}_j^T + \vct{e}_j\vct{e}_i^T) \\
		&\preceq \E \big[ p \cdot \big|(1-p)\mD_{ij}+\mEn_{ij}\big|^k+(1-p)\cdot \big|-p\mD_{ij}\big|^k\big] \\
	  	& \hspace{1em} \cdot (2\vct{e}_i\vct{e}_i^T + 2\vct{e}_j\vct{e}_j^T + \vct{e}_i\vct{e}_j^T + \vct{e}_j\vct{e}_i^T) \\
		&\preceq p \cdot \E \big[ \big|(1-p)\mD_{ij}+\mEn_{ij}\big|^k+\big|\mD_{ij}\big|^k\big] \\
	  	& \hspace{1em} \cdot (2\vct{e}_i\vct{e}_i^T + 2\vct{e}_j\vct{e}_j^T + \vct{e}_i\vct{e}_j^T + \vct{e}_j\vct{e}_i^T),
  	\end{align*}
	where $\mA \preceq \mtx{B}$ means that $\mtx{B} - \mtx{A}$ is positive semidefinite. The first inequality holds because $(\vct{e}_i\vct{e}_j^T + \vct{e}_j\vct{e}_i^T)^k$ has periodicity for $k \ge 2$, and we can verify the inequality by a direct calculation. The second inequality results from Jensen's inequality and the fact that the matrix $2\vct{e}_i\vct{e}_i^T + 2\vct{e}_j\vct{e}_j^T + \vct{e}_i\vct{e}_j^T + \vct{e}_j\vct{e}_i^T$ is positive semidefinite. The third inequality again uses the fact that the matrix $2\vct{e}_i\vct{e}_i^T + 2\vct{e}_j\vct{e}_j^T + \vct{e}_i\vct{e}_j^T + \vct{e}_j\vct{e}_i^T$ is positive semidefinite. Note that $\mEn_{ij}$ is a normal variable $\mEn_{ij} \sim N(0, \nu^2)$, and hence it is sub-exponential, with its $\psi_1$ norm bounded by
	$
	\|\mEn_{ij}\|_{\psi_1} \le c_1\nu.
	$
	Then the $\psi_1$ norm of $(1-p)\mD_{ij}+\mEn_{ij}$ can be bounded by the triangle inequality:
	$$
	\|(1-p) \mD_{ij} + \mEn_{ij}\|_{\psi_1} \le \|(1-p)\mD_{ij}\|_{\psi_1} + \|\mEn_{ij}\|_{\psi_1} \le c_2(\zeta + \nu).
	$$
	Now, we can calculate the moment of $(1-p) \mD_{ij} + \mEn_{ij}$ using the Integral identity (Lemma \ref{th: Integral identity}) and the tail property of sub-exponential random variables:
	\begin{align*}
	  &\hspace{-1.5em}\E |(1-p) \mD_{ij} + \mEn_{ij}|^k \\
	  &\hspace{1.5em}= \int_0^{\infty} \P\big\{ |(1-p) \mD_{ij} + \mEn_{ij}|^k \ge u\big\} du \\
		&\hspace{1.5em}= \int_0^{\infty} \P\big\{ |(1-p) \mD_{ij} + \mEn_{ij}| \ge t\big\} kt^{k-1}dt \\
		&\hspace{1.5em}\le \int_0^{\infty} \exp(1-t/K_1) kt^{k-1}dt \\
		&\hspace{1.5em}\le \int_0^{\infty} \exp(1-s) k K_1^k s^{k-1}ds \\
		&\hspace{1.5em}= ekK_1^k \Gamma(k) \\
		&\hspace{1.5em}= eK_1^k k!,
  	\end{align*}
	where $K_1 = c_3(\zeta + \nu)$ and $c_3 > 0$ is an absolute constant. It follows that when $i < j$ we have
	\begin{align*}
		\E \mtx{Z}_{ij}^k &\preceq p(eK_1^k k! +\zeta^k)(2\vct{e}_i\vct{e}_i^T + 2\vct{e}_j\vct{e}_j^T + \vct{e}_i\vct{e}_j^T + \vct{e}_j\vct{e}_i^T) \\
		&\preceq ep\big[c_4(\zeta+\nu)\big]^k k! (2\vct{e}_i\vct{e}_i^T + 2\vct{e}_j\vct{e}_j^T + \vct{e}_i\vct{e}_j^T + \vct{e}_j\vct{e}_i^T).
  	\end{align*}
	where $c_4 > 0$ is an absolute constant. Similarly, when $i = j$, we have
	\begin{align*}
		\E \mtx{Z}_{ij}^k
		&\preceq ep\big[c_4(\zeta + \nu)\big]^k k! (\vct{e}_i\vct{e}_i^T)^k \\
		&\preceq ep\big[c_4(\zeta + \nu)\big]^k k! (2\vct{e}_i\vct{e}_i^T + 2\vct{e}_j\vct{e}_j^T + \vct{e}_i\vct{e}_j^T + \vct{e}_j\vct{e}_i^T).
  	\end{align*}
	Putting them together, we obtain that when $i \le j$,
	$$
	\E \mtx{Z}_{ij}^k \preceq ep\big[c_4(\zeta + \nu)\big]^k k! (2\vct{e}_i\vct{e}_i^T + 2\vct{e}_j\vct{e}_j^T + \vct{e}_i\vct{e}_j^T + \vct{e}_j\vct{e}_i^T).
	$$
	The above inequality implies that the sequence $\{\mZ_{ij}\}$ satisfies the condition of Matrix Bernstein inequality (Lemma \ref{th: matrix bernstein inequality}). Moreover, we have
	$$
	R = c_4(\zeta + \nu),
	$$
	and
	\begin{align*}
		\sigma^2 &= \Big\|\sum_{i \le j} 2ep \big[c_4(\zeta + \nu)\big]^2 (2\vct{e}_i\vct{e}_i^T + 2\vct{e}_j\vct{e}_j^T + \vct{e}_i\vct{e}_j^T + \vct{e}_j\vct{e}_i^T) \Big\| \\
		&= 2ep \big[c_4(\zeta + \nu)\big]^2 \Big\|\sum_{i \le j}(2\vct{e}_i\vct{e}_i^T + 2\vct{e}_j\vct{e}_j^T + \vct{e}_i\vct{e}_j^T + \vct{e}_j\vct{e}_i^T) \Big\| \\
		&\le 2ep \big[c_4(\zeta + \nu)\big]^2 \Big\|\sum_{1 \le i, j \le n}(2\vct{e}_i\vct{e}_i^T + 2\vct{e}_j\vct{e}_j^T + \vct{e}_i\vct{e}_j^T + \vct{e}_j\vct{e}_i^T) \Big\| \\
		&\le  2ep \big[c_4(\zeta + \nu)\big]^2 \cdot 6n \\
		&= c_5np(\zeta+\nu)^2.
  	\end{align*}
	Thus, applying the matrix Bernstein's inequality (Lemma \ref{th: matrix bernstein inequality}) to
	$
	\sum_{i \le j} \mtx{Z}_{ij},
	$
	we obtain that for any $u \ge 0$,
	\begin{multline*}
		\P \Big\{ \Big\| \sum_{i \le j} \mtx{Z}_{ij} \Big\| \ge u \Big\} \le \\
		2n \exp \Big[ -c \cdot \min \Big( \frac{u^2}{np(\zeta+\nu)^2}, \frac{u}{\zeta+\nu} \Big) \Big].
  	\end{multline*}
	Letting $u=\frac{pt}{2\sqrt{2r}}$, we have
	\begin{align*}
		& \P \Big\{ 2\sqrt{2r}p^{-1} \Big\| \sum_{i \le j} \mtx{Z}_{ij} \Big\| \le t \Big\} \\
		& \ge 1-2n \exp \Big[ -c \cdot \min \Big( \frac{pt^2}{rn(\zeta+\nu)^2}, \frac{pt}{\sqrt{r}(\zeta+\nu)} \Big) \Big].
  	\end{align*}
	Recalling that $\| \hat{\mtx{D}}-\mtx{D} \|_F \le 2\sqrt{2r}p^{-1} \Big\| \sum_{i \le j} \mtx{Z}_{ij} \Big\|$, we have
	\begin{align*}
	  	&\P \Big\{ \| \hat{\mtx{D}}-\mtx{D} \|_F \ge t \Big\} \\
		&\le \P \Big\{ 2\sqrt{2r}p^{-1} \Big\| \sum_{ij} \mtx{Z}_{ij} \Big\| \ge t \Big\} \\
		&\le 2n \exp \Big[ -c \cdot \min \Big( \frac{pt^2}{rn(\zeta+\nu)^2}, \frac{pt}{\sqrt{r}(\zeta+\nu)} \Big) \Big] \\
		&= 2n \exp \Big[ -c \cdot \min \Big( \frac{mt^2}{rn^3(\zeta+\nu)^2}, \frac{mt}{n^2\sqrt{r}(\zeta+\nu)} \Big) \Big].
	\end{align*}
	The last equality comes from the definition of $m=pn^2$. This completes the proof.
\end{proof}

\subsection{Error Bound for Gram Matrices after MDS}

Once we have established the error bounds for EDM completion, it is convenient to utilize them to derive a bound for the coordinate recovery error. We adopt the following metric to measure the reconstruction error \cite{Oh2010}:
\begin{equation} \label{eq: reconstruction error metric}
	\dist(\mtx{X}, \hat{\mtx{X}}) = \frac{1}{n} \| \mtx{J} \mtx{X}^T \mtx{X} \mtx{J} - \mtx{J} \hat{\mtx{X}}^T \hat{\mtx{X}} \mtx{J} \|_F,
\end{equation}
where $\mtx{X}$ is the true coordinate matrix having each sensor coordinate as a column, $\hat{\mtx{X}}$ is the estimated coordinate matrix having each estimated sensor coordinate as a column, and $\mtx{J}$ is the geometric centering matrix defined in (\ref{centering_matrix}). Notice that the distances have lost some information such as orientation, since rigid transformation (rotation, reflection and translation) does not change the pairwise distances. We choose \eqref{eq: reconstruction error metric} as the metric of recovery error because it has the following property: (a) It is invariant under rigid transformation; (b) $\dist(\mtx{X}, \hat{\mtx{X}}) = 0$ implies that $\mtx{X}$ and $\hat{\mtx{X}}$ is equivalent up to an unknown rigid transformation. Then we have following result.

\begin{theorem} [Coordinate recovery error] \label{th: reconstruction error}
    Let $\hat{\mtx{X}}$ be the estimated location matrix by SVD-MDS. Then the reconstruction error has the following upper bound:
	\begin{equation} \label{eq: reconstruction error}
	  	\E~\dist(\mtx{X}, \hat{\mtx{X}}) \le C \sqrt{\frac{dn}{m}} (\zeta + \nu),
	\end{equation}
	where $d$ denotes the dimension of the space in which nodes live and $C$ is a constant number same as Theorem \ref{th: expected form of matrix completion accuracy}.
\end{theorem}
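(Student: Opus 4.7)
The plan is to reduce the coordinate reconstruction error to the EDM completion error of Theorem \ref{th: expected form of matrix completion accuracy} through the classical MDS identity and an Eckart--Young rank-$d$ approximation argument. Set
\[
\mtx{M} := -\tfrac{1}{2}\mtx{J}\mtx{D}\mtx{J}, \qquad \hat{\mtx{M}} := -\tfrac{1}{2}\mtx{J}\hat{\mtx{D}}\mtx{J}, \qquad \hat{\mtx{G}} := \hat{\mtx{X}}^T\hat{\mtx{X}}.
\]
Expanding $\mtx{D}_{ij}=\|\vct{x}_i\|^2+\|\vct{x}_j\|^2-2\vct{x}_i^T\vct{x}_j$ in matrix form and using $\mtx{J}\vct{1}=\vct{0}$ yields the standard MDS identity $\mtx{J}\mtx{X}^T\mtx{X}\mtx{J}=\mtx{M}$, which in particular shows $\mathrm{rank}(\mtx{M})\le d$. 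Steps 5--6 of Algorithm~\ref{alg: SVD-Reconstruct} output $\hat{\mtx{X}}=\mtx{\Lambda}_d^{1/2}\mtx{Q}_d$, so $\hat{\mtx{G}}=\mtx{Q}_d\mtx{\Lambda}_d\mtx{Q}_d^T$ is precisely the best rank-$d$ (PSD truncated eigen) approximation of $\hat{\mtx{M}}$; since the top-$d$ eigenvectors of $\hat{\mtx{M}}$ lie in the range of $\mtx{J}$ (their span is orthogonal to $\vct{1}$, because $\hat{\mtx{M}}\vct{1}=0$), we also have $\mtx{J}\hat{\mtx{G}}\mtx{J}=\hat{\mtx{G}}$. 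Consequently
\[
\dist(\mtx{X},\hat{\mtx{X}})=\tfrac{1}{n}\|\mtx{M}-\hat{\mtx{G}}\|_F.
\]

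Next I would apply Eckart--Young together with the triangle inequality. Because $\mtx{M}$ has rank at most $d$, the optimality of $\hat{\mtx{G}}$ as a rank-$d$ approximation of $\hat{\mtx{M}}$ gives $\|\hat{\mtx{M}}-\hat{\mtx{G}}\|_F\le \|\hat{\mtx{M}}-\mtx{M}\|_F$, so
\[
\|\mtx{M}-\hat{\mtx{G}}\|_F \le \|\mtx{M}-\hat{\mtx{M}}\|_F + \|\hat{\mtx{M}}-\hat{\mtx{G}}\|_F \le 2\|\mtx{M}-\hat{\mtx{M}}\|_F.
\]
Unfolding the definitions and using that $\mtx{J}$ is an orthogonal projection (hence $\|\mtx{J}\|\le 1$) gives
\[
2\|\mtx{M}-\hat{\mtx{M}}\|_F = \|\mtx{J}(\mtx{D}-\hat{\mtx{D}})\mtx{J}\|_F \le \|\mtx{D}-\hat{\mtx{D}}\|_F.
\]

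Combining the two displays and taking expectations yields
\[
\E\,\dist(\mtx{X},\hat{\mtx{X}}) \le \tfrac{1}{n}\,\E\|\mtx{D}-\hat{\mtx{D}}\|_F.
\]
Applying Theorem \ref{th: expected form of matrix completion accuracy} and absorbing the factor $\sqrt{r}\le\sqrt{d+2}$ into the constant (since $r\le d+2$) delivers the claimed bound $C\sqrt{dn/m}\,(\zeta+\nu)$.

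The only genuinely delicate point is verifying that $\hat{\mtx{G}}$ coincides with the Frobenius-optimal rank-$d$ approximation of $\hat{\mtx{M}}$; the algorithm truncates by singular value, while Eckart--Young for a symmetric (not necessarily PSD) matrix truncates by signed eigenvalue. In the regime covered by Theorem \ref{th: expected form of matrix completion accuracy} the top $d$ eigenvalues of $\hat{\mtx{M}}$ inherit the positivity of $\mtx{M}$'s nonzero eigenvalues (by Weyl's inequality applied to $\mtx{M}-\hat{\mtx{M}}$), so the two truncations agree. Should one prefer to avoid this perturbation argument, the triangle-inequality step still goes through against the signed truncation at the cost of an extra harmless constant, since the signed truncation remains a rank-$d$ matrix and $\mtx{M}$ is rank-$d$.
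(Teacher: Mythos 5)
Your argument is correct and matches the paper's proof in all essentials: both pass through the intermediate matrix $-\frac{1}{2}\mtx{J}\hat{\mtx{D}}\mtx{J}$, combine the triangle inequality with the best rank-$d$ approximation property of the MDS output, use $\|\mtx{J}\|\le 1$, and then appeal to Theorem \ref{th: expected form of matrix completion accuracy}. The only (harmless) differences are that you keep everything in the Frobenius norm and cite the statement of Theorem \ref{th: expected form of matrix completion accuracy} directly, whereas the paper first passes to the spectral norm via the rank-$2d$ bound and reuses the spectral-norm estimate from inside that theorem's proof; your explicit discussion of the signed-versus-unsigned rank-$d$ truncation addresses a point the paper silently assumes.
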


\begin{remark}
 	We mention that our Theorems \ref{th: expected form of matrix completion accuracy}--\ref{th: reconstruction error} also hold when we the noise is sub-Gaussian, i.e., the entries on or above the diagonal of $\mEn$ are independent, identical distributed sub-Gaussian random variables with sub-Gaussian norm $\nu$. The proof techniques are essentially the same.
\end{remark}

\begin{proof}
  	The proof of Theorem \ref{th: reconstruction error} is motivated by \cite{Oh2010}. By definition of $\dist(\mtx{X}, \hat{\mtx{X}})$, we have
	\begin{align*}
	    \dist(\mtx{X}, \hat{\mtx{X}})
		&= \big\| \mtx{J} ( \mtx{X}^T\mtx{X} - \hat{\mtx{X}}^T \hat{\mtx{X}} )\mtx{J} \big\|_F \\
		&\le \sqrt{2d} \big\| \mtx{J} ( \mtx{X}^T\mtx{X} - \hat{\mtx{X}}^T \hat{\mtx{X}} )\mtx{J} \big\|,
  	\end{align*}
	where we have used the fact that the matrix $\mtx{J}\mtx{X}^T\mtx{X}\mtx{J}-\mtx{J}\hat{\mtx{X}}^T\hat{\mtx{X}}\mtx{J}$ has rank at most $2d$. Let $\mM = -\frac{1}{2}\mtx{J}\hat{\mtx{D}}\mtx{J}$. The spectral norm can be bounded by the triangle inequality as follows:
	$$
	\big\| \mtx{J} ( \mtx{X}^T\mtx{X} - \hat{\mtx{X}}^T \hat{\mtx{X}} )\mtx{J} \big\| \le \big\| \mtx{J} \mtx{X}^T\mtx{X} \mtx{J} - \mM \big\| + \big\| \mM - \mtx{J}\hat{\mtx{X}}^T \hat{\mtx{X}} \mtx{J} \big\|.
	$$
	Recall that the EDM has the following property \cite{Oh2010}:
	$$
	-\frac{1}{2}\mJ \mD \mJ = \mJ \mX^T \mX \mJ.
	$$
	Thus, the first term can be written as
	$$
	\big\| \mtx{J} \mtx{X}^T\mtx{X} \mtx{J} - \mM \big\| = \Big\| - \frac{1}{2}\mtx{J} \mtx{D} \mtx{J} + \frac{1}{2}\mtx{J}\hat{\mtx{D}}\mtx{J} \Big\| = \frac{1}{2} \big\|\mtx{J} ( \hat{\mtx{D}} - \mtx{D} ) \mtx{J} \big\|.
	$$
	By submultiplicity of spectral norm and the fact that $\|\mtx{J}\|=1$, the right-hand side is bounded by
	$$
	\frac{1}{2} \big\|\mtx{J} ( \hat{\mtx{D}} - \mtx{D} ) \mtx{J} \big\| \le \frac{1}{2} \|\mtx{J}\| \| \hat{\mtx{D}} - \mtx{D} \| \| \mtx{J}\| = \frac{1}{2} \| \hat{\mtx{D}} - \mtx{D} \|.
	$$
	To bound the second term, note that $\hat{\mtx{X}}^T\hat{\mtx{X}}$ is the best rank-$d$ approximation to $\mtx{M}$. Thus, for any rank-$d$ matrix $\mtx{A}$, we have
	$$
	\| \mtx{M} - \hat{\mtx{X}}^T\hat{\mtx{X}} \| \le \| \mtx{M} - \mtx{A} \|.
	$$
	Now the second term can be bounded as follows:
	\begin{align*}
		\big\| \mM - \mtx{J}\hat{\mtx{X}}^T \hat{\mtx{X}} \mtx{J} \big\|
		&= \big\| \mM - \hat{\mtx{X}}^T\hat{\mtx{X}} \big\| \le \big\| \mM + \frac{1}{2} \mtx{J} \mtx{D} \mtx{J} \big\| \\
		&= \frac{1}{2} \big\|\mtx{J} ( \hat{\mtx{D}} - \mtx{D} ) \mtx{J} \big\| \le \frac{1}{2} \| \mtx{D} - \hat{\mtx{D}} \|,
  	\end{align*}
	where the first equality uses the fact that $\mtx{J}\hat{\mtx{X}}^T = \hat{\mtx{X}}^T$ (see \cite[pp. 3]{Oh2010}), the first inequality comes from letting $\mtx{A} = -\frac{1}{2} \mtx{J}\mtx{D}\mtx{J}$, and the last inequality uses again the submultiplicity of spectral norm and $\|\mtx{J}\|=1$. Combining these two terms together, we have
	$$
	\big\| \mtx{J}\mtx{X}^T\mtx{X}\mtx{J} - \mtx{J}\hat{\mtx{X}}^T\hat{\mtx{X}}\mtx{J} \big\| \le \| \mtx{D} - \hat{\mtx{D}} \|.
	$$
	The conclusions follows by a simple application of Theorem \ref{th: expected form of matrix completion accuracy}:
	$$
	\E \dist(\mtx{X}, \hat{\mtx{X}}) \le \sqrt{2d} \E \| \mtx{D} - \hat{\mtx{D}} \| \le C \sqrt{\frac{dn}{m}}(\zeta + \nu).
	$$
\end{proof}

\subsection{Comparison and Discussion}

In this section, we make some comparisons between our results and related works in the literature. First, we compare our results (Theorem \ref{th: expected form of matrix completion accuracy}) with that established in \cite{Drineas2006} for the SVD-Reconstruct approach, and show that our error bound has faster descent rate than that in \cite{Drineas2006}. Next, we compare our Theorem \ref{th: tail bound for matrix completion error} with the results in \cite{Ding2017}, where the authors established a high probability error bound for  a semidefinite programming approach, and conclude that the two results have the same order of error rate.


\subsubsection{Comparison with \cite{Drineas2006}}
We begin by comparing our results (Theorem \ref{th: tail bound for matrix completion error}) with that in \cite{Drineas2006}. For convenience, we restate the results in \cite{Drineas2006} as the following proposition:

\begin{proposition} [Theorem 2, \cite{Drineas2006}] \label{th: original result}
  	Let $\tmD$ be the estimated distance matrix by SVD-Reconstruct without the symmetrization step. Then, with probability at least $1-1/(2n)$,
	\begin{equation} \label{eq: original result}
		\| \mD - \tmD \|_F \le 12 \sigma_S \sqrt{2n} + 8 \sqrt{\sigma_S \sqrt{2n} \|\mD\|_F },
  	\end{equation}
	where $\sigma^2_S$ denotes an upper bound for the variance of the entries of $\mD$, and is bounded by
	\begin{equation} \label{eq: original bound for variance}
		\sigma^2_S \le \frac{2}{p} \max_{i,j} \big( \mD_{ij}^2 + \nu^2 \big).
  	\end{equation}
\end{proposition}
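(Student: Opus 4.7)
The plan is to follow the Drineas--Kannan--Mahoney framework for element-wise sampled low-rank approximation, in the spirit of Achlioptas--McSherry. First, I would derive the variance bound on the entries of $\mS$ by a direct second-moment calculation: since $\mS_{ij}$ equals $(\mD_{ij}+\mEn_{ij})/p$ with probability $p$ and $0$ with probability $1-p$, expanding $\E\mS_{ij}^2$ using $\E\mEn_{ij}=0$ and $\E\mEn_{ij}^2=\nu^2$ and subtracting $(\E\mS_{ij})^2 = \mD_{ij}^2$ immediately yields \eqref{eq: original bound for variance}.

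Second, I would control the spectral norm of the zero-mean perturbation $\mS - \mD$. Since its above-diagonal entries are independent with variance at most $\sigma_S^2$, a concentration inequality for random symmetric matrices (for example, a non-commutative Bernstein bound) gives
\begin{equation*}
\|\mS - \mD\| \le c\,\sigma_S\sqrt{2n}
\end{equation*}
with probability at least $1 - 1/(2n)$ for a suitable absolute constant $c$.

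Third, I would pass from spectral-norm control of $\mS - \mD$ to the Frobenius-norm bound on $\mD - \tmD$ via an Achlioptas--McSherry-style perturbation analysis of the best rank-$r$ approximation $\tmD = [\mS]_r$. The key steps are: (i) Weyl's inequality together with $\text{rank}(\mD)\le r$ yields $\sigma_{r+1}(\mS) \le \|\mS - \mD\|$; (ii) Eckart--Young--Mirsky gives $\|\mS - \tmD\| = \sigma_{r+1}(\mS)$ and $\langle \mS, \tmD\rangle = \|\tmD\|_F^2$; (iii) the resulting trace identity
\begin{equation*}
\|\mD - \tmD\|_F^2 = \|\mD\|_F^2 - \|\tmD\|_F^2 + 2\langle \mS - \mD,\, \tmD\rangle
\end{equation*}
reduces the task to bounding $\|\mD\|_F^2 - \|\tmD\|_F^2$ and the inner product $\langle \mS - \mD,\, \tmD\rangle$. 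Controlling each by singular-value perturbation inequalities and the duality between spectral and nuclear norms, one arrives at an estimate of the form
\begin{equation*}
\|\mD - \tmD\|_F^2 \le c_1\|\mS - \mD\|^2 + c_2\|\mS - \mD\|\,\|\mD\|_F.
\end{equation*}
Substituting the high-probability bound from Step 2, taking square roots, and using $\sqrt{a+b}\le\sqrt{a}+\sqrt{b}$ produces the two-term inequality \eqref{eq: original result}.

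The main obstacle is Step 3: matching the stated constants and, in particular, avoiding an extra $\sqrt{r}$ factor. A crude triangle-inequality argument combined with $\text{rank}(\mD - \tmD) \le 2r$ would yield $\|\mD - \tmD\|_F \le 2\sqrt{2r}\,\|\mS - \mD\|$, which is cleaner but carries a spurious $\sqrt{r}$ and no cross term. Producing the sharper bound with the cross term $\sqrt{\sigma_S\sqrt{n}\,\|\mD\|_F}$ requires carefully combining the Frobenius-norm optimality of $\tmD$ and the unbiasedness $\E\mS = \mD$ inside a single trace identity, exactly as in the Drineas--Kannan--Mahoney analysis.
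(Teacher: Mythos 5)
This proposition is quoted in the paper as Theorem~2 of \cite{Drineas2006} and is not proved there at all, so there is no in-paper argument to compare against; your proposal is in effect a reconstruction of the original Drineas--Kannan--Mahoney proof, and its skeleton is right. Step~1 is a correct one-line computation: $\E\mS_{ij}^2=(\mD_{ij}^2+\nu^2)/p$, so the variance is at most $(\mD_{ij}^2+\nu^2)/p$, comfortably within \eqref{eq: original bound for variance}. Step~3 is also sound: the identity $\|\mD-\tmD\|_F^2=\|\mD\|_F^2-\|\tmD\|_F^2+2\langle\mS-\mD,\tmD\rangle$ follows from $\langle\mS,\tmD\rangle=\|\tmD\|_F^2$, the inner product is controlled by spectral/nuclear duality as $\sqrt{r}\,\|\mS-\mD\|\,\|\tmD\|_F$, and $\|\mD\|_F^2-\|\tmD\|_F^2$ by Weyl-type singular-value perturbation; this is exactly the Achlioptas--McSherry mechanism. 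One remark on your stated ``main obstacle'': you need not avoid the $\sqrt{r}$ factor --- the original bound does carry a $\sqrt{k}$ dependence, which disappears from \eqref{eq: original result} only because \cite{Drineas2006} fixes $r=4$ and absorbs it into the constants $12$ and $8$, as the paper itself notes immediately after the proposition.

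The only genuine gap is in Step~2. A variance bound on the independent above-diagonal entries of $\mS-\mD$ does not by itself yield $\|\mS-\mD\|\le c\,\sigma_S\sqrt{n}$ with probability $1-1/(2n)$: the classical Füredi--Komlós/Alon--Krivelevich--Vu/Achlioptas--McSherry spectral-norm theorems require uniformly bounded entries, whereas here the entries contain Gaussian noise and are further inflated by the factor $1/p$. The original argument (and any honest reconstruction) must either truncate the Gaussian noise and control the truncation error, or invoke a concentration inequality valid for sub-Gaussian or sub-exponential entries --- note that the present paper's own Theorem~\ref{th: tail bound for matrix completion error} goes to considerable trouble with a matrix Bernstein inequality for precisely this reason. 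If you patch that step, the rest of your outline assembles into a correct proof of \eqref{eq: original result} up to constant bookkeeping.
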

Note that in (\ref{eq: original result}) there is no $r$ appearing because the authors fixed that $r=4$, and in (\ref{eq: original bound for variance}) there is no $\gamma_{ij}$ as it is assumed that $\gamma_{ij}=0$ in this paper.

To compare with our bound, substituting (\ref{eq: original bound for variance}) into (\ref{eq: original result}) and noting that $ \| \mD \|_F \le n \zeta$, we see that \eqref{eq: original result} means that
$$
\frac{1}{n} \| \mD - \tmD \|_F \le C_1 \frac{\zeta+\nu}{\sqrt{np}} + C_2 \frac{\zeta+\nu}{\sqrt[4]{np}} \le C_3 \frac{\zeta+\nu}{\sqrt[4]{np}}.
$$
The last inequality holds because we have assumed that $np \ge \log n > 1$ when $n$ is large enough (e.g., $n \geq 3$). As a result, Proposition \ref{th: original result} implies
\begin{equation} \label{eq: error bound in original}
	\P \Big\{ \frac{1}{n}\| \hat{\mtx{D}}-\mtx{D} \|_F \le C_3 \frac{\zeta+\nu}{\sqrt[4]{np}} \Big\} \ge 1- \frac{1}{2n}.
\end{equation}

In our Theorem \ref{th: tail bound for matrix completion error}, when $t \le (\zeta + \nu) n\sqrt{r}$, the tail bound becomes
$$
\P \Big\{ \| \hat{\mtx{D}}-\mtx{D} \|_F \ge t \Big\} \le n \exp \Big[ -\frac{cmt^2}{n^3r(\zeta+\nu)^2} \Big],
$$
Assume that $m \ge C n \log n$. Then, choosing
$$
t = (\zeta+\nu)n\sqrt{r}\cdot \sqrt{\frac{C n \log n}{m}} \le (\zeta+\nu) n\sqrt{r},
$$
we obtain
\begin{multline*}
\P \Big\{ \frac{1}{n}\| \hat{\mtx{D}}-\mtx{D} \|_F \ge (\zeta+\nu) \sqrt{\frac{Crn\log n}{m}}\Big\} \\
\le n \exp \Big( -\frac{cC \log n}{2}\Big).
\end{multline*}
If $C$ is sufficiently large such that $cC \ge 4$, the last term is bounded by $1/n$. Therefore,
\begin{equation} \label{eq: error with probability 1/n}
\P \Big\{ \frac{1}{n}\| \hat{\mtx{D}}-\mtx{D} \|_F \le (\zeta+\nu) \sqrt{\frac{Cr\log n}{np}}\Big\} \ge 1- \frac{1}{n}.
\end{equation}

Now comparing  \eqref{eq: error bound in original} with \eqref{eq: error with probability 1/n}, we conclude that our bound has faster descent rate (with respect to $n$) than that in \cite{Drineas2006}, provided $np \ge (\log n)^2$.

\subsubsection{Comparison with \cite{Ding2017}}
In the paper \cite{Ding2017}, Ding and Qi proposed a semidefinite programming approach to recover EDMs. They also presented a performance analysis for their method. Their theoretical results (Theorem 1) state that the following event
\begin{equation*} 
\frac{\| \hat{\mtx{D}}-\mtx{D} \|^2_F}{\Delta} \le C ( C_4\zeta^2 + C_5\nu^2\omega^2) \frac{r\Delta \log(2n)}{nm'}
\end{equation*}
holds with probability at least $1-\frac{2}{n}$, where $\Delta = n(n-1)/2$, $\omega^2$ is the maximum of $\vct{\Omega} \odot \vct{D}$, and $m'$ denotes the number of observed entries. Substituting the definitions of $\Delta$ and $\omega$ into the above relation, we obtain
\begin{equation} \label{eq: results of DQ16}
  \P \Big\{ \frac{1}{n}\| \hat{\mtx{D}}-\mtx{D} \|_F \le C_6 \sqrt{\zeta^2+\nu^2\omega^2} \sqrt{\frac{rn\log n}{m'}}\Big\} \ge 1- \frac{1}{n}.
\end{equation}
Considering the observation model in this paper, we may have $m' \approx m = pn^2$. Comparing \eqref{eq: error with probability 1/n} and \eqref{eq: results of DQ16}, if we ignore the effect of $\omega^2$, the two results are of the same order of error rate. It is worth pointing out that the recovery procedure in \cite{Ding2017} is totally different from the SVD-MDS approach, and the former is much more computationally expensive than the latter.

\section{Minimax Lower Bound}

Given a statistical estimation problem, one may want to find the ``best'' estimator. For this purpose, the following question is worth discussing: what is the best performance for a certain estimation problem achievable by any procedure? The {\em minimax risk} answers this question. By definition, it refers to the smallest {\em worst case error} among all estimators. Thus, it describes the intrinsic property of the estimation problem under consideration, and gives us insights into fundamental limitations of performance. In this section, we derive the minimax lower bound of the zero-diagonal symmetric matrix completion problem, and provide a way to evaluate the performance of a certain matrix completion procedure.

In the zero-diagonal symmetric matrix completion scheme, we observe an incomplete and noisy matrix $\mtx{Y}$ and aim to find an estimator which maps our observation $\mtx{Y}$ to an estimate $\hat{\mtx{D}}$ of the true $\mtx{D}$. The performance (or estimate error) is measured via the mean Frobenius norm $\E \| \hat{\mtx{D}} - \mtx{D} \|_F$. Note that given an estimator, the estimate error relies on the true matrix $\mtx{D}$, which is fixed but unknown. Let $\mathcal{D}(r)$ denote the set of all zero-diagonal symmetric matrices with rank at most $r$. Thus, a reasonable performance measure for a certain estimator $\hat{\mtx{D}}$ is the worst case error:
$$
\sup_{\mtx{D} \in \mathcal{D}(r)}~ \E \| \hat{\mtx{D}}-\mtx{D} \|_F.
$$
Then an optimal estimator under this metric gives the minimax risk, which is defined as
\begin{equation} \label{eq: minimax definition}
  	\mathfrak{R} := \inf_{\hat{\mtx{D}}} \sup_{\mtx{D} \in \mathcal{D}(r)}~ \E \| \hat{\mtx{D}}-\mtx{D} \|_F,
\end{equation}
where we take the supremum over all zero-diagonal symmetric matrices with rank at most $r$ and the infimum is taken over all estimators $\hat{\mtx{D}}$.


The minimax problem (\ref{eq: minimax definition}) cannot be solved in general. We instead try to derive a lower bound on the minimax risk $\mathfrak{R}$ using Fano's method \cite{Tsybakov2009}. The lower bound on the minimax risk $\mathfrak{R}$ allows us to evaluate the performance of a certain estimator. In other words, if the worst case error of a given estimator is on the same order of the minimax risk $\mathfrak{R}$, we can say that the estimator is optimal under this metric. Therefore, efforts paid to search estimators with better performance will become meaningless.

\begin{theorem} \label{th: minimax lower bound for symmetric matrix completion}
  The minimax lower bound for the zero-diagonal symmetric low-rank matrix completion problem is
	\begin{equation} \label{eq: minimax lower bound for symmetric matrix completion}
	  	\mathfrak{R} \ge \frac{n\nu}{64} \sqrt{\frac{rn}{m}}.
	\end{equation}
\end{theorem}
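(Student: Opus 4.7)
The plan is to apply Fano's method to the standard hypothesis-testing reduction of the minimax problem. The four steps are: (i) construct a finite packing $\{\mD^{(1)},\dots,\mD^{(M)}\}\subset\mathcal{D}(r)$ that is well-separated in Frobenius norm; (ii) upper-bound the Kullback--Leibler divergence between the laws of $\mY$ under distinct hypotheses; (iii) invoke Fano's inequality to lower-bound the worst-case testing error by an absolute positive constant; (iv) convert probability into expectation via Markov's inequality.

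For the packing, I would use a block construction that automatically enforces symmetry, zero diagonal, and rank at most $r$. Set $k=\lfloor r/2\rfloor$ and, for each $\mtx{B}\in\{0,1\}^{k\times(n-k)}$, define
\[
\mD^{(\mtx{B})} \;=\; \gamma\begin{bmatrix} \mtx{0}_{k\times k} & \mtx{B} \\ \mtx{B}^{T} & \mtx{0}_{(n-k)\times(n-k)} \end{bmatrix},
\]
with a scale $\gamma>0$ to be tuned. Each such matrix lies in $\mathcal{D}(r)$ since its rank is at most $2k\le r$. Applying the Varshamov--Gilbert lemma to the hypercube $\{0,1\}^{k(n-k)}$ produces a subset $\mathcal{B}$ of size $M\ge 2^{k(n-k)/8}$ such that any two distinct $\mtx{B},\mtx{B}'\in\mathcal{B}$ satisfy $\|\mtx{B}-\mtx{B}'\|_{F}^{2}\ge k(n-k)/8$. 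This translates to the matrix packing inequality $\|\mD^{(\mtx{B})}-\mD^{(\mtx{B}')}\|_{F}^{2}\gtrsim \gamma^{2}rn$ and the capacity bound $\log M\gtrsim rn$ (for $n$ larger than an absolute multiple of $r$).

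For the KL step, I would exploit the fact that the likelihood factorizes over the independent upper-triangular entries; diagonal entries contribute nothing because every $\mD^{(\mtx{B})}$ is zero-diagonal. Each upper-triangular entry contributes $p\cdot(\mD_{ij}-\mD'_{ij})^{2}/(2\nu^{2})$ when observed and zero otherwise, so
\[
\mathrm{KL}\big(\P_{\mD^{(\mtx{B})}}\,\|\,\P_{\mD^{(\mtx{B}')}}\big) \;=\; \frac{p}{4\nu^{2}}\,\|\mD^{(\mtx{B})}-\mD^{(\mtx{B}')}\|_{F}^{2} \;\le\; \frac{p\gamma^{2}k(n-k)}{2\nu^{2}}.
\]
Choosing $\gamma^{2}=c\nu^{2}/p$ with $c$ a sufficiently small absolute constant forces the right-hand side to be at most $(1/8)\log M$, so Fano's inequality yields $\inf_{\hat{\mD}}\max_{\mtx{B}\in\mathcal{B}}\P_{\mD^{(\mtx{B})}}\{\hat{\mD}\ne\mD^{(\mtx{B})}\}\ge 1/2$. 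Combining the packing radius $\rho\gtrsim\gamma\sqrt{rn}\asymp\nu\sqrt{rn/p}=n\nu\sqrt{rn/m}$ with the Markov bound $\E\|\hat{\mD}-\mD\|_{F}\ge(\rho/2)\cdot\P\{\|\hat{\mD}-\mD\|_{F}\ge\rho/2\}$ delivers $\mathfrak{R}\gtrsim n\nu\sqrt{rn/m}$; careful bookkeeping of the absolute constants produces the stated factor $1/64$.

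The main obstacle is the constant tracking in the Fano step: the scale $\gamma$ must be simultaneously small enough that pairwise KL divergences are a strict fraction of $\log M$ and large enough that the packing radius retains its full $n\nu\sqrt{rn/m}$ size; the two constraints combine into a narrow window in $\gamma$ that forces the precise $1/64$. A subsidiary subtlety is rank accounting — a naive two-block construction has rank $2r$ rather than $r$, which is why the off-diagonal block must have exactly $\lfloor r/2\rfloor$ rows — and one should verify that $k(n-k)$ is large enough that the $\log 2/\log M$ correction in Fano's inequality can be absorbed into the constant, which is automatic whenever $r(n-r)$ exceeds an absolute threshold.
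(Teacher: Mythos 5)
Your proposal follows essentially the same route as the paper: a reduction from estimation to testing over a packing of zero-diagonal symmetric matrices built from random signs in a $\lfloor r/2\rfloor\times n$ strip, a Gaussian KL computation that reduces to the independent upper-triangular entries with the factor $p$ absorbing the random mask, and Fano's inequality followed by the Markov/Tsybakov conversion to expectation. The only cosmetic difference is that you invoke the Varshamov--Gilbert lemma for the packing where the paper proves an equivalent packing lemma directly via Hoeffding's inequality and a union bound (so your absolute constant need not come out to exactly $1/64$, but the rate $n\nu\sqrt{rn/m}$ is the same).
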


\begin{remark}
   Note that the zero-diagonal symmetric low-rank matrix completion problem is not equivalent to the EDM completion problem. Since a zero-diagonal symmetric low-rank matrix may not be an EDM. Nevertheless, our result (Theorem \ref{th: minimax lower bound for symmetric matrix completion}) sheds considerable light on the EDM completion problem, since any EDM must be a zero-diagonal symmetric low-rank matrix.
\end{remark}

As noted in Remark \ref{rm: two cases for noise level}, the SVD-Reconstruct approach only achieve the minimax rate up to a constant factor under high noise level ($\nu \ge \zeta$). When the noise level is low ($\nu < \zeta$), this approach is minimax suboptimal. Therefore, in practical applications, when the noise level is low, it is sensible to develop better approaches to complete the EDM.

To verify this observation, we compare SVD-Reconstruct with another matrix completion algorithm: OptSpace \cite{Keshavan2010}. In our simulation, the points are in $d=3$ dimensional space and the number of points is set to be $50$. Each coordinate of points uniformly distributes in $(-1,1)$. To generate the incomplete EDM, we set the observation probability $p = 0.5$. After that, we add zero-mean Gaussian noise. We compare the performance of SVD-Reconstruct and OptSpace under different noise levels. For each noise level, we run $20$ trials and take the average completion error as a metric for comparison. The completion error is calculated via the Frobenius norm of the error matrix.

\begin{figure}
  	\centering
	\includegraphics[width=.52\textwidth]{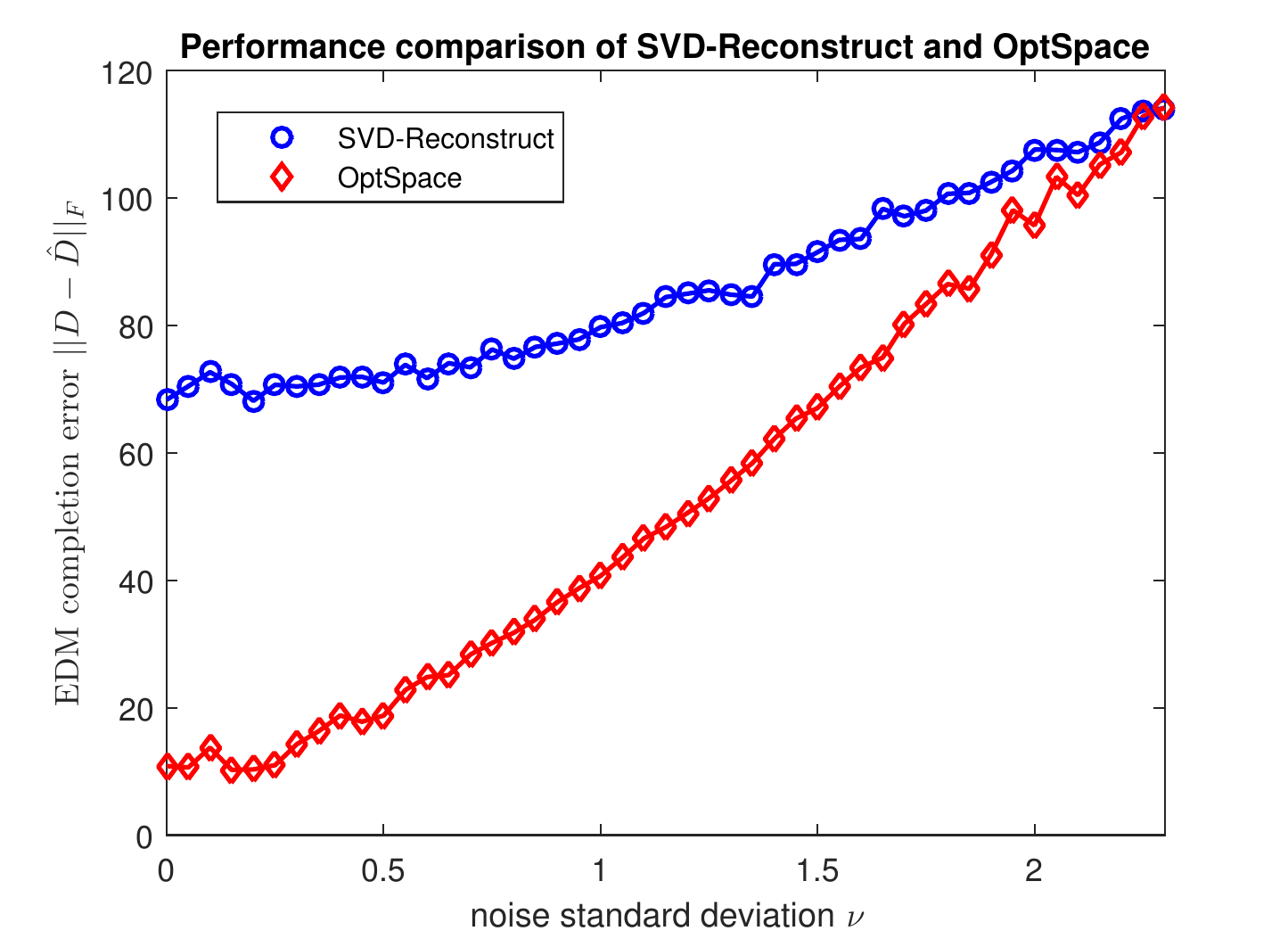}
	\caption{Performance comparison of SVD-Reconstruct and OptSpace under low noise levels.}
	\label{fig: simulation}	
\end{figure}

Figure \ref{fig: simulation} illustrates our results. We can see that when the standard deviation of noise is low (i.e., below $\nu = 2.3$ approximately), the average completion error of SVD-Reconstruct is larger than that of OptSpace. The simulation result agrees with our theoretical analysis that SVD-Reconstruct is suboptimal under low noise levels.

\begin{proof} [Proof of Theorem \ref{th: minimax lower bound for symmetric matrix completion}]
	The proof is divided into two steps. First, we reduce the estimation problem to a hypothesis testing problem. The point lies in that the minimax risk can be lower bounded by the probability of error in a hypothesis testing problem. Once this is done, we will use some well-established tools (e.g., Fano's inequality) in information theory to establish a lower bound on the error probability of the testing problem, and obtain the lower bound for the minimax rate. Let us now give a detailed proof. \\
	\textbf{Step 1: From estimation to testing. } In this step, we consider a finite subset $\mathcal{D}_0 \subset \mathcal{D}(r)$ which is a $\delta$-packing of $\mathcal{D}(r)$. Here, $\delta$-packing means that for any two distinct matrix $\mtx{D}_u$, $\mtx{D}_v \in \mathcal{D}_0$,
	\begin{equation*}
  		\| \mtx{D}_u - \mtx{D}_v \|_F \ge \delta.
	\end{equation*}
	See Fig. \ref{fig: minimax packing} for an example. Given the $\delta$-packing set, we assume that the complete true matrix $\mtx{D}$ in (\ref{eq: symmetric matrix completion model}) is taken {\em uniformly at random} (u.a.r.) from $\mathcal{D}_0$.

	\begin{figure}
  		\centering
		\includegraphics[width=.5\textwidth]{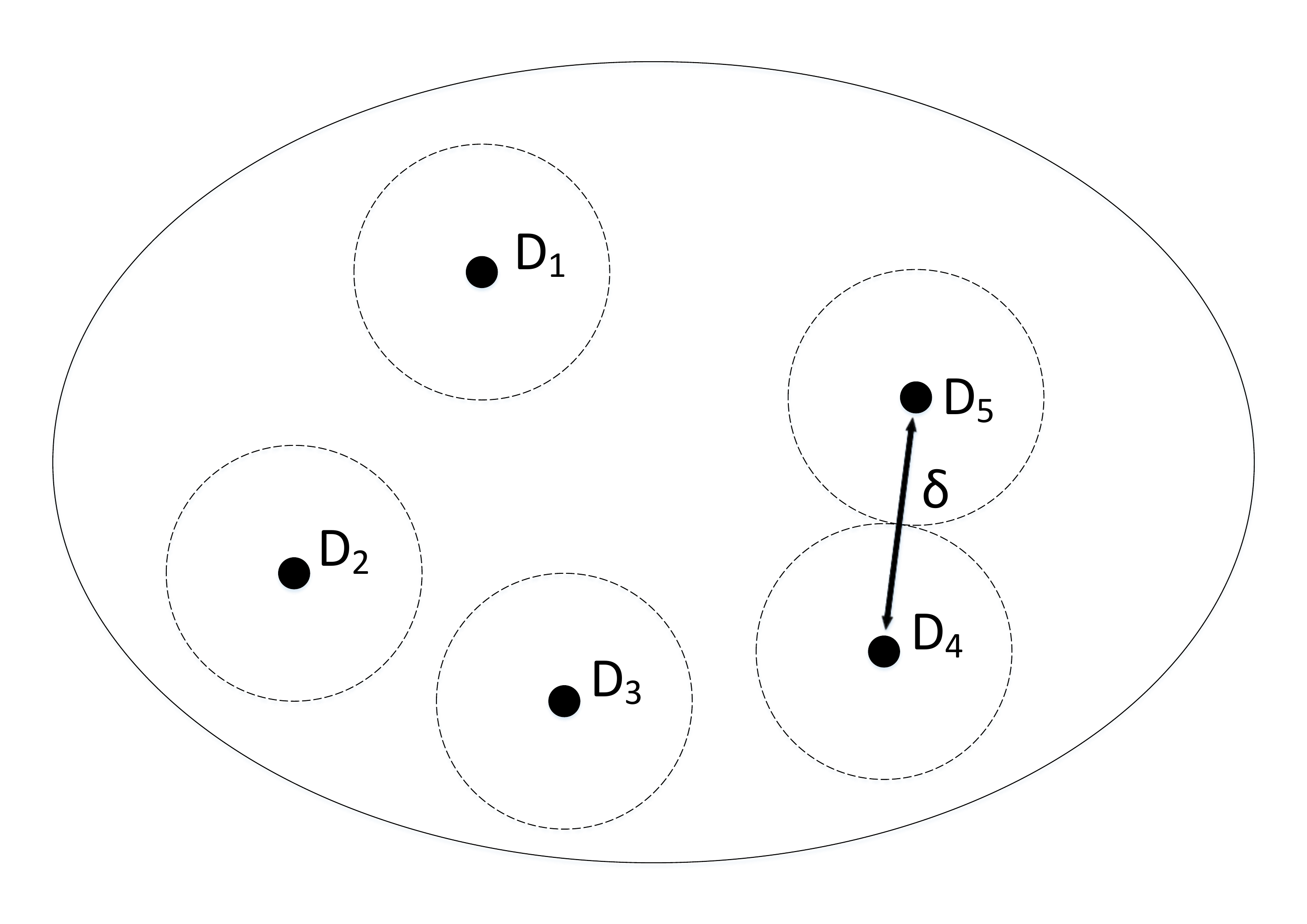}
		\caption{Example of a $\delta$-packing of a set. Here, $\mathcal{D}_0 = \{ D_1, D_2, D_3, D_4, D_5 \}$ } \label{fig: minimax packing}
	\end{figure}

	Now with these setup, we obtain the following hypothesis testing problem: Given the observed matrix $\mtx{Y}$, we need to determine which $\mtx{D}_u \in \mathcal{D}_0$ is the groundtruth $\mtx{D}$. Note that here we do not specify a test function $\Phi: \mtx{Y} \rightarrow \mtx{D}_u \in \mathcal{D}_0$. Lemma \ref{th: from estimation to testing} gives us the relationship between minimax risk $\mathfrak{R}$ and the error probability $\P \{ \Phi (\mtx{Y}) \neq \mtx{D} \}$:
	$$
	\mathfrak{R} \ge \frac{\delta}{2} \inf_{\Phi} \P \{ \Phi (\mtx{Y}) \neq \mtx{D} \},
	$$	
	where the infimum is taken over all testing functions. 
	It remains to bound the error probability $\P \{ \Phi (\mtx{Y}) \neq \mtx{D} \}$. We will use Fano's inequality to do this. \\
	\textbf{Step 2: Bound error probability.}
	Fano's inequality (Lemma \ref{th: Fano inequality} and \ref{th: Fano inequality, uniform}) gives a lower bound for the error probability. But if we want to apply Lemma \ref{th: Fano inequality, uniform} in our zero-diagonal symmetric low-rank matrix completion model (\ref{eq: symmetric matrix completion model}), we must consider the effect of $\mtx{\Omega}$. Different from other estimation problems, here $\mtx{\Omega}$ is random and unfixed. As a result, Lemma \ref{th: Fano inequality, uniform} cannot be applied directly unless we condition on $\mtx{\Omega}$:
	\begin{equation*} \label{eq: Fano inequality for our model}
	  	\P (\hat{\mtx{D}} \neq \mtx{D} | \mtx{\Omega}) \ge 1- \frac{I(\mtx{D};\mtx{Y}|\mtx{\Omega})+\log 2}{\log |\mathcal{D}_0|}.
	\end{equation*}
	By the law of total probability and taking expectation with respect to $\mtx{\Omega}$ on both sides of the above inequality, we get
	\begin{equation*} \label{eq: error probability under Fano inequality}
	  	\P (\hat{\mtx{D}} \neq \mtx{D}) = \E_{\mtx{\Omega}} \P (\hat{\mtx{D}} \neq \mtx{D} | \mtx{\Omega}) \ge 1- \frac{\E_{\mtx{\Omega}} I(\mtx{D};\mtx{Y}|\mtx{\Omega})+\log 2}{\log |\mathcal{D}_0|}.
	\end{equation*}
	Substituting this into Lemma \ref{th: from estimation to testing}, the minimax rate $\mathfrak{R}$ has the following lower bound:
	\begin{equation} \label{eq: Fano's method for our model}
	  	\mathfrak{R} \ge \frac{\delta}{2} \Big( 1- \frac{\E_{\mtx{\Omega}} I(\mtx{D};\mtx{Y}|\mtx{\Omega})+\log 2}{\log |\mathcal{D}_0|} \Big).
	\end{equation}

	Let us now discuss how to choose a suitable $\delta$. Intuitively, if we let $\delta \rightarrow 0$, the right-hand side of (\ref{eq: Fano's method for our model}) becomes trivial, as it tends to be zero as well. Therefore, $\delta$ should be sufficiently large. In practice, a common approach is to choose the largest $\delta > 0$ which makes the mutual information small enough, e.g.,

	\begin{equation} \label{eq: bound Fano}
	  	\frac{\E_{\mtx{\Omega}} I(\mtx{D};\mtx{Y}|\mtx{\Omega})+\log 2}{\log |\mathcal{V}|} \le \frac{1}{2}.
	\end{equation}
	In this case, the minimax risk is lower bounded by $\mathfrak{R} \ge \frac{\delta}{4}$.

	In order to bound the left-hand side of (\ref{eq: bound Fano}), we need a lower bound for the cardinality of the $\delta$-packing set $\mathcal{D}_0$ and an upper bound for the mutual information $I(\mtx{D};\mtx{Y} | \mtx{\Omega})$. For the cardinality of $\mathcal{D}_0$, Lemma \ref{th: size of packing set} establishes the desired result:
	$$
	|D_0| \ge \exp \big( \frac{rn}{128} \big).
	$$

	To establish an upper bound for the mutual information, a typical method involves the {\em Kullback-Leibler divergence} (KL-divergence). Before doing this, let us introduce some necessary notations first. Let $P_{\mtx{Y}}$ denote the distribution of $\mtx{Y}$ conditioned on $\mtx{\Omega}$, $P_u$ denote the distribution of $\mtx{Y}$ conditioned on $\mtx{\Omega}$ and $\mtx{D} = \mtx{D}_u \in \mathcal{D}_0$, and $D_{KL}(f \| g)$ denote the KL-divergence of distribution $f$ and $g$. Then the mutual information can be bounded as follows:
  	\begin{align*} 
  		I(\mtx{D};\mtx{Y}|\mtx{\Omega})
		&= \frac{1}{|\mathcal{D}_0|} \sum_{u:\mtx{D}_u \in \mathcal{D}_0} D_{KL}(P_u \| P_{\mtx{Y}}) \\
		&\le \frac{1}{|\mathcal{D}_0|^2} \sum_{u,v:\mtx{D}_u \in \mathcal{D}_0, \mtx{D}_v \in \mathcal{D}_0} D_{KL}(P_u \| P_v),
	\end{align*}
	where the equality can be derived by a standard calculation, and the inequality is due to the convexity of the $-\log$ function. It remains to calculate the K-L divergence between $P_u$ and $P_v$. Conditioning on $\mO$, either $P_u$ or $P_v$ is a shifted version of the distribution of $\mEn$. Here, we need to be careful, since the noise matrix $\mEn$ is a symmetric matrix, which results in that some of the entries $\mEn_{ij}$'s are dependent. An iterative application of Lemma \ref{th: K-L divergence under dependence} shows that the K-L divergence between $P_u$ and $P_v$ is totally determined by the independent part:
	$$
	D_{KL}(P_u \| P_v) = D_{KL}(P'_u \| P'_v),
	$$
	where $P'_u$ denotes the distribution of $\{ \mY_{ij}: i \le j\}$ conditioning on $\mO$ and $\mD = \mD_u$ and $P'_v$ denotes the distribution of $\{ \mY_{ij}: i \le j\}$ conditioning on $\mO$ and $\mD = \mD_v$.
	
	Since now both $P'_u$ and $P'_v$ are multivariate normal distributions with independent entries of different mean and same variance, the KL-divergence between them can be easily computed:
	\begin{align*}
		D_{KL}(P'_u \| P'_v)
		&= \frac{1}{2\nu^2} \big\| \mtx{\Omega} \odot \mtx{D}_u \odot \mtx{U} - \mtx{\Omega} \odot \mtx{D}_v \odot \mtx{U} \big\|_F^2 \\
		&\le \frac{1}{2\nu^2} \big\| \mtx{\Omega} \odot (\mtx{D}_u - \mtx{D}_v) \big\|_F^2,
	\end{align*}
	where $\mtx{U}$ is a matrix which takes value $1$ on or above the diagonal, and $0$ otherwise. Taking expectation to $\mtx{\Omega}$ both sides, we get
	\begin{align*}
		\E_{\mtx{\Omega}} D_{KL}(P_u \| P_v)
		&\le \frac{1}{2\nu^2} \E_{\mtx{\Omega}} \big\| \mtx{\Omega} \odot (\mtx{D}_u - \mtx{D}_v) \big\|_F^2 \\
		&= \frac{p}{2\nu^2} \big\| \mtx{D}_u - \mtx{D}_v \big\|_F^2.
	\end{align*}
	Thus the minimax risk $\mathfrak{R}$ can be bounded by:
	\begin{align}\label{minimax lower bound step1}
	  	\mathfrak{R}
		&\ge \frac{\delta}{2} \P \{ \Phi (Y) \neq D \} \ge \frac{\delta}{2} \Big( 1- \frac{ \E_{\mtx{\Omega}}I(\mtx{D};\mtx{Y}|\mtx{\Omega})+\log 2}{\log |\mathcal{D}_0|} \Big) \notag \\
		&\ge \frac{\delta}{2} \Bigg[ 1 - \frac{ \E_{\mtx{\Omega}} \Big( \frac{1}{|\mathcal{D}_0|^2} \sum_{u,v} D(P_u \|P_v) \Big) + \log 2}{\log | \mathcal{D}_0 |} \Bigg] \notag \\
		&= \frac{\delta}{2} \Bigg[ 1 - \frac{ \frac{1}{|\mathcal{D}_0|^2} \sum_{u,v} \E_{\mtx{\Omega}} D(P_u \|P_v) + \log 2}{\log | \mathcal{D}_0 |} \Bigg] \notag \\
		&= \frac{\delta}{2} \Bigg[ 1 - \frac{ \frac{p}{2\nu^2|\mathcal{D}_0|^2} \sum_{u,v} \|\mtx{D}_u-\mtx{D}_v\|_F^2 + \log 2}{\log | \mathcal{D}_0 |} \Bigg].
	\end{align}
	Substituting Lemma \ref{th: size of packing set} into (\ref{minimax lower bound step1}), we get
	\begin{align*}
	  	\mathfrak{R}
		&\ge \frac{\delta}{2} \Bigg[ 1- \frac{ \frac{p}{2\nu^2 |\mathcal{D}_0|^2} |\mathcal{D}_0|^2\delta^2 + \log 2}{\frac{rn}{128}} \Bigg] \\
	  	&\ge \frac{\delta}{2} \Bigg[ 1- \frac{ \frac{64p\delta^2}{\nu^2} + 128\log 2}{rn} \Bigg].
	\end{align*}
	Let $\frac{64p\delta^2}{\nu^2} = \frac{rn}{4}$, namely $\delta = \sqrt{ \frac{rn\nu^2}{256p} }$, and assume that $rn \ge 512 \log 2$, then
	\begin{align*}
	  	\mathfrak{R}
	  	&\ge \frac{\delta}{2} \Bigg[ 1- \frac{ \frac{64p\delta^2}{\nu^2} + 128\log 2}{rn} \Bigg] \\
		&\ge \frac{1}{2}\sqrt{ \frac{rn\nu^2}{256p} } \big( 1- \frac{1}{4} - \frac{1}{4} \big) = \frac{\nu}{64} \sqrt{\frac{rn}{p}} = \frac{n\nu}{64} \sqrt{\frac{rn}{m}}.
	\end{align*}
	This completes the proof.
\end{proof}

\section{Conclusion} \label{sec: conclusion}
In this paper, we have presented a detailed analysis for the SVD-MDS approach. We established error bounds for EDM completion by SVD-Reconstruct and for coordinate recovery by MDS using tools from random matrix theory. 
To investigate the optimality of SVD-Reconstruct, we derived the minimax lower bound for the zero-diagonal symmetric low-rank matrix completion problem. The result reveals that when the noise level is high, the SVD-Reconstruct approach can achieve the optimal minimax rate up to a constant factor; when the noise level is low, SVD-Reconstruct is minimax suboptimal, so it is sensible to develop (or employ) more effective methods to complete the EDM and hence localize the positions of nodes.

\appendices

\section{Lemmas Used for Proof of Theorem \ref{th: expected form of matrix completion accuracy}} \label{app: add_a}
  	
\begin{lemma} [Symmetrization, \cite{Ledoux1991}, Lemma~6.3] \label{th: symmetrization}
  Let $F:\R_{+} \rightarrow \R$ be an increasing convex function. Assume that $\mtx{X}_1,\dots,\mtx{X}_N$ are independent, mean zero random vectors in a normed space, and $\epsilon_1,\dots,\epsilon_N$ are independent symmetric Bernoulli random variables. Then
	$$
	\E F \Big( \frac{1}{2}  \Big\| \sum_{i=1}^{N} \epsilon_i \mtx{X}_i \Big\| \Big) \le \E F \Big( \Big\| \sum_{i=1}^{N} \mtx{X}_i \Big\| \Big) \le \E F \Big( 2\Big\| \sum_{i=1}^{N} \epsilon_i \mtx{X}_i \Big\| \Big).
	$$
\end{lemma}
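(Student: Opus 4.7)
The plan is to prove both inequalities by the classical decoupling/symmetrization trick: introduce an independent copy $\mtx{X}_1',\dots,\mtx{X}_N'$ of $\mtx{X}_1,\dots,\mtx{X}_N$ (also independent of the $\epsilon_i$'s), exploit the fact that the increments $\mtx{X}_i-\mtx{X}_i'$ are symmetric (so that inserting signs $\epsilon_i$ leaves the joint distribution unchanged), and then apply Jensen's inequality together with the convexity and monotonicity of $F$.

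Let $\E'$ denote expectation over the primed copy only. For the \emph{upper} bound, I will first use the mean-zero hypothesis to write $\mtx{X}_i = \mtx{X}_i - \E'\mtx{X}_i'$, so that
\[
\Big\|\sum_{i=1}^N \mtx{X}_i\Big\| = \Big\|\E'\sum_{i=1}^N (\mtx{X}_i-\mtx{X}_i')\Big\| \le \E'\Big\|\sum_{i=1}^N (\mtx{X}_i-\mtx{X}_i')\Big\|.
\]
Since $F$ is increasing and convex, $F(\|\cdot\|)$ is convex on the space, so applying $F$ and then Jensen twice (once to pull $\E'$ outside $F$, once after taking $\E$) yields
\[
\E F\Big(\Big\|\sum_i \mtx{X}_i\Big\|\Big) \le \E F\Big(\Big\|\sum_i (\mtx{X}_i-\mtx{X}_i')\Big\|\Big).
\]
Because $(\mtx{X}_i-\mtx{X}_i')_i$ is a symmetric sequence, it has the same joint distribution as $(\epsilon_i(\mtx{X}_i-\mtx{X}_i'))_i$, so the right-hand side equals $\E F(\|\sum_i \epsilon_i(\mtx{X}_i-\mtx{X}_i')\|)$. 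A final triangle inequality inside the norm and the midpoint convexity bound
\[
F\Big(\tfrac{a+b}{2}\Big)\le \tfrac{1}{2}F(a)+\tfrac{1}{2}F(b),
\]
together with the fact that $\sum_i \epsilon_i \mtx{X}_i$ and $\sum_i \epsilon_i \mtx{X}_i'$ have the same distribution, give the desired upper bound $\E F(2\|\sum_i\epsilon_i \mtx{X}_i\|)$.

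For the \emph{lower} bound I will run the same argument in reverse. Starting from $\sum_i\epsilon_i \mtx{X}_i = \sum_i\epsilon_i(\mtx{X}_i-\E'\mtx{X}_i')$, Jensen (applied to the convex map $F(\tfrac{1}{2}\|\cdot\|)$) gives
\[
\E F\Big(\tfrac{1}{2}\Big\|\sum_i\epsilon_i \mtx{X}_i\Big\|\Big) \le \E F\Big(\tfrac{1}{2}\Big\|\sum_i\epsilon_i(\mtx{X}_i-\mtx{X}_i')\Big\|\Big),
\]
and the symmetry of $\mtx{X}_i-\mtx{X}_i'$ lets me drop the $\epsilon_i$ without changing the distribution. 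Then $\tfrac12\|\sum_i(\mtx{X}_i-\mtx{X}_i')\| \le \tfrac12(\|\sum_i \mtx{X}_i\|+\|\sum_i \mtx{X}_i'\|)$, and applying midpoint convexity of $F$ once more, combined with the identical distributions of $\mtx{X}_i$ and $\mtx{X}_i'$, closes the inequality.

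The only real subtlety is bookkeeping the two distinct uses of convexity: one use of Jensen pulls $\E'$ out of $F(\|\cdot\|)$ (this requires $F\circ\|\cdot\|$ to be convex, which follows from convexity of the norm, and monotonicity plus convexity of $F$), while the second use is the midpoint inequality that converts $F(\|\,a\pm b\,\|)$ into an average of $F(2\|a\|)$ and $F(2\|b\|)$ after a triangle-inequality step. Once those two invocations are cleanly separated, the remaining manipulations are distributional identities driven by the symmetry of $\mtx{X}_i-\mtx{X}_i'$ and the independence of the $\epsilon_i$'s, so no further estimates are needed.
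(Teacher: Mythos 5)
Your argument is correct and is the classical symmetrization proof: introduce an independent copy, use the mean-zero hypothesis plus Jensen's inequality for the convex map $F(\|\cdot\|)$ (respectively $F(\tfrac12\|\cdot\|)$), exploit the distributional symmetry of $\mtx{X}_i-\mtx{X}_i'$ to insert or remove the signs $\epsilon_i$, and finish with the triangle inequality and midpoint convexity of $F$. The paper does not reprove this lemma but cites it from Ledoux and Talagrand, and your proof is essentially the one given in that reference, so there is nothing to reconcile; the only cosmetic point is that your ``second use of Jensen'' after taking $\E$ is really just monotonicity of expectation.
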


\begin{lemma} [Contraction principle, \cite{Ledoux1991}, Theorem~4.4] \label{th: contraction principle}
  Let $\vct{x}_1,\dots,\vct{x}_N$ be (deterministic) vectors in a normed space, $\epsilon_1,\dots,\epsilon_N$ be independent symmetric Bernoulli random variables, and let $\vct{a}=(a_1,\dots,a_n) \in \R^n$ be a coefficient vector. Then
	$$
	\E \Big\| \sum_{i=1}^{N} a_i \epsilon_i \vct{x}_i \Big\| \le \|a\|_\infty \cdot \E \Big\| \sum_{i=1}^{N} \epsilon_i \vct{x}_i \Big\|.
	$$
\end{lemma}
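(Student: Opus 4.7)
The plan is to prove the contraction principle by combining convexity with the symmetry of the Rademacher distribution. First, by homogeneity of the norm, I would divide both sides by $\|a\|_\infty$ and reduce to showing that
$$
\E \Big\| \sum_{i=1}^{N} b_i \epsilon_i \vct{x}_i \Big\| \le \E \Big\| \sum_{i=1}^{N} \epsilon_i \vct{x}_i \Big\|
$$
whenever $\|b\|_\infty \le 1$, i.e. $b \in [-1,1]^N$. The problem then becomes a statement about the maximum of a certain function over the cube.

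Next, define $\phi : \R^N \to \R$ by $\phi(b) = \E \|\sum_{i=1}^N b_i \epsilon_i \vct{x}_i\|$. The key analytic observation is that $\phi$ is convex: indeed, for any $b, b' \in \R^N$ and any $\lambda \in [0,1]$, the triangle inequality applied pointwise (for each realization of the Rademacher variables) combined with linearity of expectation gives $\phi(\lambda b + (1-\lambda) b') \le \lambda \phi(b) + (1-\lambda) \phi(b')$. Since a convex function attains its maximum on a compact convex set at an extreme point, and the extreme points of the cube $[-1,1]^N$ are exactly the sign vectors $\{-1,+1\}^N$, it suffices to bound $\phi(s)$ for $s \in \{-1,+1\}^N$.

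The final step uses symmetry of the Rademacher distribution. For any fixed sign vector $s \in \{-1,+1\}^N$, the random variables $(s_1 \epsilon_1, \dots, s_N \epsilon_N)$ have the same joint distribution as $(\epsilon_1, \dots, \epsilon_N)$, because each $s_i \epsilon_i$ is again a symmetric Bernoulli variable and they remain independent. Therefore
$$
\phi(s) = \E \Big\| \sum_{i=1}^{N} s_i \epsilon_i \vct{x}_i \Big\| = \E \Big\| \sum_{i=1}^{N} \epsilon_i \vct{x}_i \Big\|,
$$
which is exactly the right-hand side we want to match. Combining with the convexity/extreme-point argument yields $\phi(b) \le \E \|\sum \epsilon_i \vct{x}_i\|$ for all $b \in [-1,1]^N$, and undoing the normalization recovers the claimed inequality.

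I do not expect a serious obstacle here; the argument is essentially a packaged convexity-plus-symmetry proof. The only mild subtlety is justifying that the maximum of $\phi$ over $[-1,1]^N$ is attained at an extreme point — this follows because $\phi$ is continuous (being a finite Rademacher average of norms) and convex, and $[-1,1]^N$ is a compact convex polytope, so standard convex analysis applies. The statement also extends verbatim to any coefficient vector $\vct{a}$ rather than just sign vectors because the reduction to the cube above uses only $|a_i|/\|a\|_\infty \le 1$.
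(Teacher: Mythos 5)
Your proof is correct. The paper does not prove this lemma at all---it is quoted verbatim from Ledoux and Talagrand (Theorem~4.4) as a black-box tool---and the argument you give (normalize by $\|a\|_\infty$, observe that $b \mapsto \E\|\sum_i b_i \epsilon_i \vct{x}_i\|$ is convex, push the maximum over the cube $[-1,1]^N$ to its extreme points, and absorb the resulting sign vector into the Rademacher sequence by distributional symmetry) is precisely the standard proof found in that reference, so there is nothing to reconcile. The one step you flag as a ``mild subtlety'' is in fact even easier than you suggest: since $[-1,1]^N$ is a polytope, every $b$ is a finite convex combination of sign vectors, and Jensen-type averaging of $\phi$ over that combination gives the bound directly, with no appeal to compactness or continuity needed.
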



\begin{lemma} [Seginer's theorem, \cite{Seginer2000}, Theorem~1.1] \label{th: Seginer's theorem}
  There exists a constant $K$ such that, for any $m, n$ any $h \le 2 \log \max \{ m, n \}$ and any $m \times n$ random matrix $\mtx{A}=(a_{ij})$, where $a_{ij}$ are i.i.d. zero mean random variables, the following inequality holds:
  	\begin{eqnarray*}
		&& \max \big\{ \E \max_{1\le i \le m} \|\vct{a}_{i \cdot}\|^h, \E \max_{1 \le j \le n} \|\vct{a}_{\cdot j}\|^h \big\} \le \E \|\mtx{A}\|^h \\
		&& \le (2K)^h \big( \E \max_{1 \le i \le m} \|\vct{a}_{i \cdot}\|^h + \E \max_{1 \le j \le n} \|\vct{a}_{\cdot j}\|^h \big),
	\end{eqnarray*}
	where $\vct{a}_{i \cdot}$ and $\vct{a}_{\cdot j}$ denote the $i$-th row and  the $j$-th column of the matrix $\mtx{A}$, respectively.
\end{lemma}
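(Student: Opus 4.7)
The inequality $\max\{\E \max_i \|\vct{a}_{i\cdot}\|^h, \E\max_j \|\vct{a}_{\cdot j}\|^h\} \le \E\|\mtx{A}\|^h$ is immediate from the variational characterization of the spectral norm: for every standard basis vector $\vct{e}_j$ we have $\|\mtx{A}\| \ge \|\mtx{A}\vct{e}_j\|_2 = \|\vct{a}_{\cdot j}\|_2$, and symmetrically $\|\mtx{A}\| = \|\mtx{A}^T\| \ge \|\vct{a}_{i\cdot}\|_2$. Raising to the $h$-th power and taking pointwise maxima gives $\max_i \|\vct{a}_{i\cdot}\|^h \le \|\mtx{A}\|^h$ and $\max_j \|\vct{a}_{\cdot j}\|^h \le \|\mtx{A}\|^h$, from which the lower bound follows by taking expectations on each side and then taking a maximum.

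\textbf{Upper bound strategy.} For the upper bound I plan to use the moment method. By monotonicity of $L^p$-norms in $p$ it suffices to prove the bound when $h = 2k$ is a positive even integer, at the cost of only adjusting $K$ by a constant factor. I then exploit the trace bound
\begin{equation*}
  \|\mtx{A}\|^{2k} \;\le\; \mathrm{tr}\bigl((\mtx{A}\mtx{A}^T)^k\bigr) \;=\; \sum a_{i_1 j_1}\,a_{i_2 j_1}\,a_{i_2 j_2}\,a_{i_3 j_2}\cdots a_{i_1 j_k},
\end{equation*}
where the sum ranges over closed ``walks'' of length $2k$ in the complete bipartite graph $K_{m,n}$ with vertex classes indexed by the row and column labels. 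Taking expectations, independence and the zero-mean assumption force only walks in which every edge $(i,j)$ is traversed at least twice to contribute.

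\textbf{Combinatorial reduction.} The main work is to group these surviving walks by their \emph{shape} (the underlying bipartite multigraph together with the cyclic visit order) and bound the contribution of each shape. After a symmetrization step replacing $a_{ij}$ by $\varepsilon_{ij} a_{ij}$ for independent Rademachers (Lemma~\ref{th: symmetrization} costs only a constant factor), each shape's contribution factors into products of row-moments $\sum_j a_{ij}^{2\ell}$ and column-moments $\sum_i a_{ij}^{2\ell}$ with suitable multiplicities. These are in turn controlled by powers of $\|\vct{a}_{i\cdot}\|$ and $\|\vct{a}_{\cdot j}\|$ via H\"older, and summing over shapes yields a bound of the form $(2K)^{2k}\bigl(\E\max_i \|\vct{a}_{i\cdot}\|^{2k} + \E\max_j \|\vct{a}_{\cdot j}\|^{2k}\bigr)$ after a final interchange of expectation and maximum.

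\textbf{Main obstacle.} The hard part is controlling the combinatorial constants that arise in Step~3. The number of walk shapes of a given type grows rapidly with $k$, so a naive count would introduce ambient-dimension factors of order $(mn)^{O(1)}$ that cannot be absorbed into a dimension-free constant. It is precisely here that the hypothesis $h \le 2\log\max\{m,n\}$ enters: it restricts $k$ so that such factors, when raised to the $1/k$, are $O(1)$ and can be folded into $(2K)^h$ with $K$ an absolute constant. Designing the right charging scheme that sends each walk shape either to the row maximum $\E\max_i\|\vct{a}_{i\cdot}\|^h$ or to the column maximum $\E\max_j\|\vct{a}_{\cdot j}\|^h$, while respecting this delicate budget, is the combinatorial heart of Seginer's argument and the step I expect to be the most technical.
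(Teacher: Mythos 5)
The paper does not prove this lemma at all: it is imported verbatim from \cite{Seginer2000} as a known tool (Appendix A), so there is no internal proof to compare against. Judged on its own terms, your attempt establishes only the easy half. The lower bound via $\|\mtx{A}\| \ge \|\mtx{A}\vct{e}_j\|_2 = \|\vct{a}_{\cdot j}\|_2$ and $\|\mtx{A}\| = \|\mtx{A}^T\|$ is correct and complete. The upper bound, which is the entire substance of Seginer's theorem, is presented only as a strategy: you set up the trace-moment identity $\|\mtx{A}\|^{2k} \le \mathrm{tr}\bigl((\mtx{A}\mtx{A}^T)^k\bigr)$, observe that zero mean and independence kill walks with a singly-traversed edge, and then defer the ``combinatorial reduction'' and ``charging scheme'' to a step you yourself flag as not carried out. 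That deferred step is where the theorem actually lives: a single surviving closed walk visits up to $k$ distinct rows and columns, so bounding its contribution by a power of \emph{one} maximal row norm or \emph{one} maximal column norm (rather than by products over many rows and columns, which would reintroduce dimension factors) requires a genuinely nontrivial assignment argument, and the enumeration of walk shapes must be shown to contribute only $C^{k}$ rather than $(mn)^{O(1)}$ beyond what $h \le 2\log\max\{m,n\}$ can absorb. You correctly identify where the hypothesis on $h$ enters, but identifying the obstacle is not the same as overcoming it.

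A further point you gloss over: the theorem assumes nothing about the entry distribution beyond i.i.d. and zero mean, whereas the moment method as you describe it manipulates $2k$-th moments and Hölder interpolations of row/column moments freely. Making this rigorous for arbitrary (possibly heavy-tailed) entries is part of what Seginer's actual argument handles via symmetrization and conditioning on the magnitudes, and the ``factors into products of row-moments and column-moments'' claim needs to be justified at that level of generality. As it stands, the proposal is a reasonable reading plan for Seginer's paper, not a proof of the lemma.
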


\begin{lemma} \label{th: Maximum of binomials}
  	(1) Let $X \sim \textnormal{Bin}(n,p)$ and $X_1,\dots,X_n$ be i.i.d. copies of $X$. Let $Z = \max_{i \le n} X_i$. If $p \ge \ln n / n$, then
	$$
	\E Z \le Cnp,
	$$
	where $C$ is an absolute constant. \\
	(2) Let $\mtx{W}$ be an $n \times n$ random matrix with i.i.d Bernoulli entries, i.e., $\P \{ \mtx{W}_{ij} = 1 \} = p$ and $\P \{ \mtx{W}_{ij} = 0 \} = 1-p$, and let $\mtx{G}$ denote an $n \times n$ random matrix whose entries are i.i.d standard Gaussian random variables. Then
	$$
	E \max_{1 \le i \le n} \| (\mtx{W} \odot \mtx{G})_{i \cdot} \|_2 \le c \sqrt{np},
	$$
	where $c$ is an absolute constant.
\end{lemma}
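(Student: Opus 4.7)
I will use the integral identity $\E Z = \int_0^{\infty} \P(Z > u)\,du$ together with the union bound $\P(Z > u) \le \min\{1, n\,\P(X_1 > u)\}$ and a Chernoff tail estimate for the binomial. The standard bound
\[
\P(X_1 \ge t) \le \Bigl(\tfrac{enp}{t}\Bigr)^{t} \quad \text{for } t \ge enp
\]
(derived from the moment generating function of $\mathrm{Bin}(n,p)$) lets me pick a threshold $u_0 = C_0\, np$ such that $n\,\P(X_1 > u_0) \le 1$. I then split
\[
\E Z \le u_0 + \int_{u_0}^{\infty} n\,\P(X_1 > u)\,du.
\]
The first term is $O(np)$ by construction. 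Concretely, asking $n(enp/u)^u \le 1$ becomes $u\ln(u/(enp)) \ge \ln n$, and choosing $u_0 = e^2 np$ gives $u_0\ln(u_0/(enp)) = e^2 np \ge e^2 \ln n \ge \ln n$ under the hypothesis $np \ge \ln n$. For the tail integral, the substitution $v = u/(np)$ reduces it to an exponentially small multiple of $np$, so the total is $O(np)$, as desired.

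\textbf{Plan for part (2).} I will reduce to a chi-squared computation. Writing $Y_i := \|(\mtx{W} \odot \mtx{G})_{i\cdot}\|_2^2 = \sum_{j=1}^n W_{ij}\, G_{ij}^2$ and using Jensen's inequality,
\[
\E \max_{1 \le i \le n} \|(\mtx{W}\odot\mtx{G})_{i\cdot}\|_2 \le \sqrt{\E \max_{1 \le i \le n} Y_i},
\]
so it suffices to show $\E \max_i Y_i = O(np)$. Conditionally on $\mtx{W}$, the random variable $Y_i$ is distributed as $\chi^2_{S_i}$ with $S_i := \sum_j W_{ij} \sim \mathrm{Bin}(n,p)$ degrees of freedom. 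The Laurent--Massart bound $\P(\chi^2_k \ge k + 2\sqrt{k x} + 2x) \le e^{-x}$, applied with $x = 3\log n$ and a union bound over the $n$ rows, produces an event $A$ with $\P(A^c) \le n^{-2}$ on which
\[
\max_i Y_i \le M + 2\sqrt{3 M \log n} + 6\log n, \qquad M := \max_{i} S_i .
\]

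\textbf{Combining the two parts.} On $A$, I take expectations and use Jensen plus part~(1) applied to $M$: $\E M \le C np$, so $\E[\max_i Y_i \, \mathbf{1}_A] \le C np + 2\sqrt{3 C np \log n} + 6\log n$. Under $np \ge \ln n$, both correction terms are $O(np)$. On $A^c$, I use the crude dominating bound $\max_i Y_i \le \|\mtx{W}\odot\mtx{G}\|_F^2$ together with Cauchy--Schwarz: $\E[\max_i Y_i \, \mathbf{1}_{A^c}] \le \bigl(\E \|\mtx{W}\odot\mtx{G}\|_F^4\bigr)^{1/2} \bigl(\P(A^c)\bigr)^{1/2}$. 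A direct second-moment computation gives $\E \|\mtx{W}\odot\mtx{G}\|_F^4 = O(n^4 p^2)$, so this contribution is $O(n^2 p)\cdot n^{-1} = O(np)$. Summing the two contributions yields $\E \max_i Y_i = O(np)$, and then Jensen delivers $\E\max_i \|(\mtx{W}\odot\mtx{G})_{i\cdot}\|_2 = O(\sqrt{np})$.

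\textbf{Main obstacle.} The delicate point is that a naive sub-exponential Bernstein bound applied directly to $Y_i$ is \emph{not} sharp enough to yield the $\sqrt{np}$ scaling when $p$ is small, because the sub-exponential norm $\|W_{ij}G_{ij}^2\|_{\psi_1}$ is an absolute constant that does not shrink with $p$. The conditioning trick is essential: given $\mtx{W}$, the row has only $S_i \approx np$ active Gaussian coordinates, so chi-squared concentration produces fluctuations of order $\sqrt{S_i \log n}$ which, under $np \ge \ln n$, are dominated by $S_i$. The second subtlety is not losing the gain on the bad event $A^c$, which forces the slightly-larger deviation parameter $x = 3\log n$ so that $\P(A^c)$ beats the $L^2$ growth of the Frobenius norm.
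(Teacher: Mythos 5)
Your part (1) is essentially the paper's own argument (integral identity, union bound, Chernoff tail, threshold $e^2np$), but your part (2) takes a genuinely different and correct route. The paper decomposes $\|(\mtx{W}\odot\mtx{G})_{i\cdot}\|_2^2=\sum_j \mtx{W}_{ij}^2(\mtx{G}_{ij}^2-1)+\sum_j \mtx{W}_{ij}^2$, handles the second sum by part (1), and for the first sum records the tail $\P\{\mtx{W}_{ij}^2(\mtx{G}_{ij}^2-1)\ge t\}\le p\exp(-ct)$ --- crucially retaining the prefactor $p$ --- and feeds it into a bespoke Bernstein-type inequality (Lemma \ref{th: Modified Bernstein inequality}) whose MGF bound $\E\exp(\lambda X)\le\exp(c_2p\lambda^2)$ yields a variance proxy of order $np$ rather than $n$. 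So your ``main obstacle'' diagnosis is accurate for the textbook $\psi_1$-norm Bernstein inequality, but the paper sidesteps it not by conditioning on $\mtx{W}$ but by keeping the sparsity in the tail prefactor. Your alternative --- condition on $\mtx{W}$ so that $Y_i\sim\chi^2_{S_i}$ with $S_i\sim\mathrm{Bin}(n,p)$, apply Laurent--Massart with $x=3\log n$, bound $\E\max_iS_i\le Cnp$ by part (1) (valid since the rows of $\mtx{W}$ are independent), and absorb the bad event via Cauchy--Schwarz against $\E\|\mtx{W}\odot\mtx{G}\|_F^4=O(n^4p^2)$ --- checks out: the correction terms $\sqrt{np\log n}$ and $\log n$ are both $O(np)$ under $np\ge\ln n$, and the bad-event contribution is $O(n^2p)\cdot n^{-1}=O(np)$. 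What each approach buys: yours invokes a sharp off-the-shelf chi-squared deviation inequality and makes the role of the hypothesis $np\ge\ln n$ very transparent, at the cost of explicit good/bad-event bookkeeping; the paper's is self-contained at the level of scalar tails, reuses the same machinery it needs elsewhere (the proof of Theorem \ref{th: tail bound for matrix completion error}), and extends verbatim to sub-Gaussian noise as claimed in the remark after Theorem \ref{th: reconstruction error}, whereas your chi-squared step is Gaussian-specific and would need to be replaced by a conditional Bernstein bound in that generality.
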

\begin{proof}
  	See Appendix \ref{app: proof of max}.
\end{proof}

\section{Proof of Lemma \ref{th: Maximum of binomials}} \label{app: proof of max}

To prove Lemma \ref{th: Maximum of binomials}, we need the following facts:

\begin{lemma} [Integral identity] \label{th: Integral identity}
  	For any random variable $X$, we have
	$$
	\E X = \int_{0}^{\infty} \P \{ X > t \} dt - \int_{-\infty}^{0} \P \{ X < t \} dt.
	$$
	In particular, for a non-negative random variable $X$, we have
	$$
	\E X = \int_{0}^{\infty} \P \{ X > t \} dt.
	$$
\end{lemma}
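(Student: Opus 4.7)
The plan is to establish the identity via the layer-cake (horizontal-slice) representation of $X$, which expresses the random variable as an integral of indicator functions, and then swap expectation with the integral using Tonelli's theorem. This is standard and the argument splits naturally into the non-negative case and the general case.

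First I would handle the non-negative case. For any $X \ge 0$, observe the pointwise identity
\begin{equation*}
X = \int_{0}^{X} 1 \, dt = \int_{0}^{\infty} \mathbf{1}\{t < X\} \, dt,
\end{equation*}
which follows directly from the definition of the Riemann integral. Taking expectation on both sides and applying Tonelli's theorem (justified because the integrand $\mathbf{1}\{t < X\}$ is a non-negative measurable function on the product space $\Omega \times [0,\infty)$ with the product of the probability measure and Lebesgue measure) gives
\begin{equation*}
\E X = \int_{0}^{\infty} \E \, \mathbf{1}\{t < X\} \, dt = \int_{0}^{\infty} \P\{X > t\} \, dt,
\end{equation*}
which is the second claim. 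Note that a set of measure zero issue ($\P\{X = t\}$) does not affect the Lebesgue integral, so writing $\P\{X > t\}$ versus $\P\{X \ge t\}$ is immaterial.

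Next I would handle the two-sided case by decomposing $X = X^{+} - X^{-}$, where $X^{+} = \max(X,0)$ and $X^{-} = \max(-X,0)$ are both non-negative. Applying the non-negative case to $X^{+}$ yields $\E X^{+} = \int_{0}^{\infty} \P\{X^{+} > t\}\,dt = \int_{0}^{\infty} \P\{X > t\}\,dt$, since for $t > 0$ the events $\{X^{+} > t\}$ and $\{X > t\}$ coincide. Similarly, $\E X^{-} = \int_{0}^{\infty} \P\{X^{-} > s\}\,ds = \int_{0}^{\infty} \P\{X < -s\}\,ds$, and the substitution $t = -s$ converts this into $\int_{-\infty}^{0} \P\{X < t\}\,dt$. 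Subtracting gives the desired identity, valid whenever $\E |X| < \infty$ (so that both pieces are finite) or more generally in the extended-real sense when only one piece is infinite.

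The argument is essentially routine; the only technical subtlety is the invocation of Tonelli's theorem, which requires measurability of the map $(t,\omega) \mapsto \mathbf{1}\{t < X(\omega)\}$. This follows since $\{(t,\omega) : t < X(\omega)\}$ is the preimage of $(0,\infty)$ under the measurable function $(t,\omega) \mapsto X(\omega) - t$, hence is a measurable subset of $[0,\infty) \times \Omega$. No obstacle of substance arises; the proof is a one-line calculation once the layer-cake identity and Tonelli are invoked.
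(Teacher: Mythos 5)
Your proof is correct: the layer-cake representation $X=\int_0^\infty \mathbf{1}\{t<X\}\,dt$ plus Tonelli, followed by the decomposition $X=X^+-X^-$ for the two-sided case, is the canonical argument, and you rightly flag that the identity requires $\E|X|<\infty$ (or an extended-real reading) even though the lemma says ``any random variable.'' The paper itself states this lemma as a known fact without proof, so there is nothing to contrast; your write-up supplies exactly the standard justification the authors omitted.
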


\begin{lemma} [Chernoff's inequality, \cite{Michael2005}, Theorem~4.4] \label{th: Chernoff's inequality}
  	Let $X_i$ be independent Bernoulli random variables with parameter $p_i$. Consider their sum $S_N = \sum_{i=1}^{N} X_i$ and denote its mean by $\mu = \E S_N$. Then, for any $t > \mu$, we have
	$$
	\P \{ S_N \ge t \} \le e^{-\mu} \Big( \frac{e \mu}{t} \Big)^t.
	$$
	In particular, for any $t \ge e^2 \mu$ we have
	$$
	\P \{ S_N \ge t \} \le e^{-t}.
	$$
\end{lemma}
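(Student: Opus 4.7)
The plan is to prove this via the standard exponential moment (Chernoff) method: bound the moment generating function of $S_N$, apply Markov's inequality to $e^{\lambda S_N}$ for an arbitrary $\lambda>0$, and then optimize over $\lambda$. The independence of the $X_i$ and the simple structure of Bernoulli random variables make every step essentially mechanical; there is no real obstacle, and the main care is to track constants so that the stated closed-form $e^{-\mu}(e\mu/t)^t$ comes out cleanly.

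First, I would fix $\lambda>0$ and write $\P\{S_N\ge t\}=\P\{e^{\lambda S_N}\ge e^{\lambda t}\}\le e^{-\lambda t}\,\E e^{\lambda S_N}$ by Markov. Next, by independence, $\E e^{\lambda S_N}=\prod_{i=1}^N\E e^{\lambda X_i}=\prod_{i=1}^N\bigl(1+p_i(e^{\lambda}-1)\bigr)$. Using the elementary inequality $1+x\le e^{x}$ on each factor collapses the product to $\exp\!\bigl(\mu(e^{\lambda}-1)\bigr)$, where $\mu=\sum_i p_i=\E S_N$. Hence for every $\lambda>0$,
\begin{equation*}
\P\{S_N\ge t\}\le \exp\!\bigl(\mu(e^{\lambda}-1)-\lambda t\bigr).
\end{equation*}

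The second step is to optimize the right-hand side over $\lambda$. Differentiating the exponent with respect to $\lambda$ and setting the derivative to zero yields $\mu e^{\lambda}=t$, i.e., $\lambda=\log(t/\mu)$, which is legitimate because the hypothesis $t>\mu$ guarantees $\lambda>0$. Plugging this choice back in gives the exponent $t-\mu-t\log(t/\mu)$, which exponentiates to the advertised bound $e^{-\mu}(e\mu/t)^{t}$.

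Finally, for the specialized statement, I would simply observe that when $t\ge e^{2}\mu$ we have $e\mu/t\le e^{-1}$, so $(e\mu/t)^{t}\le e^{-t}$, and combining with the trivial $e^{-\mu}\le 1$ gives $\P\{S_N\ge t\}\le e^{-t}$. The only delicate point in the whole argument is confirming that the critical $\lambda$ is strictly positive (so that Markov applies in the correct direction), which is precisely why the hypothesis $t>\mu$ appears; beyond that, the proof is a routine application of the Cram\'er--Chernoff recipe.
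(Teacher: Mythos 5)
Your derivation is correct: the Markov/MGF bound, the estimate $1+x\le e^{x}$ collapsing the product to $\exp(\mu(e^{\lambda}-1))$, the optimizing choice $\lambda=\log(t/\mu)>0$ (valid precisely because $t>\mu$), and the final specialization for $t\ge e^{2}\mu$ are all sound. The paper itself gives no proof of this lemma --- it is quoted directly from the cited reference --- and your argument is exactly the standard Cram\'er--Chernoff derivation by which that reference establishes it, so there is nothing to reconcile.
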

\begin{lemma} [Equivalence of sub-Gaussian properties, \cite{Roman2012}, Lemma 5.5] \label{th: equivalence of sub-Gaussian properties}
  	Let $X$ be a random variable. Then the following properties are equivalent with parameters $K_i > 0$ differing from each other by at most an absolute constant factor.

	1. The tails of $X$ satisfy
	$$
	\P \{ |X| > t \} \le \exp (1-t^2/K_1^2) \textnormal{ for all }t \ge 0.
	$$

	2. The moments of $X$ satisfy
	$$
	\|X\|_p = (\E |X|^p)^{1/p} \le K_2 \sqrt{p}.
	$$

	3. The {\em moment generating function} (MGF) of $X$ satisfies
	$$
	\E \exp(X^2/K_3^2) \le 2.
	$$
\end{lemma}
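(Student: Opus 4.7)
The plan is to establish the equivalence through the cyclic chain of implications $(1)\Rightarrow(2)\Rightarrow(3)\Rightarrow(1)$, following the standard route for sub-Gaussian characterizations. At each stage the new constant $K_{i+1}$ will differ from the previous $K_i$ by only an absolute multiplicative factor, which is exactly what the statement asks for.

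For $(1)\Rightarrow(2)$, I would apply the integral identity (Lemma \ref{th: Integral identity}) to the nonnegative variable $|X|^p$, giving $\E|X|^p = \int_0^\infty p t^{p-1}\P\{|X|>t\}\,dt$. Writing the tail hypothesis as $\exp(1-t^2/K_1^2) = e\cdot\exp(-t^2/K_1^2)$ and performing the change of variables $u = t^2/K_1^2$ converts the right-hand side to a Gamma integral, yielding $\E|X|^p \le \tfrac{ep}{2}K_1^p\,\Gamma(p/2)$. A Stirling-type bound $\Gamma(p/2)\le (p/2)^{p/2}$ then gives $(\E|X|^p)^{1/p}\le K_2\sqrt{p}$ with $K_2$ an absolute multiple of $K_1$.

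For $(2)\Rightarrow(3)$, I would expand the MGF as a Taylor series
$$
\E\exp(X^2/K_3^2) = \sum_{k=0}^{\infty} \frac{\E X^{2k}}{K_3^{2k}\,k!},
$$
substitute the moment bound $\E X^{2k}\le K_2^{2k}(2k)^k$ from (2), and use the Stirling lower bound $k!\ge (k/e)^k$ to control each summand by $(2eK_2^2/K_3^2)^k$. Choosing $K_3$ as a sufficiently large absolute multiple of $K_2$ (for instance $K_3^2 = 4eK_2^2$) makes the common ratio at most $1/2$, so the series sums to at most $2$. For $(3)\Rightarrow(1)$, I would apply Markov's inequality to the positive variable $\exp(X^2/K_3^2)$ at the level $\exp(t^2/K_3^2)$:
$$
\P\{|X|>t\} \le \frac{\E\exp(X^2/K_3^2)}{\exp(t^2/K_3^2)} \le 2\exp(-t^2/K_3^2),
$$
and absorb the factor $2\le e$ into the exponent to recover exactly the form $\exp(1-t^2/K_1^2)$ with $K_1 = K_3$.

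The main obstacle is not conceptual but bookkeeping: tracking that each $K_i$ is related to the next by a purely absolute factor independent of $X$, and in particular ensuring in step $(2)\Rightarrow(3)$ that the Stirling estimate yields a geometric series whose sum can be forced below $2$ by the right absolute choice of $K_3$. All three arrows rely only on the integral identity already stated, Markov's inequality, and standard Stirling bounds, so no new tools beyond what is available in the appendix are needed.
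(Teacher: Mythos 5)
Your proposal is correct, but note that the paper does not prove this lemma at all---it is quoted verbatim from \cite{Roman2012} (Lemma 5.5), and your cyclic chain $(1)\Rightarrow(2)\Rightarrow(3)\Rightarrow(1)$ via the integral identity, a Taylor expansion of the MGF with Stirling bounds, and Markov's inequality is precisely the standard argument given in that reference. The only bookkeeping item to watch is that the crude bound $\Gamma(p/2)\le (p/2)^{p/2}$ fails for small $p$ (e.g.\ $p=1$), so one should either restrict to $p\ge 2$ and use monotonicity of $L_p$ norms for $1\le p\le 2$, or use a Stirling bound with an extra absolute constant; this is harmless since it is absorbed into $K_2$.
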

The sub-Gaussian random variable is defined through the above sub-Gaussian properties.
\begin{definition} [Sub-Gaussian random variable, \cite{Roman2012}, Definition 5.7] \label{def: sub-Gaussian}
  	A random variable $X$ that satisfy one of the equivalent properties $1-3$ in Lemma \ref{th: equivalence of sub-exponential properties} is called a sub-Gaussian random variable. The sub-Gaussian norm of $X$, denoted $\|X\|_{\psi_2}$ is defined to be the smallest $K_3$ in property $3$. In other words,
	$$
	\|X\|_{\psi_2} = \inf \{t > 0: \E \exp(X^2/t^2) \le 2 \}.
	$$
\end{definition}
\begin{lemma} [Equivalence of sub-exponential properties, \cite{Roman2012}, pp.~221] \label{th: equivalence of sub-exponential properties}
  	Let $X$ be a random variable. Then the following properties are equivalent with parameters $K_i > 0$ differing from each other by at most an absolute constant factor.

	1. The tails of $X$ satisfy
	$$
	\P \{ |X| > t \} \le \exp (1-t/K_1) \textnormal{ for all }t \ge 0.
	$$

	2. The moments of $X$ satisfy
	$$
	\|X\|_p = (\E |X|^p)^{1/p} \le K_2 p.
	$$

	3. The {\em moment generating function} (MGF) of $X$ satisfies
	$$
	\E \exp(X/K_3) \le 2.
	$$
\end{lemma}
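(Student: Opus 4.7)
The plan is to establish the cyclic chain of implications $(1) \Rightarrow (2) \Rightarrow (3) \Rightarrow (1)$, which proves equivalence with the constants $K_i$ differing by at most absolute constant factors. Each step is a classical probability estimate and parallels the sub-Gaussian case in Lemma \ref{th: equivalence of sub-Gaussian properties}, but with linear rather than quadratic growth in the exponent. In fact, essentially all the ingredients have already been used implicitly in the proof of Theorem \ref{th: tail bound for matrix completion error}.

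For $(1) \Rightarrow (2)$, I would invoke the integral identity (Lemma \ref{th: Integral identity}) to write $\E |X|^p = \int_0^\infty p t^{p-1} \P\{|X| > t\}\, dt$, plug in the tail bound from (1) after noting that $\exp(1 - t/K_1) = e \cdot e^{-t/K_1}$, and then substitute $s = t/K_1$. This yields $\E |X|^p \le e\, K_1^p\, p\, \Gamma(p) = e K_1^p \cdot p!$ for integer $p$, and by Stirling $\Gamma(p+1)^{1/p} \le C p$ for general $p \ge 1$. Taking $p$-th roots gives $\|X\|_p \le K_2 p$ with $K_2 \le C' K_1$. This is exactly the moment computation already carried out inside the proof of Theorem \ref{th: tail bound for matrix completion error}.

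For $(2) \Rightarrow (3)$, I would expand the MGF in a Taylor series, bound $|\E X^p| \le \E |X|^p \le (K_2 p)^p$ using (2), and apply the lower Stirling estimate $p! \ge (p/e)^p$ to get
\[
  \E \exp(X/K_3) \le 1 + \sum_{p \ge 1} \frac{(K_2 p)^p}{p!\, K_3^p} \le 1 + \sum_{p \ge 1} \Big( \frac{e K_2}{K_3} \Big)^p.
\]
Choosing $K_3 \ge 2 e K_2$ makes the geometric series at most $1$, so $\E \exp(X/K_3) \le 2$. For $(3) \Rightarrow (1)$, I would apply Markov's inequality to $\exp(X/K_3)$ (and symmetrically to $\exp(-X/K_3)$, interpreting property (3) in its standard two-sided form) to obtain $\P\{|X| > t\} \le 2\, e^{-t/K_3} \le e \cdot e^{-t/K_3} = \exp(1 - t/K_3)$, giving $K_1 = K_3$.

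There is no real obstacle in this proof; the work is mostly bookkeeping to verify that each $K_i$ depends on the previous one only through an absolute multiplicative constant, so that the equivalence holds uniformly. The one subtlety worth flagging is that property (3) as stated uses $X$ rather than $|X|$, which is harmless for $(2) \Rightarrow (3)$ because $|\E X^p| \le \E |X|^p$, but for the reverse direction $(3) \Rightarrow (1)$ one should implicitly treat (3) as controlling both $\E \exp(X/K_3)$ and $\E \exp(-X/K_3)$ (as in Vershynin's original statement), so that the two-sided tail bound in (1) can be recovered.
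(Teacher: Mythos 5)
The paper offers no proof of this lemma --- it is imported verbatim from Vershynin's survey --- and your cyclic chain $(1)\Rightarrow(2)\Rightarrow(3)\Rightarrow(1)$ is exactly the standard argument given there, so the proposal is correct and matches the intended source. Your flag about property (3) being stated with $X$ rather than $|X|$ is the right subtlety to catch: as literally written, a one-sided MGF bound cannot control the left tail, so (3) must indeed be read two-sidedly (or with $|X|$, as in Vershynin) for $(3)\Rightarrow(1)$ to hold. The only cosmetic slip is that the union bound over the two tails yields $4e^{-t/K_3}$ rather than $2e^{-t/K_3}$, which exceeds $\exp(1-t/K_3)$; this is harmless since taking $K_1 = 2K_3$ absorbs the constant, consistent with the lemma's allowance that the $K_i$ differ by absolute factors.
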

The sub-exponential random variable is defined through the above sub-exponential properties.
\begin{definition} [Sub-exponential random variable, \cite{Roman2012}, Definition~5.13] \label{def: sub-exponential}
  	A random variable $X$ that satisfy one of the equivalent properties $1-3$ in Lemma \ref{th: equivalence of sub-exponential properties} is called a sub-exponential random variable. The sub-exponential norm of $X$, denoted $\|X\|_{\psi_1}$ is defined to be the smallest $K_3$ in property $3$. In other words,
	$$
	\|X\|_{\psi_1} = \inf \{t > 0: \E \exp(X/t) \le 2 \}.
	$$
\end{definition}
The sub-exponential random variables have the following two remarkable properties:
\begin{lemma} [Centering, \cite{Roman2012}, Remark~5.18] \label{th: centering}
  	If $X$ is a sub-exponential random variable then $X - \E X$ is sub-exponential too, and
	$$
	\| X - \E X \|_{\psi_1} \le C \| X \|_{\psi_1},
	$$
	where $C$ is an absolute constant.
\end{lemma}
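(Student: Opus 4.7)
The plan is to deduce the centering bound from two ingredients: a triangle inequality for the sub-exponential norm $\|\cdot\|_{\psi_1}$, together with a bound on the $\psi_1$ norm of the (deterministic) constant $\E X$ in terms of $\|X\|_{\psi_1}$. Throughout I will rely on the equivalence between properties $1$--$3$ in Lemma \ref{th: equivalence of sub-exponential properties}, treating the $\psi_1$ norm as the Orlicz-type norm in Definition \ref{def: sub-exponential}.

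First I would establish the triangle inequality $\|X + Y\|_{\psi_1} \le \|X\|_{\psi_1} + \|Y\|_{\psi_1}$. Writing $a = \|X\|_{\psi_1}$, $b = \|Y\|_{\psi_1}$ and using convexity of the exponential,
\begin{equation*}
  \exp\Bigl(\frac{X+Y}{a+b}\Bigr) = \exp\Bigl(\frac{a}{a+b}\cdot\frac{X}{a} + \frac{b}{a+b}\cdot\frac{Y}{b}\Bigr) \le \frac{a}{a+b}\exp\Bigl(\frac{X}{a}\Bigr) + \frac{b}{a+b}\exp\Bigl(\frac{Y}{b}\Bigr),
\end{equation*}
and taking expectations gives $\E\exp((X+Y)/(a+b)) \le 2$, so $\|X+Y\|_{\psi_1} \le a+b$. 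Applying this with $Y = -\E X$ yields $\|X - \E X\|_{\psi_1} \le \|X\|_{\psi_1} + \|\E X\|_{\psi_1}$.

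Next I would bound $\|\E X\|_{\psi_1}$. Since $c := \E X$ is a constant, $\E\exp(|c|/t) = \exp(|c|/t) \le 2$ iff $t \ge |c|/\ln 2$, so $\|c\|_{\psi_1} = |c|/\ln 2$. To control $|\E X|$, I would invoke Jensen and then property $2$ in Lemma \ref{th: equivalence of sub-exponential properties}:
\begin{equation*}
  |\E X| \le \E|X| = \|X\|_1 \le K_2 \le C_1\|X\|_{\psi_1},
\end{equation*}
where the last step uses the stated equivalence of $K_2$ and $K_3$ up to absolute constants. Combining these bounds,
\begin{equation*}
  \|X - \E X\|_{\psi_1} \le \|X\|_{\psi_1} + \frac{|\E X|}{\ln 2} \le \Bigl(1 + \frac{C_1}{\ln 2}\Bigr)\|X\|_{\psi_1} =: C\|X\|_{\psi_1},
\end{equation*}
which is the desired inequality.

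The only nontrivial step is the triangle inequality for $\|\cdot\|_{\psi_1}$; everything else is a routine consequence of Jensen, the equivalent characterizations of the sub-exponential class, and direct computation of the $\psi_1$ norm of a constant. No step looks like a genuine obstacle, so the proof is essentially bookkeeping around the equivalence constants.
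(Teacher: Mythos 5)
Your argument is correct and is essentially the standard proof of this fact: the paper itself offers no proof (it only cites \cite{Roman2012}, Remark~5.18), and that remark proceeds exactly as you do, via the triangle inequality for $\|\cdot\|_{\psi_1}$ together with the chain $\|\E X\|_{\psi_1} \lesssim |\E X| \le \E |X| \lesssim \|X\|_{\psi_1}$ from the equivalence of the sub-exponential characterizations. The only cosmetic point is to use the convention $\E \exp(|X|/t) \le 2$ consistently (so that $\|\cdot\|_{\psi_1}$ is genuinely a norm and the case $\E X<0$ is not degenerate); your convexity step then just needs the extra observation $|X+Y| \le |X| + |Y|$ before applying monotonicity of $\exp$, which changes nothing.
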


\begin{lemma} [Product of sub-Gaussian is sub-exponential, \cite{Roman2016}, Lemma~2.7.5] \label{th: product of sub-Gaussian is sub-exponential}
  	Let $X$ and $Y$ be sub-Gaussian random variables. Then $XY$ is sub-exponential. Moreover,
	$$
	\|XY\|_{\psi_1} \le \|X\|_{\psi_2}\|Y\|_{\psi_2}.
	$$
\end{lemma}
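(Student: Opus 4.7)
The plan is to exhibit a constant $K$ such that $\E \exp(|XY|/K) \le 2$ with $K = \|X\|_{\psi_2}\|Y\|_{\psi_2}$; by Definition \ref{def: sub-exponential} this will yield both that $XY$ is sub-exponential and the quantitative bound on its $\psi_1$-norm. Write $K_1 = \|X\|_{\psi_2}$ and $K_2 = \|Y\|_{\psi_2}$, and assume both are strictly positive (the degenerate case $K_1 K_2 = 0$ forces $XY = 0$ a.s.\ and is trivial).

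The first step is the pointwise inequality that links the product $XY$ to the squares $X^2$ and $Y^2$. Applying Young's inequality $ab \le \tfrac{1}{2}(a^2 + b^2)$ to $a = |X|/K_1$ and $b = |Y|/K_2$ gives
\begin{equation*}
  \frac{|XY|}{K_1 K_2} \;\le\; \frac{1}{2}\!\left(\frac{X^2}{K_1^2} + \frac{Y^2}{K_2^2}\right).
\end{equation*}
Exponentiating and using convexity of $\exp$ (equivalently, the AM--GM inequality $\exp(\tfrac{u+v}{2}) \le \tfrac{1}{2}(\exp u + \exp v)$), I would then obtain
\begin{equation*}
  \exp\!\left(\frac{|XY|}{K_1 K_2}\right) \;\le\; \frac{1}{2}\exp\!\left(\frac{X^2}{K_1^2}\right) + \frac{1}{2}\exp\!\left(\frac{Y^2}{K_2^2}\right).
\end{equation*}

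The second step is to take expectations and invoke the sub-Gaussian hypothesis on each factor separately. By Definition \ref{def: sub-Gaussian} applied to property 3 of Lemma \ref{th: equivalence of sub-Gaussian properties}, the choice $K_1 = \|X\|_{\psi_2}$ and $K_2 = \|Y\|_{\psi_2}$ gives $\E \exp(X^2/K_1^2) \le 2$ and $\E \exp(Y^2/K_2^2) \le 2$. Therefore
\begin{equation*}
  \E \exp\!\left(\frac{|XY|}{K_1 K_2}\right) \;\le\; \frac{1}{2}\cdot 2 + \frac{1}{2}\cdot 2 \;=\; 2,
\end{equation*}
and since $\exp$ is monotone we also get $\E\exp(XY/(K_1 K_2)) \le \E\exp(|XY|/(K_1 K_2)) \le 2$. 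By the definition of the sub-exponential norm, $XY$ is sub-exponential with $\|XY\|_{\psi_1} \le K_1 K_2 = \|X\|_{\psi_2}\|Y\|_{\psi_2}$.

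There is no real obstacle here; the only care required is the choice to normalize by the exact $\psi_2$-norms (so that each MGF bound hits the threshold $2$ instead of a generic constant), and the use of convexity of $\exp$ rather than Cauchy--Schwarz, which keeps the final constant equal to $1$ in front of $\|X\|_{\psi_2}\|Y\|_{\psi_2}$.
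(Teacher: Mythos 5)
Your proof is correct, and it is essentially the standard argument for this fact: the paper itself offers no proof, citing Vershynin's \emph{High-Dimensional Probability} (Lemma~2.7.7 there), whose proof is exactly your combination of Young's inequality $ab \le \tfrac12(a^2+b^2)$ with the convexity bound $\exp(\tfrac{u+v}{2}) \le \tfrac12(e^u+e^v)$ and the definition of the $\psi_2$-norm. The only point worth making explicit is that $\E\exp(X^2/\|X\|_{\psi_2}^2)\le 2$ holds at the infimum itself (by monotone convergence as $t\downarrow\|X\|_{\psi_2}$), which is what lets you hit the threshold $2$ exactly and obtain the constant $1$ in $\|XY\|_{\psi_1}\le\|X\|_{\psi_2}\|Y\|_{\psi_2}$.
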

  	
\begin{lemma} [Sub-exponential is sub-Gaussian squared, \cite{Roman2012}, Lemma~5.14] \label{th: sub-exponential is sub-Gaussian squared}
  	A random variable $X$ is sub-Gaussian if and only if $X^2$ is sub-exponential. Moreover,
  	$$
	\| X^2 \|_{\psi_1} \le \| X \|^2_{\psi_2}.
	$$
	In particular, if $g \sim N(0, \nu^2)$, we have $g^2$ is sub-exponential, and
	$$
	\| g^2 \|_{\psi_1} \le c \nu^2.
	$$
\end{lemma}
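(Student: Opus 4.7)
The plan is to prove the equivalence directly from the MGF characterizations (property 3) of the sub-Gaussian and sub-exponential norms given in Lemmas \ref{th: equivalence of sub-Gaussian properties} and \ref{th: equivalence of sub-exponential properties}. Recall that
$$
\|X\|_{\psi_2} = \inf\{t > 0: \E \exp(X^2/t^2) \le 2\}, \qquad \|Y\|_{\psi_1} = \inf\{t > 0: \E \exp(Y/t) \le 2\}.
$$
Applying the second definition to $Y = X^2$ and substituting $t = s^2$ yields
$$
\|X^2\|_{\psi_1} = \inf\{s^2 > 0: \E \exp(X^2/s^2) \le 2\} = \big(\inf\{s > 0: \E \exp(X^2/s^2) \le 2\}\big)^2 = \|X\|_{\psi_2}^2.
$$
This identity immediately gives both directions of the equivalence as well as the sharp inequality $\|X^2\|_{\psi_1} \le \|X\|_{\psi_2}^2$.

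For the Gaussian specialization, I would verify that if $g \sim N(0,\nu^2)$ then $\|g\|_{\psi_2} \le c'\nu$ for an absolute constant $c'$. This follows from a direct computation of the MGF of $g^2$: for any $t$ with $2t^2 > \nu^2$,
$$
\E \exp(g^2/t^2) = \frac{1}{\sqrt{1 - 2\nu^2/t^2}},
$$
and picking, say, $t = 2\nu$ makes this at most $2$. Hence $\|g\|_{\psi_2} \le 2\nu$, and combining with the general inequality above gives $\|g^2\|_{\psi_1} \le \|g\|_{\psi_2}^2 \le 4\nu^2 =: c\nu^2$.

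There is no real obstacle here; the only subtlety is making sure the infimum manipulation $\inf\{s^2 : \cdot\} = (\inf\{s : \cdot\})^2$ is justified, which is immediate because the event $\{\E \exp(X^2/s^2) \le 2\}$ is monotone in $s$ (larger $s$ makes the MGF smaller), so the infimum is attained as a threshold and squaring commutes with it. The whole argument is essentially a change of variables in the definition, so the proof should be short and self-contained given the equivalence lemmas already cited from the excerpt.
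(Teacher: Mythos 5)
Your proof is correct, and in fact slightly stronger than the stated lemma: the change of variables $t = s^2$ in the definition of $\|X^2\|_{\psi_1}$ gives the identity $\|X^2\|_{\psi_1} = \|X\|_{\psi_2}^2$, not merely the inequality, and the monotonicity argument you give for commuting the square with the infimum is exactly what is needed. Note that the paper does not prove this lemma at all --- it is quoted verbatim from the cited reference \cite{Roman2012} --- so there is no internal proof to compare against; your argument is the standard one from that reference. One small slip in the Gaussian specialization: the condition for $\E \exp(g^2/t^2) = (1 - 2\nu^2/t^2)^{-1/2}$ to be valid is $t^2 > 2\nu^2$, not $2t^2 > \nu^2$ as written; your choice $t = 2\nu$ satisfies the correct condition, and indeed gives $\E\exp(g^2/4\nu^2) = \sqrt{2} \le 2$, so the conclusion $\|g\|_{\psi_2} \le 2\nu$ and hence $\|g^2\|_{\psi_1} \le 4\nu^2$ stands.
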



\begin{lemma} [Bernstein-type inequality] \label{th: Modified Bernstein inequality}
  	Let $X$ be a centered random variable satisfying
	$$
	\P \big\{ X \ge t \big\} \le p \exp(-ct),
	$$
	where $0 < p < 1$ and $c$ are constant numbers. Let $X_1,\dots,X_n$ be independent copies of $X$. Then
	\begin{align*}
		\P \Big\{ \sum_{i=1}^{n} X_i \ge t \Big\}
		&\le \exp \Big[ -c \cdot \min \Big( \frac{t^2}{np}, t \Big) \Big] \nonumber \\
		&\le \left\{
	  		\begin{array}{ll}
		  		\exp \big(-\dfrac{ct^2}{np} \big), & t \le np, \\
				\exp \big(-ct \big), & t > np.
			\end{array}
		\right.
	\end{align*}
\end{lemma}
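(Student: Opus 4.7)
The plan is the classical Cram\'er--Chernoff argument, the key point being to track how the small-probability factor $p$ enters the MGF as a variance proxy of order $p/c^2$ rather than $1/c^2$. The first step is to convert the tail assumption into moment bounds via the integral identity (Lemma \ref{th: Integral identity}): for every $k\ge 1$,
\begin{equation*}
\E(X_+)^k \;=\; \int_0^\infty k t^{k-1}\,\P\{X\ge t\}\,dt \;\le\; p\int_0^\infty k t^{k-1} e^{-ct}\,dt \;=\; \frac{p\,k!}{c^k}.
\end{equation*}
Combining this with $\E X = 0$ (which gives $\E X_- = \E X_+$), and using the two-sided version of the hypothesis that holds in the intended applications (equivalently $\P\{|X|\ge t\}\le 2p e^{-ct}$), I obtain $\E|X|^k \le C\,p\,k!\,c^{-k}$ for all $k\ge 2$.

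Next I would plug these moment bounds into the Taylor expansion of the MGF. Because $\E X = 0$, for $0 < \lambda < c$,
\begin{equation*}
\E e^{\lambda X} \;=\; 1 + \sum_{k\ge 2}\frac{\lambda^k}{k!}\,\E X^k \;\le\; 1 + Cp\sum_{k\ge 2}\Bigl(\frac{\lambda}{c}\Bigr)^k \;=\; 1 + \frac{Cp(\lambda/c)^2}{1-\lambda/c}.
\end{equation*}
Restricting to $\lambda\le c/2$ makes the geometric factor at most $2$, which gives the sub-Gaussian-type MGF bound $\E e^{\lambda X}\le \exp(2Cp\lambda^2/c^2)$. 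By independence of the $X_i$, the MGF of $S_n=\sum_{i=1}^n X_i$ satisfies $\E e^{\lambda S_n}\le \exp(2Cnp\lambda^2/c^2)$.

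Finally I would apply Markov's inequality and optimize in $\lambda$. For any $\lambda\in(0,c/2]$,
\begin{equation*}
\P\{S_n\ge t\} \;\le\; e^{-\lambda t}\,\E e^{\lambda S_n} \;\le\; \exp\!\Bigl(-\lambda t + \frac{2Cnp\lambda^2}{c^2}\Bigr).
\end{equation*}
The unconstrained minimizer is $\lambda^\star = c^2 t/(4Cnp)$. In the small-deviation regime $t\le 2Cnp/c$ we have $\lambda^\star\le c/2$, and choosing $\lambda=\lambda^\star$ yields the Gaussian-type bound $\exp(-c' t^2/(np))$. In the large-deviation regime $t>2Cnp/c$, take $\lambda=c/2$; then $2Cnp\lambda^2/c^2 = Cnp/2 \le ct/4$, so the exponent is at most $-ct/4$, giving the exponential tail $\exp(-c'' t)$. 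Combining both cases produces exactly the $\min\bigl(t^2/(np),\,t\bigr)$ structure in the statement.

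The main obstacle is the first step: the hypothesis controls only the upper tail of $X$, whereas bounding $\E X^k$ for even $k$ requires information on the lower tail as well. I would handle this either by reading the hypothesis in its two-sided form (which is what is actually used whenever the lemma is invoked in this paper, since the relevant $X_i$ are symmetric or differ from a non-negative variable only by its mean) or, when $X$ is bounded below by $-M$ for a constant $M$ of order $1/c$, by noting that $\E X^{2k}_- \le M\cdot\E X_- = M\cdot \E X_+\le M p/c$, which is absorbed into the same constant $C$. Once the two-sided moment bound is in hand, the remainder of the proof is a routine geometric-series computation and the Chernoff optimization above.
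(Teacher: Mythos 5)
Your proposal is correct and follows essentially the same route as the paper: the integral identity to turn the tail assumption into moment bounds, a geometric-series estimate of the MGF yielding $\E e^{\lambda X}\le \exp(Cp\lambda^2)$ for $\lambda$ in a bounded range, and a Chernoff bound optimized over $\lambda$ with the small/large-deviation split producing the $\min(t^2/(np),\,t)$ exponent. You also correctly flag (and repair) the one point the paper glosses over: the hypothesis controls only the upper tail of $X$, yet the moment computation needs $\P\{|X|\ge t\}$, which the paper's first display silently assumes; your two-sided reading (or the lower-boundedness argument) is exactly what is needed in the lemma's actual application.
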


\begin{proof}
  	See Appendix \ref{app:G}.
\end{proof}

Now we are ready to prove Lemma \ref{th: Maximum of binomials}. \\
(1) By integral identity (Lemma \ref{th: Integral identity}), we know that
\begin{align*}
	\E \max_{1 \le i \le n} X_i
	&= \int_{0}^{\infty} \P \{ \max_{1 \le i \le n} X_i > t \} dt \\
	&\le \int_{0}^{s} 1 dt + \int_{s}^{\infty} \P \{ \max_{1 \le i \le n} X_i > t \} dt \\
	&= s + \int_{s}^{\infty} \P \{ \bigcup\limits_{1 \le i \le n} X_i > t \} dt \\
	&\le s + \int_{s}^{\infty} \sum_{i=1}^{n} \P \{ X_i > t \} dt \\
	&= s + \sum_{i=1}^{n} \int_{s}^{\infty} \P \{ X_i > t \} dt.
\end{align*}
If we let $s = e^2np$, by Chernoff's inequality (Lemma \ref{th: Chernoff's inequality}), we have
\begin{align*}
	\E \max_{1 \le i \le n} X_i
	&\le s + \sum_{i=1}^{n} \int_{s}^{\infty} \P \{ X_i > t \} dt \\
	&\le e^2np + \sum_{i=1}^{n} \int_{e^2np}^{\infty} e^{-t} dt \\
	&= e^2np + ne^{-e^2np}.
\end{align*}
Recall that we have assumed that $np \ge \log n$, thus
\begin{align*}
	\E \max_{1 \le i \le n} X_i
	&\le e^2np + ne^{-e^2np} \\
	&\le e^2np + ne^{-e^2 \log n} \\
	&= e^2np + n^{1-e^2} \\
	&\le (e^2+1)np,
\end{align*}
where the last inequality comes from the fact that $n^{-e^2} \le n^{-1} \le \frac{\ln n}{n} \le p$. This completes the proof of the first part of Lemma \ref{th: Maximum of binomials}.

(2) The proof of the second part is similar to that of the first part. 
First,
\begin{align*}
&\E \max_{1 \le i \le n} \| (\mtx{W} \odot \mtx{G})_{i \cdot} \|_2^2 \\
&\le \E \max_{1 \le i \le n} \sum_{j=1}^{n} \mtx{W}_{ij}^2 (\mtx{G}_{ij}^2-1)
+ \E \max_{1 \le i \le n} \sum_{j=1}^{n} \mtx{W}_{ij}^2.
\end{align*}
The second term can be bounded by part (1):
$$
\E \max_{1 \le i \le n} \sum_{j=1}^{n} \mtx{W}_{ij}^2 \le c_1 np.
$$
For the first term, let $\mT_{ij} = \mtx{W}_{ij}^2(\mtx{G}_{ij}^2-1)$, then for any $t > 0$, we have
\begin{align} \label{eq: tail of Tij}
  	\P \big\{ \mT_{ij} \ge t \big\} \nonumber
	&= \P \big\{ \mtx{W}_{ij}^2(\mtx{G}_{ij}^2-1) \ge t | \mtx{W}_{ij} = 1 \big\} \P \{ \mtx{W}_{ij} = 1 \} \nonumber\\
	&\hspace{11.1pt}+ \P \big\{ \mtx{W}_{ij}^2(\mtx{G}_{ij}^2-1) \ge t | \mtx{W}_{ij} = 0 \big\} \P \{ \mtx{W}_{ij} = 0 \} \nonumber\\
	&= p\P \big\{ \mtx{G}_{ij}^2-1 \ge t \big\} \nonumber\\
	&\le p \exp(-ct),
\end{align}
where the last inequality holds because $\mtx{G}_{ij}^2-1 = \mtx{G}_{ij}^2 - \E \mtx{G}_{ij}^2$ is a sub-exponential random variable by Lemma \ref{th: centering} and \ref{th: sub-exponential is sub-Gaussian squared}, and $\|\mtx{G}_{ij}^2-1\|_{\psi_1} \leq c'\|\mtx{G}_{ij}^2\|_{\psi_1} \leq c'\|\mtx{G}_{ij}\|_{\psi_2}^2 \le c''$, where $c'$ and $c''$ are absolute constants. Thus, Lemma \ref{th: Modified Bernstein inequality} can be used to bound the tail probability of $\sum_{j=1}^{n}\mT_{ij}$:
\begin{align} \label{eq: modified Bernstein inequality}
	\P \Big\{ \sum_{j=1}^{n} \mT_{ij} \ge t \Big\}
	&\le \exp \Big[ -c \cdot \min \Big( \frac{t^2}{np}, t \Big) \Big] \nonumber \\
	&\le \left\{
	  	\begin{array}{ll}
		  	\exp \big(-\dfrac{ct^2}{np} \big), & t \le np, \\
			\exp \big(-ct \big), & t > np.
		\end{array}
	\right.
\end{align}
Similarly, we will use integral identity (Lemma \ref{th: Integral identity}) to bound the first term:
\begin{align*}
  	&\E \max_{1 \le i \le n} \sum_{j=1}^{n} \mtx{W}_{ij}^2 (\mtx{G}_{ij}^2-1) \\
	&= \int_{0}^{\infty} \P \big\{ \max_{1 \le i \le n} \sum_{j=1}^{n} \mtx{W}_{ij}^2 (\mtx{G}_{ij}^2-1) \ge t \big\} dt \\
	&\le \int_{0}^{np} 1 dt + n\int_{np}^{\infty} \P \big\{ \sum_{j=1}^{n} \mtx{W}_{ij}^2 (\mtx{G}_{ij}^2-1) \ge t \big\} dt \\
	&\le np + n\int_{np}^{\infty} \exp \big(-ct \big) dt \\
	&\le np + n\exp(-cnp) \\
	&\le c_2np,
\end{align*}
where the last inequality comes from the fact that $np \ge \log n$, and $c_2$ is an absolute constant. Combining the two terms, we get
$$
\E \max_{1 \le i \le n} \| (\mtx{W} \odot \mtx{G})_{i \cdot} \|_2^2 \le c_3 np.
$$
Then, Jensen's inequality completes the proof:
$$
\E \max_{1 \le i \le n} \| (\mtx{W} \odot \mtx{G})_{i \cdot} \|_2 \le \big[ \E \max_{1 \le i \le n} \| (\mtx{W} \odot \mtx{G})_{i \cdot} \|_2^2\big]^{1/2} \le c \sqrt{np}.
$$

\section{Lemmas Used for Proof of Theorem \ref{th: tail bound for matrix completion error}} \label{app: add_b}

\begin{lemma} [Matrix Bernstein inequality: sub-exponential case, \cite{Tropp2012}, Lemma~6.2] \label{th: matrix bernstein inequality}
  	Consider a finite sequence ${\mtx{X}_k}$ of independent, random, self-adjoint matrices with dimension $n$. Assume that each random matrix has zero mean, and satisfies
	$$
	\E \mtx{X}_k = 0 \ \ \textnormal{and} \ \ \E(\mtx{X}_k^p) \preceq \frac{p!}{2}\cdot R^{p-2}\cdot \mtx{A}_k^2 \ \textnormal{for }p = 2,3,4\dots
	$$
	Compute the variance parameter
	$$
  \sigma^2 := \Big\| \sum_k \mtx{A}_k^2 \Big\|.
	$$
	Then the following holds for all $t \ge 0$,
	$$
	\P \Big\{ \Big\| \sum_{k} \mtx{Z}_k \Big\| \ge t \Big\} \le n\cdot \exp \Big[ -c \cdot \min \Big( \frac{t^2}{\sigma^2}, \frac{t}{R} \Big) \Big],
	$$
	where $c$ is an absolute constant.
\end{lemma}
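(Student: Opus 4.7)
My plan is to follow the standard matrix Laplace transform strategy pioneered by Ahlswede--Winter and refined by Tropp. The starting point is the master tail inequality: for any self-adjoint random matrix $\mtx{Y}$ of dimension $n$ and any $\theta > 0$,
\[
\P\{\|\mtx{Y}\| \ge t\} \le 2n \cdot \inf_{\theta > 0} e^{-\theta t} \,\E\operatorname{tr}\exp(\theta \mtx{Y}),
\]
applied once to $\mtx{Y} = \sum_k \mtx{X}_k$ and once to $-\mtx{Y}$. The task then reduces to controlling the trace MGF of the sum.

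The second step is Lieb's concavity theorem, which (together with Jensen's inequality applied iteratively over the independent summands) yields the subadditivity bound
\[
\E\operatorname{tr}\exp\!\Bigl(\theta \sum_k \mtx{X}_k\Bigr) \le \operatorname{tr}\exp\!\Bigl(\sum_k \log \E e^{\theta \mtx{X}_k}\Bigr).
\]
So I have reduced everything to estimating the cumulant-generating matrix $\log \E e^{\theta \mtx{X}_k}$ of a single summand, using only the moment hypothesis $\E \mtx{X}_k^p \preceq \tfrac{p!}{2}R^{p-2}\mtx{A}_k^2$.

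The third step is the sub-exponential MGF estimate. Expanding $e^{\theta \mtx{X}_k} = \mI + \theta\mtx{X}_k + \sum_{p\ge 2}\tfrac{\theta^p}{p!}\mtx{X}_k^p$ and taking expectations, the zero-mean assumption kills the linear term and the moment bound gives
\[
\E e^{\theta \mtx{X}_k} \preceq \mI + \sum_{p\ge 2} \tfrac{\theta^p}{2}R^{p-2}\mtx{A}_k^2 = \mI + \tfrac{\theta^2}{2(1-\theta R)}\mtx{A}_k^2
\]
for $0 < \theta < 1/R$. Since $\log(1+x) \le x$ in the operator sense when applied to the PSD perturbation $\tfrac{\theta^2}{2(1-\theta R)}\mtx{A}_k^2$, we get
\[
\log \E e^{\theta \mtx{X}_k} \preceq \tfrac{\theta^2}{2(1-\theta R)}\mtx{A}_k^2.
\]
Summing over $k$, pulling the norm out of $\operatorname{tr}\exp$, and plugging back into the master bound produces
\[
\P\{\|\textstyle\sum_k \mtx{X}_k\| \ge t\} \le 2n\exp\!\Bigl(-\theta t + \tfrac{\theta^2 \sigma^2}{2(1-\theta R)}\Bigr).
\]

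The final step is to optimize in $\theta \in (0, 1/R)$. Choosing $\theta = t/(\sigma^2 + Rt)$ (the standard Bernstein choice) yields the exponent $-\tfrac{t^2/2}{\sigma^2 + Rt}$, which in turn is bounded above by $-c\min(t^2/\sigma^2, t/R)$ for an appropriate absolute constant $c$, giving the stated conclusion after absorbing the harmless factor $2$ into $n$ (and $c$).

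The main obstacle I anticipate is not any single step but the careful bookkeeping of the operator-monotone/concave inequalities: verifying that $\log(\mI+\mtx{X}) \preceq \mtx{X}$ in the required sense, and invoking Lieb's theorem with independence correctly. Everything else is routine sub-exponential Bernstein optimization.
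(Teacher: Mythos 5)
The paper offers no proof of this lemma at all --- it is imported as a black box from Tropp's survey (his Theorem 6.2, restated with the spectral norm in place of $\lambda_{\max}$) --- so there is nothing in-paper to compare against; what you have written is, in substance, Tropp's own proof, and it is correct. The chain of steps all checks out: the master Laplace-transform bound, subadditivity of the cumulant-generating matrices via Lieb's theorem, the geometric-series estimate $\E e^{\theta \mtx{X}_k} \preceq \mI + \tfrac{\theta^2}{2(1-\theta R)}\mtx{A}_k^2$ for $0<\theta<1/R$, the eigenvalue-wise inequality $\log(\mI+\mtx{T})\preceq \mtx{T}$ for positive semidefinite $\mtx{T}$ (legitimate here because $\log(\mI+\mtx{T})$ and $\mtx{T}$ commute, and one also needs operator monotonicity of $\log$ to pass from $\E e^{\theta\mtx{X}_k}$ to its upper bound --- worth stating explicitly), and the Bernstein choice $\theta=t/(\sigma^2+Rt)$ giving exponent $-t^2/(2(\sigma^2+Rt))\le-\tfrac14\min(t^2/\sigma^2,t/R)$. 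One caveat you should make explicit: Tropp's theorem controls $\lambda_{\max}\bigl(\sum_k\mtx{X}_k\bigr)$, and your upgrade to the two-sided bound on $\bigl\|\sum_k\mtx{X}_k\bigr\|$ by ``applying the bound to $-\mtx{Y}$'' tacitly uses the moment condition for $-\mtx{X}_k$ as well, i.e. $\E(-\mtx{X}_k)^p\preceq\tfrac{p!}{2}R^{p-2}\mtx{A}_k^2$, which for odd $p$ does not follow from the hypothesis on $\E\mtx{X}_k^p$ alone. This is a defect of the lemma as the paper states it as much as of your write-up (and it is harmless in the paper's application, where the moment bound is derived from $|\E\mtx{A}_{ij}^k|$ and so controls both signs), but it deserves a sentence. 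The absorption of the factor $2$ into the constant is fine, since the bound is vacuous whenever the exponent is small.
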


\section{Lemmas Used for Proof of Theorem \ref{th: minimax lower bound for symmetric matrix completion}} \label{app: add_c}

\begin{lemma} [\cite{Tsybakov2009}, pp.~79-80] \label{th: from estimation to testing}
	For any test function $\Phi: \mtx{Y} \rightarrow \mtx{D}_u \in \mathcal{D}_0$, the minimax risk $\mathfrak{R}$ in (\ref{eq: minimax definition}) has the following lower bound:
	\begin{equation*} \label{eq: from estimation to testing}
  		\mathfrak{R} \ge \frac{\delta}{2} \P \{ \Phi (\mtx{Y}) \neq \mtx{D} \}.
	\end{equation*}
\end{lemma}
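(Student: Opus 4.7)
The plan is to use the classical testing-to-estimation reduction (Tsybakov, Chapter 2). The argument proceeds in two stages: first reduce the supremum over $\mathcal{D}(r)$ in the definition of $\mathfrak{R}$ to an average over the finite $\delta$-packing $\mathcal{D}_0$, and then relate that average risk to a testing error probability via a minimum-distance decoder built from the estimator.

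Since $\mathcal{D}_0 \subseteq \mathcal{D}(r)$, for any estimator $\hat{\mtx{D}}$ I can write
$$
\sup_{\mtx{D} \in \mathcal{D}(r)} \E \|\hat{\mtx{D}} - \mtx{D}\|_F \ge \sup_{\mtx{D} \in \mathcal{D}_0} \E \|\hat{\mtx{D}} - \mtx{D}\|_F \ge \frac{1}{|\mathcal{D}_0|} \sum_{u} \E_{u} \|\hat{\mtx{D}} - \mtx{D}_u\|_F,
$$
using the standard ``max $\ge$ average'' inequality, which is equivalent to placing the uniform prior on $\mathcal{D}_0$. Next, to each $\hat{\mtx{D}}$ I would associate the minimum-distance test $\Phi(\mtx{Y}) := \arg\min_{\mtx{D}_u \in \mathcal{D}_0} \|\hat{\mtx{D}}(\mtx{Y}) - \mtx{D}_u\|_F$, with ties broken arbitrarily. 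The key deterministic observation is that on the event $\{\Phi(\mtx{Y}) \neq \mtx{D}\}$, the $\delta$-packing property gives $\|\Phi(\mtx{Y}) - \mtx{D}\|_F \ge \delta$, while the arg-min property (so that $\Phi(\mtx{Y})$ is at least as close to $\hat{\mtx{D}}$ as $\mtx{D}$ is) together with the triangle inequality yields
$$
\delta \le \|\Phi(\mtx{Y}) - \mtx{D}\|_F \le \|\Phi(\mtx{Y}) - \hat{\mtx{D}}\|_F + \|\hat{\mtx{D}} - \mtx{D}\|_F \le 2 \|\hat{\mtx{D}} - \mtx{D}\|_F,
$$
so $\|\hat{\mtx{D}} - \mtx{D}\|_F \ge \delta/2$ whenever the test errs. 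A Markov-type inequality then gives, under the uniform prior on $\mtx{D} \in \mathcal{D}_0$,
$$
\E \|\hat{\mtx{D}} - \mtx{D}\|_F \ge \tfrac{\delta}{2}\, \P\bigl\{\|\hat{\mtx{D}} - \mtx{D}\|_F \ge \delta/2\bigr\} \ge \tfrac{\delta}{2}\, \P\{\Phi(\mtx{Y}) \neq \mtx{D}\}.
$$

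Chaining the two displays and taking the infimum over $\hat{\mtx{D}}$ yields $\mathfrak{R} \ge \tfrac{\delta}{2}\, \P\{\Phi(\mtx{Y}) \neq \mtx{D}\}$ for the specific $\Phi$ so constructed; since this $\Phi$ is one concrete test function from $\mtx{Y}$ into $\mathcal{D}_0$, the stated bound follows, and the right-hand side is further lower bounded by $\tfrac{\delta}{2} \inf_{\Phi'} \P\{\Phi'(\mtx{Y}) \neq \mtx{D}\}$ over all such tests, which is the form invoked in the subsequent Fano step. I do not anticipate a substantive obstacle: each step is a one-line application of a set inclusion, the triangle inequality, Markov's inequality, or the max-$\ge$-average principle. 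The only minor care points are ensuring the arg-min is a well-defined measurable function and verifying that the expectation in the Markov step is taken with respect to the same joint law used on the right-hand side, namely the uniform prior over $\mathcal{D}_0$ combined with the sampling model for $\mtx{Y}$ given $\mtx{D}$.
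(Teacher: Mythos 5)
Your proof is correct and is exactly the standard estimation-to-testing reduction via the minimum-distance test that the paper imports from Tsybakov (pp.~79--80) without reproving: restrict the supremum to the packing set, pass to the uniform average, and combine the $\delta$-separation with the triangle inequality and Markov's inequality. One small point worth noting: as literally phrased (``for any test function $\Phi$'') the lemma is too strong---the bound is established only for the particular minimum-distance test associated with the estimator, and therefore for $\inf_{\Phi}\P\{\Phi(\mtx{Y})\neq\mtx{D}\}$, which is exactly the form the paper actually invokes in the Fano step and exactly what your argument delivers.
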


\begin{lemma} [Fano inequality, \cite{Cover2008}, Theorem 2.10.1] \label{th: Fano inequality}
  	Suppose $V$ is a random variable taking values in a finite set $\mathcal{V}$. For any Markov chain $V \rightarrow X \rightarrow \hat{V}$, we have
	\begin{equation} \label{eq: Fano inequality}
  		h_2 \big( \P ( \hat{V} \neq V ) \big) + \P ( \hat{V} \neq V ) \log (|\mathcal{V}|-1) \ge H(V|\hat{V}),
	\end{equation}
	where the function $h_2(p) = -p\log p - (1-p)\log (1-p)$ denotes the entropy of the Bernoulli random variable with parameter $p$, and $H(V|\hat{V})$ denotes the entropy of $V$ conditioned on $\hat{V}$.
\end{lemma}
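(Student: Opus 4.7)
The plan is to prove Fano's inequality by introducing an auxiliary error indicator and exploiting the chain rule for conditional entropy expanded in two different orders. Specifically, define the Bernoulli random variable $E := \mathbb{1}\{\hat V \neq V\}$, so that $\mathbb{P}(E=1) = \mathbb{P}(\hat V \neq V) =: P_e$. I would then compute $H(E, V \mid \hat V)$ two ways and equate the results.

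First I would apply the chain rule in the order $(V, E)$:
\begin{equation*}
H(E, V \mid \hat V) = H(V \mid \hat V) + H(E \mid V, \hat V) = H(V \mid \hat V),
\end{equation*}
where the second term vanishes because $E$ is a deterministic function of $(V, \hat V)$. Next I would expand in the reverse order $(E, V)$:
\begin{equation*}
H(E, V \mid \hat V) = H(E \mid \hat V) + H(V \mid E, \hat V).
\end{equation*}
The first term is controlled by dropping the conditioning, $H(E \mid \hat V) \le H(E) = h_2(P_e)$, which is where the binary entropy appears. For the second term I would split into cases: when $E = 0$ we have $V = \hat V$, so $H(V \mid E=0, \hat V) = 0$; when $E = 1$, $V$ takes values in $\mathcal{V} \setminus \{\hat V\}$, which has cardinality $|\mathcal V| - 1$, so $H(V \mid E=1, \hat V) \le \log(|\mathcal V| - 1)$. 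Averaging over $E$ yields $H(V \mid E, \hat V) \le P_e \log(|\mathcal V| - 1)$. Combining the two expansions gives the desired inequality.

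The main subtlety, though not really an obstacle, is the case-analysis bound on $H(V \mid E, \hat V)$: one must justify that conditioning further on the value of $\hat V$ does not enlarge the support of $V$ beyond $\mathcal V \setminus \{\hat V\}$ when $E = 1$, which is immediate from the definition of $E$. Note that the Markov chain assumption $V \to X \to \hat V$ plays no explicit role in this derivation; it is standard context for the data-processing stage used elsewhere (e.g., to subsequently upper bound $H(V \mid \hat V)$ via a mutual-information term $I(V; X)$), but the inequality as stated holds for any joint distribution on $(V, \hat V)$. Since the result is the textbook Fano inequality from \cite{Cover2008}, I would simply cite it rather than reproduce the full derivation in the paper, and the sketch above indicates how it follows from two applications of the chain rule.
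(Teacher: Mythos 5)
Your proposal is correct, and it matches the paper's treatment: the paper states this lemma as a cited textbook result (Cover and Thomas, Theorem 2.10.1) and gives no proof of its own, which is exactly what you propose to do. Your accompanying sketch is the standard derivation via the error indicator $E=\mathbb{1}\{\hat V\neq V\}$ and the two orderings of the chain rule for $H(E,V\mid \hat V)$, and your observation that the Markov chain hypothesis is not needed for this inequality itself (only for the subsequent data-processing bound $I(V;\hat V)\le I(V;X)$ used in Lemma \ref{th: Fano inequality, uniform}) is accurate.
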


Moreover, if we assume that $V$ takes value u.a.r. on $\mathcal{V}$, Lemma \ref{th: Fano inequality} becomes

\begin{lemma} \label{th: Fano inequality, uniform}
	Assume that $V$ is uniform on $\mathcal{V}$. For any Markov chain $V \rightarrow X \rightarrow \hat{V}$,
	\begin{equation*} \label{eq: Fano inequality, uniform}
  		\P ( \hat{V} \neq V ) \ge 1 - \frac{I(V;X)+\log 2}{\log |\mathcal{V}|},
	\end{equation*}
	where $I(V;X)$ denotes the mutual information of random variable $V$ and $X$.
\end{lemma}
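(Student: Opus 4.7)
The plan is to apply Fano's method in the standard two-step reduction: first reduce the estimation problem to a multi-way hypothesis test over a finite packing of $\mathcal{D}(r)$, then bound the testing error from below via Fano's inequality applied to the conditional mutual information $I(\mD; \mY \mid \mO)$. Because the paper has already catalogued the auxiliary lemmas I need (from-estimation-to-testing, conditional Fano, KL under dependence, and a packing-set existence lemma), my task is to assemble them and optimize the free parameter $\delta$.

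First I would fix a separation $\delta > 0$ (to be chosen last) and invoke Lemma \ref{th: from estimation to testing} on a $\delta$-packing $\mathcal{D}_0 \subset \mathcal{D}(r)$, obtaining
$$
\mathfrak{R} \ge \frac{\delta}{2}\inf_{\Phi} \P\bigl\{\Phi(\mY) \neq \mD\bigr\}.
$$
Because $\mO$ is itself random, I would condition on $\mO$ when applying Fano's inequality (Lemma \ref{th: Fano inequality, uniform}) and then take expectation over $\mO$, producing
$$
\mathfrak{R} \ge \frac{\delta}{2}\Bigl(1 - \frac{\E_{\mO}\, I(\mD;\mY \mid \mO) + \log 2}{\log |\mathcal{D}_0|}\Bigr).
$$

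Next I would upper-bound the mutual information using convexity of KL divergence, reducing to pairwise terms $D_{KL}(P_u \| P_v)$. The symmetry of $\mEn$ makes the lower-triangular noise entries deterministic functions of the upper-triangular ones, so I would invoke Lemma \ref{th: K-L divergence under dependence} to restrict attention to the independent upper-triangular entries, turning each pairwise KL into a sum of one-dimensional Gaussian KL contributions. Taking expectation over $\mO$ yields the clean bound $\E_{\mO}\, D_{KL}(P_u \| P_v) \le (p/(2\nu^2)) \|\mD_u - \mD_v\|_F^2$, hence $\E_{\mO}\, I(\mD;\mY \mid \mO)$ is at most $(p/(2\nu^2))$ times the average squared diameter of the packing.

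The last ingredient is a packing of zero-diagonal symmetric rank-$r$ matrices with $|\mathcal{D}_0| \ge \exp(rn/128)$ whose pairwise Frobenius distances lie between $\delta$ and $O(\delta)$, so that the averaged KL in the Fano bound is controlled by $p\delta^2/\nu^2$. Lemma \ref{th: size of packing set} supplies exactly this; a natural concrete realisation is a Varshamov--Gilbert argument on block-structured sign matrices having zero diagonal blocks and a sign-valued off-diagonal block $\mtx{C}_u$ of rank $\lfloor r/2\rfloor$, which simultaneously enforces the symmetry, zero-diagonal, and low-rank constraints. Plugging everything into the Fano bound reduces the problem to
$$
\mathfrak{R} \ge \frac{\delta}{2}\Bigl(1 - \frac{p\delta^2/(2\nu^2) + \log 2}{rn/128}\Bigr);
$$
choosing $\delta = \nu\sqrt{rn/(256p)}$ and assuming the mild condition $rn \ge 512\log 2$ forces the bracketed factor to be at least $1/2$, and substituting $p = m/n^2$ finally delivers $\mathfrak{R} \ge \delta/4 = (n\nu/64)\sqrt{rn/m}$.

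The main obstacle is the combinatorial construction underlying Lemma \ref{th: size of packing set}: one must enforce the low-rank, symmetric, and zero-diagonal constraints \emph{simultaneously} while retaining exponentially many well-separated matrices together with a matching diameter upper bound (both bounds are needed: the lower bound gives the separation $\delta$, while the upper bound prevents the averaged KL from blowing up). A secondary subtlety is the symmetry-induced dependence of the noise entries, which prevents a direct off-the-shelf application of Fano on i.i.d.\ Gaussian observations and forces the detour through Lemma \ref{th: K-L divergence under dependence} to collapse the KL onto the independent upper-triangular coordinates.
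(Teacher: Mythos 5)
Your proposal does not prove the statement you were asked to prove. The statement is Lemma \ref{th: Fano inequality, uniform} itself --- the uniform-prior form of Fano's inequality, $\P(\hat V \neq V) \ge 1 - \bigl(I(V;X)+\log 2\bigr)/\log|\mathcal{V}|$ for a Markov chain $V \rightarrow X \rightarrow \hat V$ with $V$ uniform on $\mathcal{V}$. What you have written instead is an outline of the proof of Theorem \ref{th: minimax lower bound for symmetric matrix completion}, and in the course of that outline you explicitly \emph{invoke} Lemma \ref{th: Fano inequality, uniform} as a known tool (``I would condition on $\mO$ when applying Fano's inequality (Lemma \ref{th: Fano inequality, uniform})''). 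As an argument for the lemma this is circular; as an argument for the theorem it answers a different question, so the packing construction, the KL computation, and the optimization over $\delta$ are all beside the point here.

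The lemma itself requires only a short derivation from the general Fano inequality (Lemma \ref{th: Fano inequality}), which the paper treats as immediate (``Lemma \ref{th: Fano inequality} becomes \ldots''). Concretely: since $V$ is uniform, $H(V|\hat V) = H(V) - I(V;\hat V) = \log|\mathcal{V}| - I(V;\hat V)$. Substituting this into \eqref{eq: Fano inequality} and using the crude bounds $h_2\bigl(\P(\hat V \neq V)\bigr) \le \log 2$ and $\log(|\mathcal{V}|-1) \le \log|\mathcal{V}|$ gives $\P(\hat V \neq V)\,\log|\mathcal{V}| \ge \log|\mathcal{V}| - I(V;\hat V) - \log 2$. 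Finally, the data processing inequality applied to the Markov chain $V \rightarrow X \rightarrow \hat V$ gives $I(V;\hat V) \le I(V;X)$, and dividing by $\log|\mathcal{V}|$ yields the claimed bound. That three-line argument is the content you needed to supply.
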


\begin{lemma} [Bounding the mutual information, \cite{Cover2008}] \label{th: bounding the mutual information}
  	\begin{align*} 
  		I(\mtx{D};\mtx{Y}|\mtx{\Omega})
		&= \frac{1}{|\mathcal{D}_0|} \sum_{u:\mtx{D}_u \in \mathcal{D}_0} D_{KL}(P_u \| P_{\mtx{Y}}) \\
		&\le \frac{1}{|\mathcal{D}_0|^2} \sum_{u,v:\mtx{D}_u \in \mathcal{D}_0, \mtx{D}_v \in \mathcal{D}_0} D_{KL}(P_u \| P_v).
	\end{align*}
\end{lemma}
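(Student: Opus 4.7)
The statement has two parts: a standard identity writing mutual information as an average KL divergence, and a one-line convexity bound. The plan is to establish each directly from definitions, working throughout with the conditional distributions of $\mtx{Y}$ given $\mtx{\Omega}$.

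For the equality, I would fix $\mtx{\Omega}$ and introduce the marginal $P_{\mtx{Y}} = \frac{1}{|\mathcal{D}_0|}\sum_{u} P_u$, which is the conditional law of $\mtx{Y}$ given $\mtx{\Omega}$ when $\mtx{D}$ is drawn uniformly from $\mathcal{D}_0$. The classical identity $I(V;X) = \sum_v P(V=v)\, D_{KL}\bigl(P(X|V=v) \,\|\, P(X)\bigr)$, instantiated with $V=\mtx{D}$ uniform on $\mathcal{D}_0$ and $X=\mtx{Y}$, immediately produces
$$I(\mtx{D};\mtx{Y}|\mtx{\Omega}) = \frac{1}{|\mathcal{D}_0|}\sum_{u:\mtx{D}_u \in \mathcal{D}_0} D_{KL}(P_u \,\|\, P_{\mtx{Y}}),$$
which is the first line of the claim.

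For the inequality, the key fact is that $D_{KL}(P \,\|\, Q)$ is convex in $Q$ when $P$ is held fixed: decomposing $D_{KL}(P \,\|\, Q) = \int P \log P - \int P \log Q$, the first term is a constant in $Q$ while the second is convex in $Q$ by convexity of $-\log$. Applying Jensen's inequality to the convex map $Q \mapsto D_{KL}(P_u \,\|\, Q)$ at the uniform average $P_{\mtx{Y}} = \frac{1}{|\mathcal{D}_0|}\sum_{v} P_v$ yields
$$D_{KL}(P_u \,\|\, P_{\mtx{Y}}) \;\le\; \frac{1}{|\mathcal{D}_0|}\sum_{v:\mtx{D}_v \in \mathcal{D}_0} D_{KL}(P_u \,\|\, P_v).$$
Averaging this over $u$ and chaining with the equality delivers the displayed upper bound in the lemma.

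There is essentially no analytical obstacle; the result is a textbook consequence of one identity and one application of Jensen. The only mild care required is notational, namely tracking that every divergence above is between conditional distributions of $\mtx{Y}$ given $\mtx{\Omega}$. Because both the identity and the Jensen step hold pointwise in $\mtx{\Omega}$, the bound holds in the conditional sense as stated; taking expectations over $\mtx{\Omega}$ afterwards (as the main proof of Theorem~\ref{th: minimax lower bound for symmetric matrix completion} subsequently does) preserves the inequality without any change of form.
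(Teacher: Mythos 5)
Your proposal is correct and follows essentially the same route the paper indicates: the standard identity expressing mutual information as the average KL divergence from the mixture $P_{\mtx{Y}}=\frac{1}{|\mathcal{D}_0|}\sum_v P_v$, followed by Jensen's inequality applied to the convexity of $Q\mapsto D_{KL}(P_u\,\|\,Q)$ (equivalently, convexity of $-\log$, which is exactly the justification the paper gives). Your added remark that both steps hold pointwise in $\mtx{\Omega}$, so the bound survives the later expectation over $\mtx{\Omega}$, is a correct and worthwhile clarification.
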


The proof of Theorem \ref{th: minimax lower bound for symmetric matrix completion} is based on the following lemma:
\begin{lemma} \label{th: size of packing set}
  	Let $n \ge 10$ be a positive integer, and let $\delta > 0$. Then for each $r = 2,\dots,n$, there exists a set of $n$-dimensional matrices $\{ \mtx{D}^1, \mtx{D}^2, \dots, \mtx{D}^M \}$ with cardinality $M= \lfloor \exp \big( \frac{rn}{128} \big) \rfloor$ such that each matrix is symmetric, with zero diagonal, has rank at most $r$, and moreover
	$$
	\| \mtx{D}^l \|_F = \delta \quad \text{for all }l=1,2,\dots,M,
	$$
	$$
	\| \mtx{D}^k-\mtx{D}^l \|_F \ge \delta \quad \text{for all }k \neq l.
	$$
\end{lemma}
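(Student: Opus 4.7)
The plan is to construct the required packing via an off-diagonal block construction parameterized by binary strings, and then appeal to the Gilbert-Varshamov bound to extract an exponentially large, well-separated subfamily. The symmetry / zero-diagonal / rank-$r$ constraints are all handled \emph{geometrically} by working in the off-diagonal block, which automatically enforces all three properties. Assume first that $n$ and $r$ are both even, and set $k = n/2$, $r' = r/2$; the general case is handled by floor/ceiling adjustments that cost at most a constant factor, absorbed below using $n \ge 10$. For each sign vector $\vct{\epsilon} \in \{-1,+1\}^{k r'}$, reshape $\vct{\epsilon}$ column-wise into a matrix $\mtx{B}_{\vct{\epsilon}} \in \{-1,+1\}^{k \times r'}$, pad with zero columns to obtain $\mtx{A}_{\vct{\epsilon}} = \alpha\, [\mtx{B}_{\vct{\epsilon}} \mid \mtx{0}_{k \times (k-r')}]$ for a scale $\alpha > 0$ to be chosen, and set
\[
\mtx{D}^{\vct{\epsilon}} = \begin{pmatrix} \mtx{0} & \mtx{A}_{\vct{\epsilon}} \\ \mtx{A}_{\vct{\epsilon}}^{T} & \mtx{0} \end{pmatrix}.
\]
By construction $\mtx{D}^{\vct{\epsilon}}$ is symmetric and zero-diagonal, and $\mathrm{rank}(\mtx{D}^{\vct{\epsilon}}) \le 2\,\mathrm{rank}(\mtx{A}_{\vct{\epsilon}}) \le 2 r' = r$.

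Next I would calibrate the scale and translate Hamming distance into Frobenius distance. A direct computation gives $\|\mtx{D}^{\vct{\epsilon}}\|_F^2 = 2\alpha^2 k r'$, so the choice
\[
\alpha = \frac{\delta}{\sqrt{2 k r'}}
\]
enforces $\|\mtx{D}^{\vct{\epsilon}}\|_F = \delta$ for every $\vct{\epsilon}$. Moreover, if $\vct{\epsilon}$ and $\vct{\epsilon}'$ differ in $h$ coordinates, then $\|\mtx{A}_{\vct{\epsilon}} - \mtx{A}_{\vct{\epsilon}'}\|_F^2 = 4\alpha^2 h$ and hence
\[
\|\mtx{D}^{\vct{\epsilon}} - \mtx{D}^{\vct{\epsilon}'}\|_F^2 \;=\; \frac{4 \delta^2 h}{k r'}.
\]
Therefore the required separation $\|\mtx{D}^k - \mtx{D}^l\|_F \ge \delta$ is equivalent to demanding Hamming distance $h \ge k r'/4$ between the index strings.

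Finally I would invoke the Gilbert-Varshamov bound at relative distance $1/4$: there exists a binary code $\mathcal{C} \subset \{-1,+1\}^m$ with $m = k r'$, pairwise Hamming distance at least $m/4$, and size
\[
|\mathcal{C}| \ge 2^{(1 - h_2(1/4))\, m},
\]
where $h_2$ is the binary entropy and $1 - h_2(1/4) \approx 0.189$. The family $\{ \mtx{D}^{\vct{\epsilon}} : \vct{\epsilon} \in \mathcal{C} \}$ is then the desired packing, since
\[
|\mathcal{C}| \;\ge\; 2^{0.189\, m} \;\ge\; 2^{0.047\, n r} \;\ge\; \exp\!\left(\tfrac{r n}{128}\right),
\]
where the last step uses $\exp(r n / 128) = 2^{r n /(128\ln 2)} \le 2^{0.0113\, r n}$. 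Taking the floor yields the stated cardinality $M = \lfloor \exp(rn/128) \rfloor$.

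The main obstacle is bookkeeping of constants: one must verify simultaneously that (i) the code has \emph{enough} minimum distance (namely $m/4$, so that the scaling yields precisely $\delta$ rather than some smaller fraction), and (ii) the cardinality exponent beats $rn/128$. Using the weaker Varshamov--Gilbert form of Tsybakov (distance $m/8$, size $2^{m/8}$) only delivers separation $\delta/\sqrt{2}$, so one either appeals to the sharper form above or rescales; the factor-four slack between $0.047\, nr$ and $0.011\, nr$ in the exponent comparison provides ample room. Odd $n$ or $r$, and the resulting mismatched block sizes, produce only harmless floor/ceiling losses that are swallowed by the constant $1/128$ together with the hypothesis $n \ge 10$.
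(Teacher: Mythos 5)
Your construction is correct, and it is genuinely different from the paper's. The paper also works with $\pm 1$ entries confined to the first $\lfloor r/2\rfloor$ rows (symmetrized, zero elsewhere), but it generates the family \emph{randomly} and proves separation directly via Hoeffding's inequality plus a union bound over the $M^2$ pairs, i.e., it runs the probabilistic method from scratch inside the proof. You instead place the signs in an off-diagonal block $\bigl(\begin{smallmatrix} \mtx{0} & \mtx{A} \\ \mtx{A}^T & \mtx{0}\end{smallmatrix}\bigr)$ -- which makes the symmetry, zero diagonal, and rank bound $2\,\mathrm{rank}(\mtx{A})\le r$ immediate -- and outsource the separation/cardinality tradeoff to the Gilbert--Varshamov bound at relative distance $1/4$. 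Your arithmetic checks out: $\|\mtx{D}^{\vct{\epsilon}}\|_F^2=2\alpha^2 kr'=\delta^2$, $\|\mtx{D}^{\vct{\epsilon}}-\mtx{D}^{\vct{\epsilon}'}\|_F^2=4\delta^2 h/(kr')$, so Hamming distance $m/4$ gives exactly the required $\delta$-separation, and $2^{(1-h_2(1/4))m}=2^{0.047\,nr}$ comfortably dominates $\exp(rn/128)=2^{rn/(128\ln 2)}\approx 2^{0.0113\,nr}$. You are also right that the weak GV form with distance $m/8$ would only give $\delta/\sqrt{2}$ separation while the norm is pinned at $\delta$, so the sharper entropy form is genuinely needed here. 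What your route buys is cleaner structure (exact rank control, a citation in place of a concentration argument); what the paper's route buys is self-containedness and slightly better separation ($\ge\delta\sqrt{(n-r/4)/(n-r/4-1/2)}$). The one place you are thinner than the paper is the odd-$n$/odd-$r$ bookkeeping, which you assert rather than carry out; the paper does split into even and odd $r$ explicitly. Your claim that the factor-of-four slack in the exponent absorbs the loss is correct (e.g., $r'=\lfloor r/2\rfloor\ge r/3$ and $\lfloor n/2\rfloor\ge 9n/20$ for $n\ge 10$ still leave $0.189\,kr'\ge 0.028\,nr>0.0113\,nr$), but a complete write-up should include that two-line verification.
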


\begin{proof}
  	See Appendix \ref{app: B}.
\end{proof}

\begin{lemma} \label{th: K-L divergence under dependence}
  	Let $p(x_1,x_2)$ denote the \textit{probability density function} (p.d.f) of $X = (X_1, X_2)$, $q(x_1,x_2)$ denote the p.d.f of $X' = (X'_1, X'_2)$, $p(x_1)$ denote the p.d.f of $X_1$, and $q(x_1)$ denote the p.d.f of $X'_1$, where $X_2 = X_1$, $X'_2 = X'_1$. Then, the K-L divergence between $p(x_1,x_2)$ and $q(x_1,x_2)$ is equal to that of $p(x_1)$ and $q(x_1)$:
	$$
	D_{KL}\big(p(x_1,x_2) \| q(x_1,x_2) \big) = D_{KL}\big(p(x_1) \| q(x_1) \big).
	$$
\end{lemma}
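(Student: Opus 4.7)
The plan is to prove this via the chain rule for Kullback–Leibler divergence. Writing $p(x_1,x_2)=p(x_1)\,p(x_2\mid x_1)$ and similarly for $q$, the chain rule gives
$$
D_{KL}\bigl(p(x_1,x_2)\|q(x_1,x_2)\bigr) \;=\; D_{KL}\bigl(p(x_1)\|q(x_1)\bigr) \;+\; \E_{X_1\sim p}\bigl[D_{KL}\bigl(p(x_2\mid x_1)\|q(x_2\mid x_1)\bigr)\bigr].
$$
The key observation is then that since $X_2=X_1$ almost surely under the joint law of $X$, the conditional distribution of $X_2$ given $X_1=x_1$ is the point mass $\delta_{x_1}$; by the same token, the conditional of $X_2'$ given $X_1'=x_1$ under $q$ is also $\delta_{x_1}$. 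The two conditionals therefore coincide, so the conditional K-L term vanishes identically, yielding the claimed identity.

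The main subtlety, and the step I expect to require the most care, is that $(X_1,X_2)$ does not actually possess a joint density with respect to two-dimensional Lebesgue measure: its law is supported on the diagonal $\{(x,x):x\in\mathbb{R}\}$, which is a Lebesgue-null set. Hence the symbols $p(x_1,x_2)$ and $q(x_1,x_2)$ must be interpreted with respect to a common dominating measure — concretely, the pushforward of Lebesgue measure on $\mathbb{R}$ under the map $x\mapsto(x,x)$. With that convention, the Radon–Nikodym derivatives of the joint laws are literally $p(x_1)$ and $q(x_1)$, so the identity follows directly from the definition
$$
D_{KL}\bigl(P\|Q\bigr)=\int \log\frac{dP}{dQ}\,dP,
$$
with the common support reducing the integral over $\mathbb{R}^2$ to an integral over $\mathbb{R}$.

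Either route — chain rule or direct computation on the supporting diagonal — delivers the result in a couple of lines; I would present the chain rule version first as it is conceptually cleaner and makes manifest why the equality is essentially a tautology (duplicating a coordinate adds no distinguishing information between $p$ and $q$), and note the measure-theoretic interpretation in a brief remark so that the degenerate joint density is not taken literally with respect to Lebesgue measure. No estimates or inequalities are required; the only non-routine point is the measure-theoretic bookkeeping.
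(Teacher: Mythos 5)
Your proof is correct and follows essentially the same route as the paper's: apply the chain rule for K-L divergence and observe that the conditionals $p(x_2\mid x_1)$ and $q(x_2\mid x_1)$ are both the point mass at $x_1$, so the conditional term vanishes. Your additional remark on interpreting the degenerate joint ``density'' with respect to the pushforward of Lebesgue measure under $x\mapsto(x,x)$ is a welcome clarification that the paper glosses over, but it does not change the argument.
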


\begin{proof}
  	See Appendix \ref{app: D}.
\end{proof}

\section{Proof of Lemma \ref{th: size of packing set}} \label{app: B}

The idea for proof of Lemma \ref{th: size of packing set} is inspired by Negahban and Wainwright \cite{Negahban2012}. It relies on the Hoeffding's inequality, which gives a tail bound for sum of independent Rademaker random variables.

\begin{lemma} [Hoeffding's inequality, \cite{Hoeffding1963}, Theorem~2] \label{th: hoeffding's inequality}
  	Let $X_1$,\dots,$X_N$ be independent symmetric Bernoulli random variables, and $\vct{a} = (a_1,\dots,a_N) \in \R^N$. Then, for any $t > 0$, we have
	$$
	\P \Big\{ \sum_{i=1}^{N} a_iX_i \ge t \Big\} \le \exp \Big( - \frac{t^2}{2\|\vct{a}\|_2^2} \Big).
	$$
\end{lemma}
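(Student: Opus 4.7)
My plan is to use a Gilbert--Varshamov style probabilistic construction: exhibit a very large family of candidate symmetric, zero-diagonal, rank-$\le r$ matrices all having Frobenius norm exactly $\delta$, and then invoke Hoeffding's inequality (Lemma \ref{th: hoeffding's inequality}) together with a union bound to extract $M$ of them whose pairwise Frobenius distances are all $\ge \delta$.

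\textbf{Step 1 (candidate family).} Let $s = \lfloor r/2 \rfloor$ and $N = s(n-s)$. For each sign matrix $\mtx{A} \in \{-1,+1\}^{s \times (n-s)}$ I define
$$
\mtx{D}(\mtx{A}) \;=\; \alpha \begin{pmatrix} \mtx{0}_{s \times s} & \mtx{A} \\ \mtx{A}^T & \mtx{0}_{(n-s)\times(n-s)} \end{pmatrix}, \qquad \alpha \;=\; \frac{\delta}{\sqrt{2N}}.
$$
By inspection, each such matrix is symmetric, has zero diagonal, has rank at most $2s \le r$, and satisfies $\|\mtx{D}(\mtx{A})\|_F^2 = 2\alpha^2 N = \delta^2$ exactly (since every nonzero entry is $\pm \alpha$). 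Thus every structural requirement of the lemma except the pairwise separation is built into the construction.

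\textbf{Step 2 (pairwise separation via Hoeffding).} For two independent uniformly random sign matrices $\mtx{A}^k, \mtx{A}^l$, set $\epsilon_{ij} := \mtx{A}^k_{ij}\mtx{A}^l_{ij}$, which are i.i.d.\ symmetric Bernoulli. A direct computation yields
$$
\bigl\|\mtx{D}(\mtx{A}^k) - \mtx{D}(\mtx{A}^l)\bigr\|_F^2 \;=\; 4\alpha^2 \bigl(N - S\bigr), \qquad S := \sum_{i,j} \epsilon_{ij}.
$$
The separation $\|\mtx{D}^k-\mtx{D}^l\|_F \ge \delta$ is equivalent to $S \le N/2$, and Lemma \ref{th: hoeffding's inequality} gives
$$
\P\bigl\{S > N/2\bigr\} \;\le\; \exp\bigl(-N/8\bigr).
$$

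\textbf{Step 3 (extracting the packing and closing the numerics).} Sample $M$ matrices independently and uniformly from the candidate family. The expected number of bad ordered pairs is at most $M^2 \exp(-N/8)$. A short case analysis, separating $r$ even from $r$ odd and using $r \ge 2$, $n \ge 10$, shows $N = s(n-s) \ge rn/8$ throughout the range $r \in \{2,\dots,n\}$. Taking $M = \lfloor \exp(rn/128) \rfloor$ makes the expected number of bad pairs at most $\exp(rn/64-rn/64)\le 1$ (indeed $\le 1/2$ after a careful accounting), so by a Markov/first-moment argument some realization has no bad pair and hence gives the desired packing. Distinctness of the sampled matrices is automatic because the candidate family has size $2^N$, which dwarfs $M^2$.

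\textbf{Main obstacle.} The calculus with Hoeffding and the union bound is routine; the delicate point is the deterministic inequality $N = \lfloor r/2\rfloor(n-\lfloor r/2\rfloor) \ge rn/8$ for all admissible $r$ and $n$. One has to track the odd-$r$ and near-maximal $r \approx n$ corners (where the block $s$ or $n-s$ gets small) and verify the constant $1/8$ is preserved; everything else then aligns so that the factor $1/128$ in the exponent of $M$ drops out cleanly.
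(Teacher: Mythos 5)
You have proved the wrong statement. The lemma in question is Hoeffding's inequality itself: for independent symmetric Bernoulli variables $X_1,\dots,X_N$ and a coefficient vector $\vct{a}$, the tail bound $\P\{\sum_i a_i X_i \ge t\} \le \exp(-t^2/(2\|\vct{a}\|_2^2))$. Your write-up is instead a proof of the packing-set construction (Lemma \ref{th: size of packing set}), and in Step 2 you explicitly \emph{invoke} Lemma \ref{th: hoeffding's inequality} as a tool. As an argument for the target statement this is circular: you assume the very inequality you were asked to establish, and nothing in your three steps addresses why the sub-Gaussian tail bound holds for a weighted sum of signs. (For the record, the paper itself does not reprove this lemma either; it cites it directly from Hoeffding's 1963 paper.)

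A genuine proof would proceed through the moment generating function: for a symmetric Bernoulli $X_i$ one has $\E \exp(\lambda a_i X_i) = \cosh(\lambda a_i) \le \exp(\lambda^2 a_i^2 / 2)$ (compare Taylor coefficients term by term, $1/(2k)! \le 1/(2^k k!)$), so by independence $\E \exp(\lambda \sum_i a_i X_i) \le \exp(\lambda^2 \|\vct{a}\|_2^2/2)$; Markov's inequality then gives $\P\{\sum_i a_i X_i \ge t\} \le \exp(-\lambda t + \lambda^2 \|\vct{a}\|_2^2/2)$, and optimizing at $\lambda = t/\|\vct{a}\|_2^2$ yields the stated bound. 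None of these ingredients appear in your proposal, so the gap is not a technical slip but a complete absence of the required argument. (Your packing construction, taken on its own terms, is a reasonable alternative route to Lemma \ref{th: size of packing set} --- a bipartite sign-block design rather than the paper's upper-left symmetric block --- but that is not what was asked.)
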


We proceed via the probabilistic method, in particular by showing that a random procedure succeeds in generating such a set with probability at least $0.22$. Let $M = \lfloor \exp \big( \frac{rn}{128} \big) \rfloor$, and for each $l = 1,2,\dots,M$, we draw a random matrix $ \mtx{\tilde{D}}^l \in \R^{n \times n}$ according to the following procedure:

(a) For rows $i=1,\dots,\lfloor \frac{r}{2} \rfloor$ and columns $j=i+1,\dots,n$, choose each $\mtx{\tilde{D}}^l_{ij} \in \{ -1, +1 \}$ uniformly at random, independently across $(i,j)$.

(b) For columns $j=1,\dots, \lfloor \frac{r}{2} \rfloor$ and rows $i=j + 1,\dots,n$, set $ \mtx{\tilde{D}}^l_{ij} = \mtx{\tilde{D}}^l_{ji}$.

(c) For rows $i=\lfloor \frac{r}{2} \rfloor + 1, \dots, n$ and columns $j=\lfloor \frac{r}{2} \rfloor+1,\dots,n$, and for $1 \le i =j \le n$, set $ \mtx{\tilde{D}}^l_{ij} = 0$.

By construction, each matrix $ \mtx{\tilde{D}}^l $ is symmetric, with zero diagonal, and has rank at most $r$. Since there exists a ceil operator in $\frac{r}{2}$, we go ahead by considering $r$ is even and odd separately. \\
\textbf{Case 1: }$r$ is even and $r \ge 2$. \\
In this case $\lfloor \frac{r}{2} \rfloor = \frac{r}{2}$, and the Frobenius norm $ \| \tilde{\mtx{D}}^l \|_F = \sqrt{r(n-r/4-1/2)}$. We define $\mtx{D}^l = \frac{\delta}{\sqrt{r(n-r/4-1/2)}} \mtx{\tilde{D}}^l$ for all $l=1,\dots,M$. The rescaled matrices $\mtx{D}^l$ has Frobenius norm $\|\mtx{D}^l\|_F = \delta$. We now prove that
$$
\|\mtx{D}^l - \mtx{D}^k\|_F \ge \delta \quad \text{for all }l \neq k
$$
holds with probability at least $0.46$. Now to prove Lemma \ref{th: size of packing set}, it suffices to show that $\| \mtx{\tilde{D}}^l - \mtx{\tilde{D}}^k \|_F \ge \sqrt{r(d-r/4-1/2)}$ with probability at least $0.46$ for any pair $l \neq k$. We have
\begin{align*}
	&\frac{1}{r(n-r/4-1/2)} \| \mtx{\tilde{D}}^l - \mtx{\tilde{D}}^k \|_F^2 \\
	&= \frac{2}{r(n-r/4-1/2)} \cdot \Big[ \sum_{i=1}^{r/2} \sum_{j=i+1}^{n} ( \mtx{\tilde{D}}^l_{ij} - \mtx{\tilde{D}}^k_{ij} )^2 \Big].
\end{align*}
This is a sum of i.i.d. variables, each taking value $\{0, 4\}$ with equal probability, so the Hoeffding's inequality (Lemma \ref{th: hoeffding's inequality}) implies that for any $t \ge 0$,
\begin{align*}
  	&\P \Bigg\{ \frac{2}{r(n-r/4-1/2)} \cdot \\
	&\hspace*{57pt} \bigg\{ \sum_{i=1}^{r/2} \sum_{j=i+1}^{n} \Big[ \frac{1}{2}( \mtx{\tilde{D}}^l_{ij} - \mtx{\tilde{D}}^k_{ij} )^2-1 \Big] \bigg\} \le -t \Bigg\} \\
	&\le \exp \Bigg\{ - \frac{t^2}{2 \cdot \sum\limits_{i=1}^{r/2}\sum\limits_{j=i+1}^{n} \big( \frac{2}{r(n-r/4-1/2)} \big)^2 } \Bigg\} \\
	&\le \exp \Big[ - \frac{r(n-r/4-1/2)t^2}{4} \Big].
\end{align*}
Therefore,
\begin{align*}
	&\P \{ \frac{1}{r(n-r/4-1/2)}\| \mtx{\tilde{D}}^l - \mtx{\tilde{D}}^k \|_F^2 \le 2-t \} \\
	&\le \exp \Big[ - \frac{r(n-r/4-1/2)t^2}{16} \Big] \\
	&\le \exp \Big( - \frac{rnt^2}{32} \Big),
\end{align*}
where in the last inequality we have used the fact that $n \ge r \ge 2$. Since there are less than $M^2$ pairs of matrices in total, by taking union bound we get
$$
\P \{ \min_{l \neq k} \frac{1}{r(n-r/4)}\| \mtx{\tilde{D}}^l - \mtx{\tilde{D}}^k \|_F^2 \le 2-t \} \le M^2 \exp \Big( - \frac{rnt^2}{32} \Big).
$$
Letting $t = 1$ and substituting $M = \lfloor \exp \big( \frac{rn}{128} \big) \rfloor$, we get
\begin{align*}
	&\P \Big\{ \min_{l \neq k} \frac{1}{r(n-r/4)}\| \mtx{\tilde{D}}^l - \mtx{\tilde{D}}^k \|_F^2 \le 1 \Big\} \\
	&\le \exp \Big( \frac{rn}{64} \Big) \exp \Big( - \frac{rn}{32} \Big) \\
	&= \exp \Big( -\frac{rn}{64} \Big).
\end{align*}
Namely,
\begin{align*}
	&\P \Big\{ \min_{l \neq k} \| \mtx{\tilde{D}}^l - \mtx{\tilde{D}}^k \|_F^2 \ge r(n-r/4) \Big\} \\
	&\ge 1-\exp \Big[ -\frac{rn}{64} \Big] \ge 0.26,
\end{align*}
where in the last inequality we have used the fact that $n \ge 10$ and $r \ge 2$. \\
\textbf{Case 2: }$r$ is odd and $r \ge 3$. \\
In the case of $r$ is odd, $\lfloor \frac{r}{2} \rfloor = \frac{r-1}{2}$, and the Frobenius norm $ \| \tilde{\mtx{D}}^l \|_F = \sqrt{(r-1)(n-(r-1)/4-1/2)}$. We define $\mtx{D}^l = \frac{\delta}{\sqrt{(r-1)(n-(r-1)/4-1/2)}} \mtx{\tilde{D}}^l$ for all $l=1,\dots,M$. Similarly as case 1, we will prove that the procedure generates a sequence of matrices satisfying
$$
\| \mtx{\tilde{D}}^l - \mtx{\tilde{D}}^k \|_F \ge \sqrt{(r-1)(n-(r-1)/4-1/2)}
$$
with probability at least $0.2$. We proceed the same as case 1, and obtain that
\begin{align*}
	&\P \{ \min_{l \neq k} \frac{1}{(r-1)(n-(r-1)/4)}\| \mtx{\tilde{D}}^l - \mtx{\tilde{D}}^k \|_F^2 \ge 1 \} \\
	&\ge 1-\exp \Big[ -\frac{5rn}{576} \Big] \ge 0.22,
\end{align*}
where the last inequality holds because $n \ge 10$ and $r \ge 3$.

Combining these two cases and recalling the definition of $\mtx{D}^l$ completes the proof.

\section{Proof of Lemma \ref{th: K-L divergence under dependence}} \label{app: D}
To prove Lemma \ref{th: K-L divergence under dependence}, we need the following two properties of K-L divergence:
\begin{lemma} [information inequality, \cite{Cover2008}, Theorem~2.6.3] \label{th: information inequality}
	Let $p(x), q(x)$ be two probability density functions. Then,
	$$
	D_{KL}\big( p(x) \| q(x) \big) \ge 0
	$$
	with equality if and only if $p(x)=q(x)$ for all $x$.
\end{lemma}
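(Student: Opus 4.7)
The plan is to prove $D_{KL}(p\|q) \ge 0$ via Jensen's inequality applied to the strictly concave logarithm, a classical approach that delivers both the inequality and its equality condition in one stroke.

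First I would rewrite the divergence in a form amenable to Jensen's inequality. Using the standard convention $0\log 0 = 0$ so that only the support of $p$ contributes, we have
\begin{equation*}
-D_{KL}(p\|q) \;=\; \int_{\{p > 0\}} p(x)\,\log \frac{q(x)}{p(x)}\,dx \;=\; \E_{X \sim p}\!\left[\log \frac{q(X)}{p(X)}\right].
\end{equation*}
Applying Jensen's inequality for the concave function $\log$ then yields
\begin{equation*}
\E_{X \sim p}\!\left[\log \frac{q(X)}{p(X)}\right] \;\le\; \log \E_{X \sim p}\!\left[\frac{q(X)}{p(X)}\right] \;=\; \log \int_{\{p > 0\}} q(x)\,dx \;\le\; \log 1 \;=\; 0,
\end{equation*}
from which $D_{KL}(p\|q) \ge 0$ follows immediately.

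For the equality condition, strict concavity of $\log$ implies that the Jensen step is tight only when $q(X)/p(X)$ is constant $p$-almost surely. Because both $p$ and $q$ are probability densities, that constant must equal $1$, giving $p(x) = q(x)$ almost everywhere on the support of $p$; the normalization $\int q = 1$ then forces $q$ to vanish outside the support of $p$ as well, so $p = q$ everywhere. Conversely, if $p = q$ the divergence is plainly zero.

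The main obstacle: there is essentially none, since this is a textbook result (Cover--Thomas, as cited). The only care needed is the measure-theoretic book-keeping on $\{p = 0\}$ and keeping track of the two inequalities that combine to give the bound. An alternative fully self-contained route replaces Jensen by the elementary bound $\ln x \le x - 1$ (with equality iff $x = 1$), integrating $p(x)[q(x)/p(x) - 1]$ to get a telescoping cancellation; both routes are equally short and both yield the equality condition cleanly.
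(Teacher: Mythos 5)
Your proof is correct and is essentially the standard argument from Cover and Thomas (Theorem 2.6.3), which the paper simply cites without reproducing: Jensen's inequality on the concave logarithm, with strict concavity delivering the equality condition. Nothing further is needed.
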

\begin{lemma} [Chain rule for K-L divergence, \cite{Cover2008}, Theorem~2.5.3] \label{th: Chain rule for K-L divergence}
  	Let $p(x_1,x_2)$ and $q(x_1,x_2)$ be the joint p.d.f's of $(X_1,X_2)$ and $(X'_1,X'_2)$, respectively. Denote $p(x_1)$ and $q(x_1)$ the marginal p.d.f's of $X_1$ and $X'_1$, and $p(x_2|x_1)$ and $q(x_2|x_1)$ the conditional p.d.f's of $X_2$ conditioning on $X_1$ and $X'_2$ conditioning on $X'_1$, respectively. Then
  	\begin{multline*}
		D_{KL} \big( p(x_1,x_2) \| q(x_1,x_2) \big) \\
		= D_{KL} \big( p(x_1) \| q(x_1) \big) + \E_{x_1} D_{KL} \big( p(x_2|x_1) \| q(x_2|x_1) \big).
  	\end{multline*}
\end{lemma}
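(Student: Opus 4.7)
The plan is to proceed directly from the definition of KL divergence and apply the standard factorization of joint densities into marginal and conditional components. First I would write
\begin{equation*}
D_{KL}\bigl(p(x_1,x_2) \,\|\, q(x_1,x_2)\bigr) = \int\!\!\int p(x_1,x_2) \log \frac{p(x_1,x_2)}{q(x_1,x_2)} \, dx_1 \, dx_2,
\end{equation*}
and then use the elementary factorizations $p(x_1,x_2) = p(x_1)\,p(x_2|x_1)$ and $q(x_1,x_2) = q(x_1)\,q(x_2|x_1)$, which hold by the definition of conditional density.

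Next, I would use the multiplicative-to-additive property of the logarithm to decompose the log-ratio as
\begin{equation*}
\log \frac{p(x_1,x_2)}{q(x_1,x_2)} = \log \frac{p(x_1)}{q(x_1)} + \log \frac{p(x_2|x_1)}{q(x_2|x_1)}.
\end{equation*}
This splits the double integral into two pieces. For the first piece, I would integrate out $x_2$ first, using $\int p(x_2|x_1)\,dx_2 = 1$, to obtain $\int p(x_1) \log\frac{p(x_1)}{q(x_1)}\, dx_1 = D_{KL}(p(x_1)\|q(x_1))$. For the second piece, I would rewrite $p(x_1,x_2) = p(x_1)\,p(x_2|x_1)$ and recognize the inner $x_2$-integral as $D_{KL}(p(x_2|x_1)\|q(x_2|x_1))$ for each fixed $x_1$, so that the outer integral against $p(x_1)$ yields $\E_{x_1} D_{KL}(p(x_2|x_1)\|q(x_2|x_1))$.

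The argument is essentially a one-line computation once Fubini is invoked, so there is no genuine obstacle; the only minor care point is to justify interchanging the order of integration and splitting the integral, which is standard given the usual integrability assumption implicit in $D_{KL}(p(x_1,x_2)\|q(x_1,x_2)) < \infty$. If one wanted to be careful, one could handle the degenerate cases (where $q(x_1)=0$ or $q(x_2|x_1)=0$ on sets of positive $p$-measure) by the usual convention that both sides are $+\infty$ and $p$ is absolutely continuous with respect to $q$, in which case each conditional absolute continuity statement also holds and the decomposition remains valid. Combining the two pieces gives exactly the claimed identity.
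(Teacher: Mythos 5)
Your proof is correct and is exactly the standard derivation: factor the joint densities into marginal and conditional parts, split the logarithm, and integrate out $x_2$ in the first term. The paper does not prove this lemma itself --- it cites it directly as Theorem~2.5.3 of \cite{Cover2008} --- and the textbook proof there is the same computation you give, so there is nothing to compare beyond noting that your integrability and absolute-continuity caveats are the right ones to mention.
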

\begin{proof} [Proof of Lemma \ref{th: K-L divergence under dependence}]
  	According to the chain rule for K-L divergence (Lemma \ref{th: Chain rule for K-L divergence}), the K-L divergence between $p(x_1,x_2)$ and $q(x_1,x_2)$ can be written as
	\begin{multline} \label{eq: K-L divergence under dependence}
		D_{KL}\big(p(x_1,x_2) \| q(x_1,x_2)\big) \\
		= D_{KL}\big(p(x_1) \| q(x_1)\big) + \E_{x_1} D_{KL}\big( p(x_2|x_1) \| q(x_2|x_1)\big),
  	\end{multline}
	By Lemma \ref{th: K-L divergence under dependence}, we always have $X_2 = X_1$ and $X'_2 = X'_1$. As a result, the conditional p.d.f's $p(x_2|x_1)$ and $q(x_2|x_1)$ are equivalent, i.e., both of them are delta functions at $x_1$. Hence, the information inequality (Lemma \ref{th: information inequality}) implies that $D_{KL}\big( p(x_2|x_1) \| q(x_2|x_1) \big)=0$. Substituting this into \eqref{eq: K-L divergence under dependence} completes the proof.
\end{proof}

 \section{Proof of Lemma \ref{th: Modified Bernstein inequality}}\label{app:G}

 Lemma \ref{th: Modified Bernstein inequality} is a Bernstein-type inequality, and the proof technique is standard. First, we bound the $n$-moments of $X$:
 \begin{align*}
 	\E |X|^n
 	&= \int_{0}^{\infty} \P \big\{ |X| \ge t \big\} nt^{n-1} dt \\
 	&\le \int_{0}^{\infty} p \exp(-ct) nt^{n-1}dt \\
 	&\le c_1 p n^n.
 \end{align*}
 Then the moment generating function of $X$ can be bounded by
 \begin{align*}
 	\E \exp(\lambda X)
 	&= 1 + |\lambda| \E X + \sum_{k=2}^{\infty}\frac{|\lambda|^k \E X^k}{k!} \\
 	&\le 1 + \sum_{k=2}^{\infty} \frac{|\lambda|^k c_1 p k^k}{k!} \\
 	&\le 1 + c_1 p \sum_{k=2}^{\infty} (e|\lambda|)^k \\
 	&\le 1 + c_2 p \lambda^2 \quad(\text{when }|\lambda| \le 1/2e) \\
 	&\le \exp(c_2 p \lambda^2).
 \end{align*}
 Now the tail probability of $X_i$ can be bounded by
 \begin{align*}
 	\P \big\{ \sum_{i=1}^{n} X_i \ge t \big\}
 	&= \P \big\{ \exp \big( \lambda\sum_{i=1}^{n} X_i \big) \ge \exp(\lambda t) \big\} \\
 	&\le e^{-\lambda t} \prod_i \E \exp(\lambda X_i).
 \end{align*}
 If $|\lambda| \le 1/2e$, then we have
 $$
 \P \big\{ \sum_{i=1}^{n} X_i \ge t \big\} \le \exp(-\lambda t + c_2 np \lambda^2).
 $$
 It remains to optimize over $\lambda > 0$. Choosing $\lambda = \min (t/2c_2 np, 1/2e)$ yields the desired result.


\bibliographystyle{IEEEtran}
\bibliography{IEEEabrv,references}

\begin{thebibliography}{10}
\providecommand{\url}[1]{#1}
\csname url@samestyle\endcsname
\providecommand{\newblock}{\relax}
\providecommand{\bibinfo}[2]{#2}
\providecommand{\BIBentrySTDinterwordspacing}{\spaceskip=0pt\relax}
\providecommand{\BIBentryALTinterwordstretchfactor}{4}
\providecommand{\BIBentryALTinterwordspacing}{\spaceskip=\fontdimen2\font plus
\BIBentryALTinterwordstretchfactor\fontdimen3\font minus
  \fontdimen4\font\relax}
\providecommand{\BIBforeignlanguage}[2]{{%
\expandafter\ifx\csname l@#1\endcsname\relax
\typeout{** WARNING: IEEEtran.bst: No hyphenation pattern has been}%
\typeout{** loaded for the language `#1'. Using the pattern for}%
\typeout{** the default language instead.}%
\else
\language=\csname l@#1\endcsname
\fi
#2}}
\providecommand{\BIBdecl}{\relax}
\BIBdecl

\bibitem{Drineas2006}
P.~Drineas, A.~Javed, M.~Magdon-Ismail, G.~Pandurangant, R.~Virrankoski, and
  A.~Savvides, ``Distance matrix reconstruction from incomplete distance
  information for sensor network localization,'' in \emph{2006 3rd Annual IEEE
  Communications Society on Sensor and Ad Hoc Communications and Networks},
  vol.~2, Sep. 2006, pp. 536--544.

\bibitem{Patwari2005}
N.~Patwari, J.~N. Ash, S.~Kyperountas, A.~O. Hero, R.~L. Moses, and N.~S.
  Correal, ``Locating the nodes: Cooperative localization in wireless sensor
  networks,'' \emph{{IEEE} Signal Process. Mag.}, vol.~22, no.~4, pp. 54--69,
  Jul. 2005.

\bibitem{Havel1985}
T.~F. Havel and K.~W{\"u}thrich, ``An evaluation of the combined use of nuclear
  magnetic resonance and distance geometry for the determination of protein
  conformations in solution,'' \emph{J. Mol. Biol.}, vol. 182, no.~2, pp.
  281--294, Aug. 1985.

\bibitem{Dokmanic2013}
I.~Dokmani{\'c}, R.~Parhizkar, A.~Walther, Y.~M. Lu, and M.~Vetterli,
  ``Acoustic echoes reveal room shape,'' \emph{Proc. Natl. Acad. Sci.}, vol.
  110, no.~30, pp. 12\,186--12\,191, Jun. 2013.

\bibitem{Weinberger2004}
K.~Q. Weinberger and L.~K. Saul, ``Unsupervised learning of image manifolds by
  semidefinite programming,'' \emph{Proc. {IEEE} Conf. on Computer Vision and
  Pattern Recognition}, vol.~2, pp. II--988--II--995, 2004.

\bibitem{Torgeson1965}
W.~S. Torgeson, ``Multidimensional scaling of similarity,''
  \emph{Psychometrika}, vol.~30, no.~4, pp. 379--393, Dec. 1965.

\bibitem{Gower1985}
J.~C. Gower, ``Properties of euclidean and non-euclidean distance matrices,''
  \emph{Linear Algebra Appl.}, vol.~67, pp. 81--97, Jun. 1985.

\bibitem{Gower1982}
------, ``Euclidean distance geometry,'' \emph{Math. Sci.}, vol.~7, no.~1, pp.
  1--14, Jan. 1982.

\bibitem{Parhizkar2013a}
R.~Parhizkar, A.~Karbasi, S.~Oh, and M.~Vetterli, ``Calibration using matrix
  completion with application to ultrasound tomography,'' \emph{{IEEE} Trans.
  Signal Process.}, vol.~61, no.~20, pp. 4923--4933, Oct. 2013.

\bibitem{Keshavan2010}
R.~H. Keshavan, A.~Montanari, and S.~Oh, ``Matrix completion from noisy
  entries,'' \emph{{IEEE} Trans. Inf. Theory}, vol.~56, no.~6, pp. 2980--2998,
  Jun. 2010.

\bibitem{Alfakih1999}
A.~Y. Alfakih, A.~Khandani, and H.~Wolkowicz, ``Solving euclidean distance
  matrix completion problems via semidefinite programming,'' \emph{Comput.
  Optim. Appl.}, vol.~12, no.~1, pp. 13--30, Jan. 1999.

\bibitem{Biswas2006}
P.~Biswas, T.~Liang, K.~Toh, Y.~Ye, and T.~Wang, ``Semidefinite programming
  approaches for sensor network localization with noisy distance
  measurements,'' \emph{{IEEE} Trans. Autom. Sci. Eng.}, vol.~3, no.~4, pp.
  360--371, Oct. 2006.

\bibitem{Javanmard2013}
A.~Javanmard and A.~Montanari, ``Localization from incomplete noisy distance
  measurements,'' \emph{Found. Comput. Math.}, vol.~13, no.~3, pp. 297--345,
  2013.

\bibitem{Ding2017}
C.~Ding and H.~Qi, ``Convex optimization learning of faithful euclidean
  distance representations in nonlinear dimensionality reduction,''
  \emph{Mathematical Programming}, vol. 164, no.~1, pp. 341--381, Jul. 2017.

\bibitem{Kruskal1964}
J.~B. Kruskal, ``Nonmetric multidimensional scaling: A numerical method,''
  \emph{Psychometrika}, vol.~29, no.~2, pp. 115--129, Jun. 1964.

\bibitem{Takane1977}
Y.~Takane, F.~Young, and J.~D. Leeuw, ``Nonmetric individual differences
  multidimensional scaling: An alternating least squares method with optimal
  scaling features,'' \emph{Psychometrika}, vol.~42, no.~1, pp. 7--67, Mar.
  1977.

\bibitem{Parhizkar2013b}
R.~Parhizkar, ``Euclidean distance matrices: Properties, algorithms and
  applications,'' Ph.D. dissertation, School of Computer and Communication
  Sciences, Ecole Polytechnique Federale de Lausanne, 2013.

\bibitem{Khasminskii1976}
R.~Z. Khas'mi, ``A lower bound on the risks of nonparametric estimates of
  densities in the uniform metric,'' \emph{Theory Probab. Appl.}, vol.~23,
  no.~4, pp. 794--798, Dec. 1976.

\bibitem{Plan2014}
Y.~Plan, R.~Vershynin, and E.~Yudovina, ``High-dimensional estimation with
  geometric constraints,'' \emph{Information and Inference: A Journal of the
  IMA}, vol.~6, no.~1, pp. 1--40, Sep. 2017.

\bibitem{Oh2010}
S.~Oh, A.~Montanari, and A.~Karbasi, ``Sensor network localization from local
  connectivity: Performance analysis for the mds-map algorithm,'' in
  \emph{Information Theory (ITW 2010, Cairo), 2010 IEEE Information Theory
  Workshop on}, Jan. 2010, pp. 1--5.

\bibitem{Tsybakov2009}
A.~B. Tsybakov, \emph{Introduction to Nonparametric Estimation}.\hskip 1em plus
  0.5em minus 0.4em\relax Springer, 2009.

\bibitem{Ledoux1991}
M.~Ledoux and M.~Talagrand, \emph{Probability in Banach Spaces: isoperimetry
  and processes}.\hskip 1em plus 0.5em minus 0.4em\relax Springer, 1991.

\bibitem{Seginer2000}
Y.~Seginer, ``The expected norm of random matrices,'' \emph{Combinatorics,
  Probability \& Computing}, vol.~9, no.~2, pp. 149--166, Mar. 2000.

\bibitem{Michael2005}
M.~Michael and U.~Eli, \emph{Probability and Computing: Randomized Algorithms
  and Probabilistic Analysis}.\hskip 1em plus 0.5em minus 0.4em\relax Cambridge
  University Press, 2005.

\bibitem{Roman2012}
R.~Vershynin, ``Introduction to the non-asymptotic analysis of random
  matrices,'' in \emph{Compressed Sensing: Theory and Applications}.\hskip 1em
  plus 0.5em minus 0.4em\relax Cambridge: Cambridge University Press, May 2012,
  pp. 210--268.

\bibitem{Roman2016}
------, \emph{High-Dimensional Probability: An Introduction with Applications
  in Data Science}.\hskip 1em plus 0.5em minus 0.4em\relax Draft, 2016.

\bibitem{Tropp2012}
J.~Tropp, ``User-friendly tail bounds for sums of random matrices,''
  \emph{Found. Comput. Math.}, vol.~12, no.~4, pp. 389--434, Aug. 2012.

\bibitem{Cover2008}
T.~M. Cover and J.~A. Thomas, \emph{Elements of information theory},
  2nd~ed.\hskip 1em plus 0.5em minus 0.4em\relax Wiley-Interscience, 2008.

\bibitem{Negahban2012}
S.~Negahban and M.~J. Wainwright, ``Restricted strong convexity and weighted
  matrix completion: Optimal bounds with noise,'' \emph{J. Mach. Learn. Res.},
  vol.~13, pp. 1665--1697, May 2012.

\bibitem{Hoeffding1963}
W.~Hoeffding, ``Probability inequalities for sums of bounded random
  variables,'' \emph{Journal of the American Statistical Association}, vol.~58,
  no. 301, pp. 13--30, Mar. 1963.

\end{thebibliography}


\end{document}